\journal{Journal of \LaTeX\ Templates}
\renewcommand{\paragraph}[1]{\vspace{1mm}\noindent{\bf #1}}
\newcommand{\eat}[1]{}
\newtheorem{definition}{Definition}[section]
\newtheorem{proposition}[definition]{Proposition}
\newtheorem{lemma}[definition]{Lemma}
\newtheorem{corollary}[definition]{Corollary}
\newtheorem{theorem}[definition]{Theorem}
\newtheorem{example}[definition]{Example}
\newcommand{\symif}{:-}
\newcommand{\imply}{\Rightarrow}
\newcommand{\squishlist}{
	\begin{list}{$\bullet$}
		{
			\setlength{\itemsep}{0pt}
			\setlength{\parsep}{3pt}
			\setlength{\topsep}{3pt}
			\setlength{\partopsep}{0pt}
			\setlength{\leftmargin}{1.5em}
			\setlength{\labelwidth}{1em}
			\setlength{\labelsep}{0.5em} } }
\newcommand{\CM}{{\cal M}}
\newcommand{\C}{{\cal C}}
\newcommand{\V}{{\cal V}}
\newcommand{\R}{{\cal R}}
\newcommand{\I}{{\cal I}}
\newcommand{\CP}{{\cal P}}
\newcommand{\CS}{{\cal S}}
\newcommand{\CL}{{\cal L}}
\newcommand{\CQ}{{\cal Q}}
\newcommand{\CA}{{\cal A}}
\newcommand{\CB}{{\cal B}}
\newcommand{\CF}{{\cal F}}
\begin{document}

\begin{frontmatter}

\title{On the Complexity of Query Containment and Computing Certain Answers in the Presence of ACs}

\author{Foto N. Afrati}
\ead{afrati@gmail.com}
\author{Matthew Damigos}
\ead{mgdamig@gmail.com}

\address{National Technical University of Athens}

\begin{abstract}

We often add arithmetic to extend the expressiveness of query languages and study the complexity of problems such as
testing query containment and finding certain answers in the framework of answering queries using views.
When adding arithmetic comparisons, the complexity of such problems is higher than the complexity of their counterparts without them.
It has been  observed that we can achieve lower complexity if we restrict some of the comparisons in the containing query to be closed or open semi-interval comparisons. Here, focusing  a) on the problem of
containment for conjunctive queries  with arithmetic comparisons (CQAC queries, for short), we prove
upper bounds on its computational complexity and b)  on the problem of computing certain answers, we find large classes of CQAC queries and views where this problem is polynomial.

\end{abstract}

\begin{keyword}
query containment, query rewriting, conjunctive queries with arithmetic comparisons
\end{keyword}

\end{frontmatter}


\section{Introduction}
\label{intro-sect}
%
For conjunctive queries, the query containment problem is NP-complete~\cite{Chandra77-1}.
When we have constants that are numbers (e.g., they may represent prices, dates, weights, lengths, heights) then, often, we want to compare them by checking, e.g., whether two numbers are equal or whether one is greater than the other, etc.
To  reason about numbers we want to have a more expressive language than conjunctive queries and, thus, we add arithmetic comparisons to the definition of the query.
We know that the query containment problem for conjunctive queries with arithmetic comparisons is $\Pi^p_2$-complete~\cite{Klug88-1,vdmeyden92,Kolaitis98-1}.
In previous literature \cite{WangTM05,Afrati04-1,AfratiLM06,Afrati06}, it has been noticed that there are classes of CQACs for which the query containment problem remains in NP and these classes can be syntactically characterized. In this paper we find much broader such classes of queries.

Query containment and many problems in answering queries using views are closely related.
In particular, in the framework of answering queries using views, we want to find all certain answers of the query on a given view instance, i.e., all the answers that are provable ``correct.''
A popular way for answering queries using views is by finding rewritings of the query in terms of the views that are contained in the query. There may exist many contained rewriting in a certain query language. We want to find the maximal contained rewriting (MCR for short) that contains all the rewritings, if there exists such a rewriting. 

The main results in this paper are the following:

{\bf Query Containment} We solve an open problem mentioned in  \cite{Afrati19} by extending significantly the class of CQAC queries that admit an NP containment test.
As concerns closed arithmetic comparisons, we think we are close to the boundary between the problem being in NP and being in $\Pi^p_2$.
The class of queries we consider includes (but is broader than)  the following case: The contained query is allowed to have any closed arithmetic comparisons.  The containing query is allowed to have any closed arithmetic comparisons that involve the head variables, but not between a head variable and a body variable. Moreover, the comparisons that are allowed in the body variables are the following: Several left semi-interval arithmetic comparisons and at most one right semi-interval arithmetic comparison.

This result is proven via a transformation of the queries to a Datalog query (for the containing query) and a conjunctive query (for the contained query) and reducing checking containment between these two. This result captures all results in \cite{Afrati19} but in a new way that allows us to further use the transformation to compute MCRs and certain answers in the framework of the problem of answering queries using views.

{\bf MCRs} We extend the results in \cite{Afrati04-1} and prove that we can find an MCR in the language of Datalog with arithmetic comparisons in the case where the query has the restrictions of the containing query above and the views use any closed ACs, except ACs between the head and non-head variables. In \cite{Afrati04-1}, only semi-interval arithmetic comparisons were allowed in the query and in the views.

{\bf Computing certain answers} We show for the first time how to compute certain answers in polynomial time using MCRs for the case the conjunctive queries have arithmetic comparisons. 

\vspace*{.2cm}
In proving the above main results, we needed to prove intermediate results, which could be extended beyond what was necessary for proving the main results. Those intermediate results are the 
following:

a) We defined the class of semi-monadic Datalog queries which include the class of monadic queries. We proved that checking containment of a conjunctive query to a 
semi-monadic Datalog query is NP-complete. 

b) We proved that if there is an MCR in the language of (possibly infinite) conjunctive queries with arithmetic comparisons, then, even in the presence of dependencies, we can use this MCR to compute all certain answers for any CQAC query and any CQAC views. 

The structure of the paper is as follows:

Section 5 proves the result about query containment.  The proofs in this section need technical results about the implication problem of arithmetic comparisons that are presented in the early subsections of Section 5.  They also need a   result about query containment of a conjunctive query to a semi-monadic Datalog query. This  is presented in Appendix C. Moreover, in Appendix B, we present the proof of the main technical result of Section 5.


Sections~\ref{sec-6} and~\ref{subsec-mcr-datalogAC} consider the framework of answering queries using views. 
%
Section~\ref{subsec-mcr-datalogAC} uses the findings of Section 5 to compute an MCR using the language Datalog$^{AC}$. 
Section~\ref{sec-6} shows the relation between the output after computing an MCR on a view instance and the output after computing the certain answers on the same instance. 
Appendix E starts a discussion on a new direction concerning unclean data and MCRs.
We include Appendices A and D, for reasons of completeness.

\subsection{Related work } 
\textit{CQ and CQAC containment:} The problem of containment between conjunctive queries (CQs, for short) has been studied in \cite{Chandra77-1}, where the authors show that the problem is NP-complete, and the containment can be tested by finding a containment mapping. As we already mentioned, considering CQs with arithmetic comparisons (CQACs), the problem of query containment is $\Pi^p_2$-complete \cite{vdmeyden92}. 
Zhang and Ozsoyoglu, in \cite{Zhang93-1}, showed that the testing containment of two CQACs can be done by checking the containment entailment.
Kolaitis et al. \cite{Kolaitis98-1} studied the computational complexity of the query-containment problem of queries with disequations ($\neq$). In particular, the authors showed that the problem remains $\Pi^p_2$-hard even in the cases where the acyclicity property holds and each predicate occurs at most three times. However, they proved that if each predicate occurs at most twice then the problem is in coNP. 
Karvounarakis and Tannen, in \cite{karvounarakis2008conjunctive}, also studied CQs with disequations ($\neq$) and identified special cases where query containment can be tested through checking for a containment mapping (i.e., the containment problem for these cases is NP-complete).

The homomorphism property for query containment of conjunctive queries with arithmetic comparisons was  studied in \cite{Klug88-1,Zhang94-4,Afrati04-1,Afrati19}. 
In \cite{Afrati04-1}, Afrati et al. investigated cases where the normalization step is not needed. They also identified classes of CQACs where the homomorphism property holds.
In \cite{Afrati19}, Afrati showed that the problem of containment of  two CQACs such that the homomorphism property holds is in NP. This work also identifies certain classes of CQACs where the homomorphism property holds. 
The containment of certain subclasses of CQACs that the homomorpshim property holds are also identified in \cite{Zhang94-4}.

\vspace{10px}
\textit{Datalog containment:} Since one of our results concerns monadic Datalog queries and the containment problem we briefly list some related results.
Although to test the containment of two Datalog queries is undecidable \cite{Shmueli93-1}, the containment of  a CQ in a Datalog query is decidable.
In the general case of non-monadic Datalog query, the problem of containment of a CQ in a recursive Datalog query is EXPTIME-complete \cite{Cosmadakis86-1, Chandra81-1, Sagiv88-1}. As \cite{Cosmadakis88-1} shows, the containment between monadic Datalog queries is decidable. In \cite{Chaudhuri92-1}, the containment problem of a Datalog query in a conjunctive query is proven to be doubly exponential.
The containment of a CQ in a linear monadic Datalog (i.e., each rule has at most one IDB) 
is NP-complete~\cite{Chaudhuri94-1}. In this work, we extend this result for any monadic Datalog query (exdending it also to a wider class, called semi-monadic Datalog queries). Recent work on containment problem for monadic Datalog  includes \cite{BenediktBS12}. 

\vspace{10px}
\textit{Rewritings and finding MCRs:} The problem of answering queries using views has been extensively investigated in the relevant literature (e.g., \cite{Levy95-1, Levy00-1, 2017Afrati}); including finding equivalent and contained rewriting. Algorithms for finding maximally contained rewritings (MCRs) have also been studied in the past \cite{Abiteboul98-1, Grahne99-1, Levy96-1, Pottinger00-1, Mitra99-1, Duschka97-3, Afrati99-1, AfratiLM06}.  The authors in \cite{Pottinger00-1} and \cite{Mitra99-1} propose two algorithms, the Minicon and shared variable algorithm, respectively, for finding MCRs in the language of unions of CQs when both queries and views are CQs.  \cite{Pottinger00-1} also considers restricted cases of arithmetic comparisons (LSI and RSIs) in both queries and views. In \cite{Abiteboul98-1}, the queries have inequalities ($\neq$), while the views are CQs. As it is also proven in this work, the data complexity of finding certain answers is co-NP hard. The works in \cite{Duschka97-3} and \cite{Afrati99-1} studied the problem where the query is given by a Datalog query, while the views are given by CQs and union of CQs, respectively. In both papers, the language of MCRs is  Datalog. The authors in \cite{CaoFGL18} studied the problem of finding MCRs in the framework of bounded query rewriting. They investigated several query classes, such as CQs, union of CQs, and first order queries, and analyzed the complexity in each class. Afrati and Kiourtis in \cite{AfratiK10} proposed an efficient algorithm that finds MCRs in the language of union of CQs in the presence of dependencies. The work in \cite{AfratiLM06} investigated the problem of finding MCRs for special cases of CQACs.

\vspace{10px}
\textit{Certain answers and MCRs:} The problem of finding certain answers has been extensively investigated in the context of data integration and data exchange, the last 20 years (e.g., \cite{AndritsosFFHHHKMNPVVY02, Abiteboul98-1, Grahne99-1, fagin2005data, AfratiK10, KonstantinidisA13}). In \cite{Grahne99-1, fagin2005data}, the authors investigated the problem of finding certain answers in the context of data exchange, considering CQs. The work in \cite{fagin2005data} was extended for arithmetic and linear arithmetic CQs  in \cite{CateKO13}, where the authors proved that the problem of finding certain answers into target schema is co$\exists$R-complete in data complexity. 
In \cite{Abiteboul98-1}, the authors investigated the relationship between MCRs and certain answers.
In \cite{AfratiK10}, the authors proved that an MCR of a union CQs computes all the certain answers, where MCR is considered in the language of union of CQs. 
In many of the works about finding certain answers the chase algorithm is used as a tool. 
Recent studies of the  chase in the framework of query containment and data integration include \cite{BenediktKMMPST17} and \cite{KonstantinidisA14}.

\vspace{10px}
\textit{Other work with arithmetic comparisons in queries:} As concerns studying other related problems in the presence of arithmetic comparisons recent work can be found
in \cite{FanLLT18}, where the authors propose to extend
graph functional dependencies with linear arithmetic expressions
and arithmetic comparisons. They study the problems of testing satisfiability and related problems over integers (i.e., for non-dense orders).
A thorough study of the complexity of the problem of evaluating conjunctive queries  with inequalities ($\neq$) is done in \cite{KoutrisMRS15}. In  \cite{PapadimitriouY97} the complexity  of evaluating conjunctive queries  with
arithmetic comparisons is investigated  for acyclic queries, while query containment for acyclic conjunctive queries was investigated in \cite{ChekuriRaj97}.
Recent works \cite{CateKO13,AfratiLP08}  have added arithmetic to extend the expressiveness of  tuple generating dependencies and data exchange mappings, and studied the complexity of related problems. Queries with arithmetic comparisons on incomplete databases are considered in \cite{ConsoleHL20}.

\section{Preliminaries}


A \textit{relation schema} is a named relation defined by its name (called \textit{relation name} or \textit{relational symbol}) and a vector of attributes. An \textit{instance} of a relation schema is a collection of tuples with values over its attribute set. These tuples are called {\em facts}. The schemas of the relations in a database constitute its \textit{database schema}. A relational \textit{database instance} (database, for short) is a collection of stored relation instances.

A {\em conjunctive query (CQ in short)} $Q$ over a database schema $\CS$ is a query of
the form: $h( \overline{X})\ :-\  e_1( \overline{X}_1),\ldots, e_k( \overline{X}_k)$,
where $h( \overline{X})$ and $ e_i( \overline{X}_i)$ are atoms, i.e., 
they contain a relational symbol (also called \textit{predicate} - here, $h$ and $e_i$ are predicates) and a
vector of variables and constants.
The atoms that contain only constants are called \textit{ground} atoms and they represent   \textit{facts}.

The {\em head} $h(\overline{X})$, denoted $head(Q)$,
represents the results of the query, and $e_1 \ldots e_k$ represent
database relations (also called base relations) in $\CS$.
The variables in $\overline{X}$
are called \textit{head} or \textit{distinguished} variables,
while the variables in $\overline{X}_i$ are called \textit{body} or \textit{nondistinguished}
variables of the query. 
The part of the conjunctive query on the right of symbol $:-$ is called the {\em body} of the query and is denoted $body(Q)$.
Each atom in the body of a conjunctive query
is said to be a {\em subgoal}. 
A conjunctive query is said to be
\textit{safe} if all its distinguished variables also occur in its
body. We only consider safe queries here.

The \textit{result} (or \textit{answer}), denoted $Q(D)$, of a CQ $Q$ when it is applied on a database instance $D$ is the set of atoms such that for each assignment $h$ of variables of $Q$ that makes all the atoms in the body of $Q$ true the atom $h(head(Q))$ is in $Q(D)$. 

{\em Conjunctive queries with arithmetic comparisons (CQAC for short)} are conjunctive queries that, besides the
{\em ordinary} relational subgoals, use also builtin subgoals that are arithmetic comparisons (AC for short), i.e., of the form
$X\theta Y$ where $\theta$ is one of the following: $<, >, \leq, \geq, =, \neq$. Also, $X$ is a variable and
$Y$ is either a variable or constant. If $\theta$ is either $<$ or $>$ we say that it is an open arithmetic comparison
and if $\theta$ is either $\leq$ or $\geq$ we say that it is a closed AC. If the AC is either of the form $X<c$ or $X\leq c$ (respectively, either $X>c$ or $X\geq c$, resp.), where $X$ is a variable and $c$ is a constant,  then it is called \textit{left semi-interval}, LSI for short (\textit{right semi-interval}, RSI for short, respectively).
In the following, we use the notation $Q=Q_0+\beta$ to describe a CQAC query $Q$, where
$Q_0$ are the relational subgoals of $Q$ and $\beta$ are the arithmetic comparison
subgoals of $Q$. We define the {\em closure} of a set of ACs to be all the ACs that are implied by this set of ACs.
The result $Q(D)$ of a CQAC $Q$, when it is applied on a database $D$, is given by taking all the assignments of variables (in the same fashion as  CQs) such that the  atoms in the body  are included in $D$ and the ACs are true. For each assignment where these conditions are true, we produce a fact in the output  $Q(D)$.

All through this paper, we assume the following setting without mentioning it again.
\begin{enumerate}
	\item Values for the arguments in the arithmetic comparisons are
	chosen from an infinite, totally densely ordered set, such as the
	rationals or reals.
	\item The arithmetic comparisons are not contradictory (or, otherwisie, we say that they are consistent); that is,
	there exists an instantiation of the variables such that all the
	arithmetic comparisons are true.
	\item All the comparisons are safe, i.e., each variable in the
	comparisons also appears in some ordinary subgoal.
\end{enumerate}

A \textit{union of CQs} (resp. CQACs) is defined by a set $\CQ$ of CQs (resp. CQACs) whose heads have the same arity, and its answer $\CQ(D)$ is given by the union of the answers of the queries in $\CQ$ over the same database instance $D$; i.e., $\CQ(D)=\bigcup_{Q_i\in\CQ}Q_i(D)$.

A query $Q_1$ {\em is contained} in a query $Q_2$, denoted
$Q_1 \sqsubseteq Q_2$, if for any database $D$ of the base
relations, the answer  computed by $Q_1$ is a subset of the answer
computed by $Q_2$, i.e., $Q_1(D) \subseteq Q_2(D)$. The two
queries are {\em equivalent}, denoted $Q_1 \equiv Q_2$, if $Q_1
\sqsubseteq Q_2$ and $Q_2 \sqsubseteq Q_1$.

A {\em homomorphism} $h$ from a set of relational atoms $\CA$ to another set of relational atoms $\CB$ is a mapping
of variables and constants from one set to variables or constants of the other set that maps
each variable to a single variable or constant and each constant to the same constant.
Each atom of the former set should map to  an atom of the latter set with the same relational symbol. We also say that the homomorphism $h'$ from a set $\CA'\supseteq\CA$ is an  \textit{extension} of $h$ if for each variable or constant $x$ in $\CA'\cap\CA$ we have $h'(x)=h(x)$.

A {\em containment mapping} from a conjunctive query $Q_1$ to a conjunctive query $Q_2$  is a homomorphism from the atoms in the body of $Q_1$ to the atoms in the body of $Q_2$ that maps
the head of $Q_1$ to the head of $Q_2$.  All the mappings we refer to in this paper are containment mappings unless we say otherwise. Chandra and Merlin \cite{Chandra77-1} show that a conjunctive query $Q_1$
is contained in another conjunctive query $Q_2$ if and only if
there is a containment mapping from $Q_2$ to $Q_1$. The query containment problem for CQs is NP-complete.


\subsection{Testing query containment for CQACs}
\label{subsec:canonical-dbs}
In this section, we describe two   tests for CQAC query containment; using containment mappings and using canonical databases. 

In the rest of the paper, we denote with $Q_1=Q_{10}+\beta_1$ and
	$Q_2=Q_{20}+\beta_2$ the containing and contained query, respectively, where $Q_{10}$ denotes the relational atoms in the body of $Q_1$ and $\beta_1$ denotes the ACs. Similarly for query $Q_2$.

First, we present the test using containment mappings (see, e.g., in \cite{2017Afrati}). Although finding a single containment mapping suffices to test query containment for CQs (see the previous section), it is not enough in the case of CQACs. In fact, all the containment mappings from the containing query to the contained one should be considered. Before we describe how containment mappings can be used in order to test query containment between two CQACs, we define the concept of normalization of a CQAC.

\begin{definition}
	\label{dfn-normalization}
	Let $Q_1$ and $Q_2$ be two conjunctive queries with arithmetic
	comparisons (CQACs).  We want to test whether $Q_2
	\sqsubseteq  Q_1$. To do the testing, we  first  normalize
	each of $Q_1$ and $Q_2$ to $Q'_1$ and $Q'_2$, respectively. We {\em normalize} a CQAC query as
	follows:
	
	\begin{itemize}
		\item For each occurrence of a shared variable $X$ in a normal (i.e., relational) subgoal,
		except for the first occurrence, replace the occurrence of $X$ by a fresh
		variable $X_i$, and add $X = X_i$ to the comparisons of the
		query; and
		
		\item For each constant $c$ in a  normal subgoal,  replace the constant by a
		fresh variable $Z$, and add $Z = c$ to the comparisons of the
		query.
	\end{itemize}
\end{definition}

Theorem~\ref{thm:cont-CQAC}\cite{Gupta94-1,Zhang94-4}  describes how we can test the query containment of two CQACs using containment mappings.

\begin{theorem}
	\label{thm:cont-CQAC}
	
	Let $Q_1,Q_2$ be CQACs, and $Q'_1=Q'_{10}+\beta'_1 ,Q'_2=Q'_{20}+\beta'_2$
	be the respective queries after normalization.
	Suppose there is at least one containment  mapping from $Q'_{10}$ to $Q'_{20}$.
	Let $\mu_1, \ldots , \mu_k $ be all
	the containment mappings  from $Q'_{10}$ to $Q'_{20}$. Then $Q_2
	\sqsubseteq  Q_1$ if and only if the following logical implication $\phi$
	is true:
	$$\phi: \beta'_2  \Rightarrow \mu_1(\beta'_1) \vee \cdots \vee
	\mu_k(\beta'_1).$$
	(We refer to $\phi$  as the {\em containment entailment} in the rest of this paper.)
\end{theorem}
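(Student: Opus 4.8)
The plan is to prove both directions by reducing containment to the existence of valuations witnessing membership of the head tuple, exploiting the structural effect of normalization. First I would record the two facts that make the reduction clean. (i) Normalization preserves semantics, so $Q_i\equiv Q'_i$ and it suffices to characterize $Q'_2\sqsubseteq Q'_1$. (ii) By Definition~\ref{dfn-normalization}, in both $Q'_{10}$ and $Q'_{20}$ every relational atom contains only distinct variables and no constants, so each variable occurs exactly once among the relational subgoals; consequently a containment mapping from $Q'_{10}$ to $Q'_{20}$ may be specified atom by atom, and by safety (assumption 3) every variable occurring in $\beta'_1$ lies in the domain of any such mapping.

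For the ``if'' direction, assume $\phi$ holds and fix a database $D$ and a tuple $t\in Q'_2(D)$, witnessed by a valuation $\nu$ of the variables of $Q'_2$ that embeds $Q'_{20}$ into $D$, satisfies $\beta'_2$, and sends $head(Q'_2)$ to $t$. Since $\nu\models\beta'_2$, the hypothesis $\phi$ forces $\nu\models\mu_i(\beta'_1)$ for some $i$. I would then check that the composition $\nu\circ\mu_i$ witnesses $t\in Q'_1(D)$: it maps the relational atoms of $Q'_1$ into $D$ (because $\mu_i$ sends them into $Q'_{20}$ and $\nu$ sends those into $D$), it satisfies $\beta'_1$ (because $(\nu\circ\mu_i)(\beta'_1)=\nu(\mu_i(\beta'_1))$ is true), and it sends $head(Q'_1)$ to $t$ (since $\mu_i$ maps head to head). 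Hence $Q'_2\sqsubseteq Q'_1$.

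For the ``only if'' direction, assume containment and verify $\phi$ at an arbitrary valuation $\nu$ with $\nu\models\beta'_2$. Freeze $Q'_{20}$ under $\nu$ into a canonical database $D_\nu$ whose domain elements are the values $\nu(X)$ ordered by the ambient dense order; then $t:=\nu(head(Q'_2))\in Q'_2(D_\nu)$, so by containment $t\in Q'_1(D_\nu)$, witnessed by some valuation $\lambda$ of the variables of $Q'_1$. Because each relational atom of $Q'_{10}$ has distinct variables occurring in no other atom, and $\lambda$ sends it to a fact of the form $\nu(b)$ for some atom $b$ of $Q'_{20}$, I can pick for each atom such a target $b$ and assemble a genuine containment mapping $\mu$ with $\lambda=\nu\circ\mu$ on all variables; this $\mu$ is one of the enumerated $\mu_1,\ldots,\mu_k$. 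Since $\lambda\models\beta'_1$, substituting $\lambda=\nu\circ\mu_i$ gives $\nu\models\mu_i(\beta'_1)$, so the disjunction holds at $\nu$. As $\nu$ was arbitrary among valuations satisfying $\beta'_2$, the implication $\phi$ is logically valid.

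The main obstacle I expect is the lifting step in the ``only if'' direction: converting the abstract witnessing valuation $\lambda$ into one of the combinatorial mappings $\mu_i$ with $\lambda=\nu\circ\mu_i$. This is precisely where normalization is indispensable, since a repeated variable or an embedded constant could force incompatible target choices across atoms and break the identity $\lambda=\nu\circ\mu_i$. Two further points need care: the head must be treated as a distinguished subgoal so that the constructed $\mu_i$ respects heads, and the argument must range over \emph{all} valuations $\nu$ consistent with $\beta'_2$ (including those collapsing distinct variables) so that the conclusion is the logical validity of $\phi$ rather than its truth at a single point; density of the order guarantees that every configuration consistent with $\beta'_2$ is realized by some such $\nu$.
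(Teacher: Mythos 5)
Your proof is correct and follows essentially the same route as the paper's: the ``if'' direction composes a containment mapping with the witnessing valuation, and the ``only if'' direction freezes the normalized contained query under a valuation satisfying $\beta'_2$ and uses the fact---guaranteed by normalization---that every homomorphism from $Q'_{10}$ into this frozen database factors as $\nu\circ\mu_i$ for one of the enumerated containment mappings. The only difference is presentational: the paper argues the second direction contrapositively (a falsifying assignment yields a counterexample database) whereas you argue it directly, and you are in fact more explicit than the paper about the head-preservation and variable-collapse subtleties that the lifting step requires.
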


The following theorem  says that, if the CQACs have only closed ACs, then normalization is not necessary. For the proof see, e.g.,  \cite{2017Afrati}.

\begin{theorem}
	\label{thm:denorm}
	Consider two CQAC queries, $Q_1=Q_{10}+\beta_1$ and
	$Q_2=Q_{20}+\beta_2$ over densely totally ordered domains.  Suppose $\beta_1$
	contains only $\leq$ and $\geq $, and each of  $\beta_1$ and
	$\beta_2$ does not imply any ``='' restrictions.  Then $Q_2  \sqsubseteq
	Q_1$ if and only if
	$$\phi:\beta_2  \Rightarrow \mu_1(\beta_1) \vee \cdots \vee \mu_l(\beta_1),$$
	where $\mu_1,\ldots,\mu_l$ are all the containment mappings from
	$Q_{10}$ to $Q_{20} $.
\end{theorem}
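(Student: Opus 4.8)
The plan is to deduce the statement from Theorem~\ref{thm:cont-CQAC}, whose normalized entailment test is already available, by showing that under the stated hypotheses the non-normalized entailment $\phi$ is equivalent to the normalized one. Write the normalizations as $Q_1'=Q_{10}'+\beta_1'$ and $Q_2'=Q_{20}'+\beta_2'$, where $\beta_1'=\beta_1\cup E_1$ and $\beta_2'=\beta_2\cup E_2$, and $E_1,E_2$ are the equalities $X=X_i$ and $Z=c$ introduced when repeated variables and constants are made distinct. Since every relational atom of $Q_{i0}$ survives unchanged (only its occurrences are renamed) in $Q_{i0}'$, each containment mapping $\mu_i:Q_{10}\to Q_{20}$ lifts to a unique containment mapping $\tilde\mu_i:Q_{10}'\to Q_{20}'$. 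I will call a mapping $\mu_j':Q_{10}'\to Q_{20}'$ \emph{coherent} if it arises this way, i.e.\ occurrences of one original variable/constant are all sent to occurrences of a single original variable/constant, and \emph{incoherent} otherwise; the coherent mappings are then in bijection with $\mu_1,\dots,\mu_l$.

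The easy direction is $\phi\imply (Q_2\sqsubseteq Q_1)$, and here I would not even need the closedness hypothesis. Given an assignment satisfying $\beta_2'$, it satisfies $E_2$, so all occurrences of each $Q_{20}$-variable agree and it induces an assignment $\sigma$ satisfying $\beta_2$; $\phi$ then makes some $\mu_i(\beta_1)$ true under $\sigma$, and the corresponding coherent disjunct $\tilde\mu_i(\beta_1')$ is true under the original assignment, because $\tilde\mu_i(\beta_1)$ evaluates as $\mu_i(\beta_1)$ after collapsing occurrences and $\tilde\mu_i(E_1)$ is forced by $E_2$. Hence the normalized entailment $\phi'$ holds and Theorem~\ref{thm:cont-CQAC} yields containment.

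The content is the converse, which I would prove in contrapositive form: if $\phi$ fails then $\phi'$ fails. Assume some $\sigma$ satisfies $\beta_2$ while every $\mu_i(\beta_1)$ is false. Because $\beta_1$ uses only $\le$ and $\ge$, each image $\mu_i(\beta_1)$ is a conjunction of closed comparisons, so ``$\mu_i(\beta_1)$ false'' means that some comparison is violated \emph{strictly} -- an open condition, stable under small perturbations. The set of assignments that satisfy $\beta_2$ and strictly violate one comparison from each $\mu_i(\beta_1)$ is therefore nonempty and open inside the solution set of $\beta_2$, which is full-dimensional since $\beta_2$ implies no equality; using density of the order I can pick a point $\sigma^*$ in it at which all variables, and all constants occurring, take pairwise distinct values. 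Lifting $\sigma^*$ to the occurrences gives $\sigma'$ satisfying $\beta_2'=\beta_2\cup E_2$. Every coherent disjunct is false under $\sigma'$, since it evaluates as the corresponding $\mu_i(\beta_1)$, and every incoherent disjunct is false because its $E_1$-image demands an equality between occurrences of \emph{distinct} originals (or between a variable and an unrelated constant), which $\sigma^*$ separates. Thus no disjunct of $\phi'$ holds, so $\phi'$ fails and, by Theorem~\ref{thm:cont-CQAC}, containment fails.

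I expect the genericity step in the last paragraph to be the main obstacle: I must simultaneously keep $\beta_2$ satisfied, preserve the strict violation of one closed comparison in each $\mu_i(\beta_1)$, and drive all unforced variables and constants to distinct values so that every incoherent equality is broken. This is exactly where the three hypotheses enter in concert -- closedness of $\beta_1$ makes the violations open and hence robust to perturbation, ``$\beta_2$ implies no equality'' keeps its solution set full-dimensional so that an all-distinct point is available, and the dense total order guarantees that such perturbed values actually exist. I would also treat the constant equalities $Z=c$ in $E_1,E_2$ by the same separation argument, which is routine once the variables are handled.
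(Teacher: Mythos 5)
Your proposal is correct, but note that the paper never proves Theorem~\ref{thm:denorm} itself: it defers to the literature (``for the proof see, e.g., \cite{2017Afrati}''), so the comparison is against that standard argument rather than against anything in the text. Your route --- reducing to Theorem~\ref{thm:cont-CQAC} by showing $\phi$ equivalent to the normalized entailment, splitting normalized mappings into lifts of the original $\mu_i$ versus ``incoherent'' ones, and falsifying every incoherent disjunct at an injective assignment --- is essentially that standard argument, and the pieces check out: the easy direction indeed needs no hypotheses, and incoherence always forces some falsified conjunct of the $E_1$-image at an all-distinct point (the star-shaped equalities $X=X_i$ centered at the first occurrence make this work even when neither member of the offending pair is the first occurrence). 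Two spots deserve tightening. First, ``coherent'' must additionally require that copies of a $Q_1$-constant $c$ go to copies of the \emph{same} constant $c$ in $Q_{20}'$; occurrence-consistent mappings sending a constant's copies to copies of a variable, or of a different constant, are not lifts of any $\mu_i$, so your claimed bijection is slightly off --- though your separation argument falsifies their $Z=c$ conjuncts anyway, so no gap results. Second, the genericity step is phrased in vector-space terms (``open,'' ``full-dimensional,'' perturbation), which are not available over an arbitrary densely ordered domain; the clean replacement is the paper's own machinery from Section~\ref{subsec-algo}: the system consisting of $\beta_2$ together with one strict violation per $\mu_i$ is consistent (witnessed by $\sigma$), adding strict inequalities to a consistent system cannot merge strongly connected components of the induced graph (a merge would place a $<$ edge on a cycle, contradicting consistency via Lemma~\ref{lemma-prelim}), hence the enlarged system still implies no equalities, and the topological-sort assignment --- refined so that distinct nodes get distinct values avoiding all constants of both queries, which density permits --- yields exactly the injective $\sigma^*$ you need. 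With those two repairs your argument is a complete and correct proof.
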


As mentioned in the beginning of this section, there is another containment test for CQACs, which uses \textit{canonical databases} (see, e.g., in \cite{2017Afrati}). Considering a CQ $Q$, a canonical database is a database instance constructed as follows. We consider an assignment of the variables in $Q$ such that a distinct constant which is not included in any query subgoal is assigned to each variable. Then, the ground subgoals produced through this assignment define a canonical database of $Q$. Note that although there is an infinite number of assignments and canonical databases, depending on the constants selection, all the canonical databases are isomorphic; hence, we refer to such a database instance as the canonical database of $Q$.  To test, now, the containment  $Q_2\sqsubseteq  Q_1$ of the CQs $Q_1$, $Q_2$, we compute the canonical database $D$ of $Q_2$ and check if $Q_2(D)\subseteq Q_1(D)$.

Extending the test using canonical databases to CQACs, a single canonical database does not suffice. We construct a canonical database of a CQAC $Q_2$ with respect to a CQAC $Q_1$ as follows. Consider the set $S=S_V\cup S_C$ including the variables $S_V$ of $Q_2$, and the constants  $S_C$ of both $Q_1$ and $Q_2$. Then, we partition the elements of $S$ into blocks such that no two distinct constants are in the same block. Let $\CP$ be such a partition; for each block in the partition $\CP$, we equate all the variables in the block to the same variable and, if there is a constant in the block, we equate all the variable to the constant. For each partition $\CP$, we create a number of {\em canonical databases}, one for each total ordering on the variables and constants that are present (after we have equated appropriately, as explained).

Although there is an infinite number of canonical databases, depending of the constants selected, there is a bounded set of canonical databases such that every other canonical database is isomorphic to one in this set.
Such a set is referred as \textit{the set of canonical databases} of $Q_2$ w.r.t. $Q_1$. To test now the containment $Q_2\sqsubseteq  Q_1$ of the CQACs $Q_1$, $Q_2$, we construct all the canonical databases of $Q_2$ w.r.t. $Q_1$ and, for each canonical database $D$, we check if $Q_2(D)\subseteq Q_1(D)$.

\begin{theorem}
	A CQAC query $Q_2$ is contained into a CQAC query $Q_1$ if and only if, for each database belonging to the set of canonical databases of $Q_2$ with respect to $Q_1$, the query $Q_1$ computes all the tuples that $Q_2$ computes if applied on it.
\end{theorem}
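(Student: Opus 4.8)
The plan is to prove the two directions separately, with the forward (``only if'') direction being immediate and the converse (``if'') direction carrying the real content. For the only-if direction, I would assume $Q_2\sqsubseteq Q_1$; by definition this means $Q_2(D)\subseteq Q_1(D)$ for \emph{every} database $D$ of the base relations, and since each member of the set of canonical databases of $Q_2$ w.r.t.\ $Q_1$ is one such $D$, the stated condition holds at once.

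For the converse, suppose $Q_2(D)\subseteq Q_1(D)$ holds for every canonical database $D$ in the set, and take an arbitrary base database $D^\ast$ together with a tuple $t\in Q_2(D^\ast)$. By the definition of $Q_2(D^\ast)$ there is a valuation $\rho$ of the variables of $Q_2$ with $\rho(Q_{20})\subseteq D^\ast$, $\rho\models\beta_2$, and $t=\rho(head(Q_2))$. The key move is to read off from $\rho$ the pattern it induces on $S=S_V\cup S_C$: two elements lie in the same block exactly when $\rho$ assigns them equal values (each constant being assigned itself), and the resulting values are linearly ordered by $\rho$. Since distinct constants denote distinct values, this yields a partition $\CP$ in which no block contains two distinct constants, together with a total order; that is, it singles out precisely one of the canonical databases, call it $D_\CP$, with its representative assignment $\sigma$ (the actual constant for each constant-block, a fresh distinct constant for each other block, placed in the induced order). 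By construction $D_\CP=\sigma(Q_{20})$, and because $\sigma$ realizes the same equalities and order among variables and constants as $\rho$ and arithmetic comparisons depend only on this order/equality pattern, $\sigma\models\beta_2$. Hence $\sigma(head(Q_2))\in Q_2(D_\CP)$, and the hypothesis gives $\sigma(head(Q_2))\in Q_1(D_\CP)$, witnessed by a valuation $\tau$ of $Q_1$ with $\tau(Q_{10})\subseteq D_\CP$, $\tau\models\beta_1$, and $\tau(head(Q_1))=\sigma(head(Q_2))$.

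The final step transports this $Q_1$-derivation from $D_\CP$ back to $D^\ast$. I would define $g$ to send the canonical value of each block to the common $\rho$-value of that block; since distinct blocks carry distinct $\rho$-values arranged in the canonical order, $g$ is an order isomorphism onto the active domain of $\rho$ that fixes every actual constant of $Q_1$ and $Q_2$, and it satisfies $g\circ\sigma=\rho$ on the variables and constants of $Q_2$. Then $g(D_\CP)=g(\sigma(Q_{20}))=\rho(Q_{20})\subseteq D^\ast$, so the composite valuation $g\circ\tau$ of $Q_1$ has $(g\circ\tau)(Q_{10})\subseteq D^\ast$; moreover $g\circ\tau\models\beta_1$, because $\tau\models\beta_1$ and an order isomorphism (fixing the constants occurring in $\beta_1$) preserves each of $<,>,\le,\ge,=,\neq$; and $(g\circ\tau)(head(Q_1))=g(\sigma(head(Q_2)))=\rho(head(Q_2))=t$. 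Thus $t\in Q_1(D^\ast)$, and as $D^\ast$ and $t$ were arbitrary, $Q_2\sqsubseteq Q_1$.

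The hard part will be making the ``pattern'' bookkeeping fully rigorous: that every witnessing valuation $\rho$ factors as $g\circ\sigma$ through exactly one canonical database, that this factorization preserves the comparisons in both $\beta_2$ (when passing from $\rho$ to $\sigma$) and $\beta_1$ (when passing from $\tau$ back along $g$), and that the canonical set really is a complete, finite-up-to-isomorphism family of order-types. This is where the standing assumptions enter: density and the absence of endpoints (assumption~1) guarantee that the finitely many partition-plus-order combinations are all realizable over the domain, so no order-type of a valuation is missed, and consistency of the comparisons (assumption~2) ensures the derivations under consideration are non-vacuous. Once this order/equality machinery is in place, the two transfers reduce to the mechanical verifications indicated above.
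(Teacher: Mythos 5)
Your proof is correct. Note that the paper itself states this theorem \emph{without} proof: it appears in the preliminaries as a known background result (the canonical-database containment test for CQACs, attributed to the literature, e.g., \cite{2017Afrati}), so there is no in-paper argument to compare yours against. What you give is the standard proof of this result: the only-if direction is immediate because each canonical database is just a database, and the if direction factors an arbitrary witnessing valuation $\rho$ through the canonical database of its induced equality/order type and then transports the resulting $Q_1$-derivation back to $D^\ast$ along an order isomorphism $g$ that fixes the query constants. The two bookkeeping points you flag at the end are indeed the only places needing care: (i) the database $D_{\CP}$ you construct need only be isomorphic (via an order isomorphism fixing the constants of $S_C$) to the representative actually placed in the finite set of canonical databases, so the hypothesis $Q_2(D)\subseteq Q_1(D)$ has to be transferred across that isomorphism before it can be applied; and (ii) transporting $\tau\models\beta_1$ along $g$ implicitly uses the paper's standing safety assumption (assumption 3 in the Preliminaries), which guarantees that every variable occurring in $\beta_1$ also occurs in $Q_{10}$ and hence is mapped by $\tau$ into the domain on which $g$ is defined. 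Both are routine given the setting (dense total order, constants of both queries included in $S_C$), so the argument as sketched goes through.
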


\subsection{Answering queries using views}

A \textit{view} is a named query which can be treated as a regular relation. The query defining the view is called \textit{definition} of the view (see, e.g., in \cite{2017Afrati}). 

%
%
%
 Considering a set of views $\V$ and a query $Q$ over a database schema $\CS$, we want to answer $Q$ by accessing only the instances of views~\cite{Levy95-1, Halevy01-1, 2017Afrati}.  To answer the query $Q$ using $\V$ we could rewrite $Q$ into a new query $R$ such that $R$ is defined in terms of views in $\V$ (i.e., the predicates of the subgoals of $R$ are view names in $\V$). 
We denote by $\V(D)$ the output of applying all the view definitions on a database instance $D$. Thus, $\V(D)$ and any subset of it defines a view instance $\I$ for which there is a database $D$ such that $\I \subseteq \V(D)$.

If,  for every database instance $D$, we have $R(\V(D))=Q(D)$ then $R$ is an \textit{equivalent rewriting} of $Q$ using $\V$. If $R(\V(D))\subseteq Q(D)$, then $R$ is a \textit{contained rewriting} of $Q$ using $\V$. To find and check query rewitings we use the concept of  expansion   which is defined as follows.

\begin{definition}
	The \emph{view-expansion},\footnote{In Section \ref{subsec-mcr-datalogAC}, we will need to differentiate between view-expansion and Datalog-expansion which we will define shortly, therefore, when confusion arises we use these prefixes.} $R^{exp}$, 
 of a rewriting $R$ defined in terms of views in $\V$,  is obtained from $R$ as follows. For each subgoal $v_i$ of $R$ and the corresponding view definition $V_i$ in $\V$, if $\mu_i$ is the mapping from the head of $V_i$ to $v_i$ we replace $v_i$ in $R$ with the body of $\mu_i(V_i)$. The non-distinguished variables in each view are replaced with
	fresh variables in $R^{exp}$.
\end{definition}

To test  whether a query $R$ defined in terms of views set $V$  is a contained (resp. equivalent) rewriting of a query $Q$ defined in terms of the base relations, we check whether  $P^{exp} \sqsubseteq Q$ (resp. $P^{exp} \equiv Q$). 
%
%
%
%
%

There are settings where there is no equivalent rewriting of the query using the views. In such cases, finding a containing rewriting returning as many answers of the query as possible matters. In this context, we define a contained rewriting, called \emph{maximally contained rewriting}  (\emph{MCR}, for short), that returns most of the answers of the query.

\begin{definition}
	A rewriting $R$ is called a {\em maximally contained rewriting} ({\em MCR}) of query $Q$ using views $\V$ with respect to query language $\CL$ if 
	\begin{enumerate}
		\item $R$ is a contained rewriting of $Q$ using $\V$ in $\CL$, and 
		\item every contained rewriting of $Q$ using $\V$ in language $\CL$ is contained in $R$.
	\end{enumerate}
\end{definition}

A view instance $\I$  is a database with facts of the view relations. It is expected that $\I$ is computed by applying the views on a database over the base relations in terms of which the views are defined. The notion of certain answers is another way to get information from a view instance about the query.

\begin{definition}
	We define the certain answers of ($Q,\I$) with respect to $\V$ as follows:
	\begin{itemize}
		\item Under the Closed World Assumption (CWA):
		\[\text{certain}(Q,\I)=\bigcap\{Q(D): D \text{ such that } \I=\V(D)\}.\] 
		\item Under the Open World Assumption (OWA):
		\[\text{certain}(Q,\I)=\bigcap\{Q(D): D \text{ such that } \I\subseteq\V(D)\}.\] 
	\end{itemize}
	
	\end{definition}

The relation between what an MCR computes and the set of certain answers on a view instance is not easy to find. 
In sections  \ref{sec-6} and \ref{subsec-mcr-datalogAC}, we present the way MCRs and certain answers are connected for CQACs, under the OWA.

\subsection{Datalog queries}
A \textit{Datalog query} (a.k.a.  Datalog program) is a finite set of Datalog rules, where a \textit{rule} is a CQ whose predicates in the body could either refer to a base relation or to a head of a rule in the query (either the same rule or other rule). Furthermore, there is a designated predicate, which is called \textit{query predicate}, and returns the result of the query.

The predicates in the body of each rule in a Datalog query are of two types; the ones referring to base relations and the ones referring to a head of a rule. The predicates of the former type are called \textit{extensional} (\textit{EDB}, for short) while the predicates of the latter are called intensional  (\textit{IDB}, for short). The atom whose predicate is an EDB (resp. IDB) is called \textit{base atom} (resp. \textit{derived atom}). A Datalog query is called \textit{monadic} if all the IDBs are unary.

The evaluation of a Datalog query on a database instance is performed by applying the rules on the database until no more facts (i.e., ground head atoms) are added to the set of the derived atoms. The answer of a Datalog query on a database is the set of facts derived during the computation for the query predicate. Namely, the evaluation follows the fixpoint semantics. A $Datalog^{AC}$  query allows in each rule also arithmetic comparisons (ACs) as subgoals, i.e., each rule  is a CQAC. The evaluation process remains the same, only now, the AC subgoals should be satisfied too. 
We say that we \emph{unfold a rule} if we replace an  IDB subgoal with the body of another rule that has this IDB predicate in its head, and we do that iteratively. 
A \textit{partial expansion} of a Datalog query is a conjunctive query that results from unfolding the rules one or more times; the partial expansion may contain IDB predicates. A \textit{datalog-expansion} of a Datalog query is a partial expansion that contains only EDB predicates. Considering all the (infinitely many) expansions of a Datalog query we can prove that a Datalog query is equivalent to an infinite union of conjunctive queries.
An expansion of a $Datalog^{AC}$  query  is defined the same way as an expansion of a Datalog query, only now we carry the ACs in the body of each expansion we produce. Thus, in an analog way, a $Datalog^{AC}$  query  is equivalent to an infinite union of CQACs.

A \textit{derivation tree} depicts a computation of a Datalog query. Considering a fact $e$ in the answer of the Datalog query, we construct a derivation tree for this fact as follows. Each node in this tree, which is rooted at $e$, is a ground fact. For each non-leaf node $n$ in this tree, there is a rule in the query which has been applied to compute the atom node $n$ using its children facts. The leaves are facts of the base relations. Such a tree is called \textit{derivation tree} of the fact $e$.

During the computation, we use an {\em instantiated rule}, which  is a rule where all the variables have been replaced by constants.  We say that a rule is {\em fired }
if there is an instantiation of this rule where all the atoms in the body of the rule are in the currently computed  database.



\section{The algorithm to check satisfaction  of a collection of ACs}
\label{subsec-algo}

We will present{\bf\texttt ~algorithm AC-sat} which, on input a collection of ACs, checks whether there is a
satisfying  assignment, i.e., an assignment of real numbers to the variables that makes all the ACs in the collection true. If there is not  then  we say that the conjunction of ACs is false or that the collection of ACs is {\em contradictory} or is not {\em  consistent}.

We define the {\em induced directed graph } of a collection $C$ of ACs of the form $X\theta Y$ where $\theta$ is one of the $<, >, \leq, \geq, =, \neq$. We  consider that this collection is divided into two sub-collections, the collection $C_A$ including all the ACs where $\theta$ is one of the $<, >, \leq, \geq, =$ and the collection $C_B$ including all the ACs where $\theta$ is $\neq$. 
The induced directed graph is built using the ACs in $C_A$ and has nodes that are variables or constants.  There is an edge labeled $\leq$ between two nodes $n_1, n_2$ if there is an AC in the collection $C_A$
which is $n_1 \leq n_2$. There is an edge labeled $<$ between two nodes $n_1, n_2$ if there is an AC in the collection $C_A$
which is $n_1 < n_2$. (We only label edges $<$ or $\leq$ since the other direction, $>$ or $\geq$ is indicated by the direction of the edge.)
We treat each equation $X=Y$ in $C_A$ as two ACs of the form $X\leq  Y$ and $X\geq Y$ and we add edges accordingly.
Finally we add edges labeled $<$ between all the pairs of constants depending on their order.\\

{\bf\texttt   ~Algorithm AC-sat: }
We consider the induced directed graph $G$ of the collection $C$ of ACs.
 We then find all the strongly connected components of $G$. We say that an edge belongs to a strongly connected component if it joins two
nodes  in this strongly connected component.

The collection $C$ of ACs is contradictory if either of the following  is true.

\begin{description}
	\item[Case 1.] There is a strongly connected component with two distinct constants belonging to it.
	\item[Case 2.] There is a strongly connected component with an edge labeled $<$.
	\item[Case 3.] There is a $A_1\neq A_2$ AC in $C_B$ such that $A_1$ and $A_2$ belong to the same strongly connected component.
\end{description}

\begin{lemma}
	\label{lemma-prelim}
	The{\bf\texttt ~algorithm AC-sat} is a complete and sound procedure to check that a conjunction of ACs is contradictory.
	
\end{lemma}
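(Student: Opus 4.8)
The plan is to prove soundness and completeness separately, where soundness means "if the algorithm reports contradictory, then the ACs are genuinely unsatisfiable" and completeness means "if the ACs are unsatisfiable, the algorithm detects it via one of the three cases." The central object is the induced directed graph $G$, whose strongly connected components (SCCs) capture exactly the sets of nodes that any satisfying assignment is forced to make equal. The key structural fact I would establish first is: any satisfying assignment $f$ of $C_A$ must assign equal values to all nodes in a common SCC. This follows because an SCC gives a directed cycle through each pair of its nodes, and each edge $n_1 \to n_2$ forces $f(n_1) \le f(n_2)$; going around the cycle yields $f(n_1) \le \cdots \le f(n_1)$, so all inequalities along it are equalities.

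\emph{Soundness.} Using the forced-equality fact, I would dispatch the three cases. For Case~1, if two distinct constants lie in one SCC they must receive equal values, but constants are fixed to distinct reals, a contradiction; moreover the added constant-ordering edges make any two distinct constants comparable, so this is well-defined. For Case~2, an edge labeled $<$ inside an SCC forces $f(n_1) < f(n_2)$, yet the SCC forces $f(n_1) = f(n_2)$, impossible. For Case~3, an AC $A_1 \neq A_2$ in $C_B$ with $A_1, A_2$ in the same SCC forces $f(A_1) = f(A_2)$, contradicting the disequation. In each case no satisfying assignment can exist, so the report is correct.

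\emph{Completeness.} This is the direction I expect to be the main obstacle, since I must \textbf{construct} a satisfying assignment whenever none of the three cases fires. The plan is to contract each SCC to a single node, obtaining a condensation graph $G'$ that is a DAG. Because Case~1 did not fire, each contracted node contains at most one constant, so I can consistently assign the constant's value where present. Because Case~2 did not fire, every edge interior to an SCC is labeled $\le$, so collapsing it to equality is consistent. On the DAG $G'$ I would fix a topological order and assign strictly increasing real values to the contracted nodes, respecting any constants already pinned down and respecting the dense order of the domain (rationals/reals) so that there is always room to insert values strictly between existing ones; a $<$ edge in $G'$ between two distinct contracted nodes is then automatically satisfied by the strict topological gap, and a $\le$ edge likewise. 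Finally, since Case~3 did not fire, for every $A_1 \neq A_2$ in $C_B$ the endpoints lie in distinct contracted nodes, hence received distinct values, so all disequations hold. Thus $f$ satisfies all of $C_A$ and $C_B$, proving the collection is consistent.

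The delicate point in completeness is guaranteeing the strict assignment can simultaneously honor the pre-pinned constant values and the strict $<$ edges without collision; density of the order is exactly what makes this work, letting me interleave fresh rational values among the fixed constants in topological order. I would also note that all three cases are checkable in polynomial time — SCCs via a standard linear-time algorithm, and then a scan of edges and of $C_B$ — so the procedure is effective, though the lemma itself only asserts soundness and completeness.
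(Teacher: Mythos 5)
Your proof is correct and follows essentially the same route as the paper's: soundness via the fact that a cycle of $\leq$/$<$ edges forces all nodes on it to be equal, and completeness by collapsing SCCs to a condensation DAG, topologically sorting it, and assigning increasing values. You are in fact somewhat more careful than the paper on two points it glosses over --- the explicit forced-equality lemma and the use of density to interleave fresh values among the pre-pinned constants (which is legitimate because the algorithm's added $<$ edges between constants make every topological order respect their numeric order) --- but these are refinements of detail, not a different argument.
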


\begin{proof}
	First we prove that this procedure is complete; i.e., we prove that if the procedure shows that the conjunction is not false then we can assign
	constants to variables to make all ACs true.
	
	Since neither Case 1 nor Case 2 happens, all strongly connected components have $\leq$ labels and at most one constant. Thus,
	we assign to each of the elements of a strongly connected component the same constant, which is either a new constant or the constant of the component, as follows: We collapse each strongly connected component to one node and the induced directed graph is reduced to an acyclic directed graph. We consider a
	topological sorting of this acyclic graph into a number of levels. We assign constants
	following this topological sorting,  so that constants in the next level are greater than the constants in the previous levels. This makes all ACs true.

	Now we prove that this procedure is sound.  Whenever the procedure stops in Cases 1 and 2 then there is no assignment that satisfies all the ACs in this strongly connected component because there is a cycle
	with either two distinct constants on it or with an edge labeled $<$. This cycle means that all the variables on it should be the same. The existence of  two distinct constants on it or of an edge labeled $<$ means that two variables on the cycle should be distinct.
	Whenever the procedure stops in Case 3, then $A_1$ and $A_2$ should be equal according to the
	strongly connected component they belong. Thus we cannot find an assignment  that satisfies also the
	AC $A_1\neq A_2$.
\end{proof}

The above algorithm is used to prove the following lemma, whose full proof can be found in \cite{Afrati19}.

\begin{lemma}
	\label{lemma-LSI-appendix-NP}
	Consider the following implication:
	$$c_1\wedge c_2 \wedge ...\Rightarrow d_1 \vee d_2 \vee ...$$
	where  the conjunction of ACs $c_1\wedge c_2 \wedge ...$ is consistent
	(i.e., it has a satisfying assignment from the set of real numbers)
	and the $d_i$'s are all closed SI (i.e., either LSI or RSI) comparisons.
	%
	Then the implication is true if and only if one of
	the following happens:
	
	(i) there is a single $d_i$ from the rhs such that
	$$c_1\wedge c_2 \wedge ...\Rightarrow d_i$$
	or
	
	(ii) there are two ACs from the rhs from which one is LSI and one is RSI, say $d_i$ and $d_j$ (we call them {\em coupling ACs} for the conjunction $c_1\wedge c_2 \wedge ...$) such that
	$$c_1\wedge c_2 \wedge ...\Rightarrow d_i \vee d_j.$$

	%
\end{lemma}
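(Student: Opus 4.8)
The plan is to recast the implication as an (in)consistency question and then invoke the satisfiability characterization of Lemma~\ref{lemma-prelim}. The \emph{if} direction is immediate: if $C := c_1 \wedge c_2 \wedge \cdots$ implies a single $d_i$, or implies $d_i \vee d_j$, then it implies the whole right-hand side, since $d_i$ (resp.\ $d_i \vee d_j$) is a subdisjunction of it. For the \emph{only if} direction, observe that the implication $C \Rightarrow d_1 \vee d_2 \vee \cdots$ holds exactly when the conjunction $C \wedge \bigwedge_k \neg d_k$ is contradictory. Because each $d_k$ is a \emph{closed} semi-interval comparison, its negation is a \emph{strict} one: the negation of a closed LSI $X \le a$ is the lower bound $X > a$, and the negation of a closed RSI $Y \ge b$ is the upper bound $Y < b$. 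Thus it suffices to show that whenever $C$ is consistent but $C$ together with a family of such strict bounds is contradictory, the contradiction is already caused either by a single one of these bounds (yielding case (i)) or by one lower bound together with one upper bound (yielding case (ii)).

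I would work with the induced directed graph $G$ of the full system and apply algorithm \texttt{AC-sat}. In $G$, each negated bound contributes one $<$-labeled edge: a negated LSI gives an edge $a \to X$ from a constant up to a variable, and a negated RSI gives an edge $Y \to b$ from a variable up to a constant; moreover $G$ already contains a $<$-edge between every ordered pair of constants. First I would collapse the three contradiction cases to one. Case~1 (two distinct constants in one strongly connected component) forces a directed cycle through a constant $<$-edge, so it reduces to Case~2. For Case~3 (a subgoal $A_1 \neq A_2$ of $C$ with $A_1,A_2$ in one component), strong connectivity gives a directed cycle through $A_1$ and $A_2$; if it uses any negated bound it already contains a $<$-edge (Case~2), and if it uses none it lies entirely inside $C$, forcing $C$ itself to be contradictory, which is excluded. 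Hence any contradiction yields a \emph{bad simple cycle}: a directed cycle carrying at least one $<$-labeled edge.

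The core step is to take a bad simple cycle $\Gamma$ minimizing the number of negated-bound edges it uses and to show this number is $1$ or $2$, of the stated types. It is not $0$, for then $\Gamma$ lies inside $C$, contradicting consistency of $C$. Suppose $\Gamma$ used two lower-bound edges $a_1 \to X_1$ and $a_2 \to X_2$. Since $a_1,a_2$ are distinct constants, $G$ contains a $<$-edge between them; combining that constant edge with the appropriate arc of $\Gamma$ yields a shorter closed walk retaining exactly one of the two (still strict) lower-bound edges, from which one extracts a bad simple cycle with strictly fewer negated-bound edges, contradicting minimality. The symmetric argument, using the order of the two right endpoints, rules out two upper-bound edges. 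So $\Gamma$ carries at most one lower bound and at most one upper bound. If exactly one, that bound together with $C$ is contradictory, i.e.\ $C \Rightarrow d_i$, which is case~(i); if two, they are one lower bound (a negated LSI) and one upper bound (a negated RSI), giving $C \Rightarrow d_i \vee d_j$ with $d_i$ an LSI and $d_j$ an RSI, which is case~(ii).

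The main obstacle I anticipate is precisely this excision/minimality argument: one must verify that the shortcutting constant edge exists in $G$ (guaranteed by the all-pairs constant edges), that the replacement preserves a strict edge so the resulting cycle stays bad, and that a simple cycle through that edge can always be extracted so the negated-edge count strictly drops. A secondary subtlety is the bookkeeping when $C$ contains equalities or disequations: equalities are transparent, since \texttt{AC-sat} treats $X=Y$ as the two edges $X \le Y$ and $Y \le X$, and the disequation case collapses into the $<$-edge case exactly because the negations of closed semi-interval comparisons are strict. The closedness of the $d_k$ is used precisely here and is essential: were the $d_k$ open, their negations would contribute only $\le$-edges, a single one of which need not create a bad cycle, and the bound of two would fail.
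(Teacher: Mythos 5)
Your proof is correct and follows exactly the strategy the paper itself indicates for this lemma (the paper gives no inline proof, stating only that algorithm AC-sat is used and deferring the full argument to \cite{Afrati19}): negate the closed semi-interval comparisons to obtain strict bounds, recast the implication as unsatisfiability of $C \wedge \bigwedge_k \neg d_k$, and apply the graph characterization behind AC-sat. Your excision/minimality argument---reducing all three contradiction cases to a bad simple cycle and using the all-pairs constant edges to show a minimal such cycle carries at most one negated-LSI and at most one negated-RSI edge---is sound, including the implicit use of cycle simplicity to guarantee that the two source (resp.\ target) constants are distinct.
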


%
%


\section{Analysing the containment entailment}
\label{sec-analysis}
In this section, the first two subsections serve as an introduction to the containment entailment and its preliminary analysis. 
In the end of this section, we  define  the classes of queries we consider in later sections.

Consider the containment entailment (as in Theorem \ref{thm:cont-CQAC} or  Theorem~\ref{thm:denorm}).
$$ \beta_2  \Rightarrow \mu_1(\beta_1) \vee \cdots \vee
\mu_k(\beta_1).$$

\subsection{Containment Implications}
The right hand side (rhs, for short) of the containment entailment is a disjunction of disjuncts, where each disjunct is a
conjunction of ACs. We can turn this, equivalently, to a conjunction of conjuncts, where each conjunct
is a disjunction of ACs. We call each of these last conjuncts a {\em rhs-conjunct} (from right hand side conjunct). Now we can turn the containment entailment, equivalently,  into a number of implications. In each implication, we keep the left hand side of the containment entailment the same and have the right hand side be one of the
rhs-conjuncts. We call each such implication  a {\em containment implication}.


\begin{example}
	\label{ex-firstLSI}
	For an example, consider the following normalized CQACs.
\vspace*{-.5cm}
	%
	\begin{center}
		\begin{tabular}{lll}
			$Q_1: q()$ & $\symif$ &$a(X_1,Y_1,Z_1), X_1=Y_1, Z_1< 5$ \\
			$Q_2: q()$ & $\symif$ &$a(X,Y,Z'), a(X',Y',Z), X\leq 5, Y\leq X, Z\leq Y, $$ $$X'=Y', Z'<5$\\
		\end{tabular}
	\end{center}
	
	Testing the containment $Q_2
	\sqsubseteq  Q_1$, it is easy to see that there are the following containment mappings:
\vspace*{-.5cm}
	\begin{itemize}
		\item $\mu_1:  X_1\rightarrow X, Y_1\rightarrow Y, Z_1\rightarrow Z'$
		\item $\mu_2:  X_1\rightarrow X', Y_1\rightarrow Y', Z_1\rightarrow Z$
	\end{itemize}
	
	Hence, the containment entailment is given as follows:
	
	\begin{center}
		\begin{tabular}{l}
			$X\leq 5  \wedge Y\leq X\wedge  Z\leq Y\wedge    X'=Y'\wedge  Z'<5\Rightarrow$\\
			$\big(\; \mu_1(X_1)\!\!=\!\!\mu_1(Y_1) \;\wedge\; \mu_1(Z_1)\!\!<\!5\;\big)\;\vee$ \\
			$\big(\;\mu_2(X_1)\!\!=\!\!\mu_2(Y_1) \;\wedge\; \mu_2(Z_1)\!\!< 5\;\big)$\\
		\end{tabular}
	\end{center}
	
	which is equivalently written:
	
	\begin{center}
		\begin{tabular}{l}
			$X\leq 5  \wedge Y\leq X\wedge  Z\leq Y\wedge  X'=Y'\wedge  Z'<5  \Rightarrow$\\
			$(X=Y \wedge Z'< 5)  \vee (X'=Y' \wedge Z< 5)$\\
		\end{tabular}
	\end{center}
	

	
	It is easy to verify that the above implication is true (due to the second part of the disjunction in the right-hand side which is also included in the antecedent).
%
%
%
%
%
%
%
	%
	%
	%
	%
	%
	%
	%
	%
	%
	%
	Now we consider the containment entailment we built above. According to
	what we analyzed in this section, we can equivalently rewrite this containment entailment  by
	transforming its right hand side into a conjunction, where each conjunct is a disjunction of ACs. The transformed entailment is the following, where $\beta=X\!\leq \!5  \wedge Y\!\leq \!X\wedge  Z\!\leq \!Y\wedge  X'\!\!=\!\!Y'\wedge  Z'\!<\!5$:
\vspace*{-.5cm}
	\begin{center}
		\begin{tabular}{lll}
			$\beta \Rightarrow \; (X\!\!=\!\!Y\vee X'\!\!=\!\!Y')\wedge(X\!\!=\!\!Y\vee Z\!< \!5)\wedge(Z'\!< \! 5\vee X'\!=\! Y')\wedge(Z'\!< \! 5\vee Z\!<\! 5)$ \\
		\end{tabular}
	\end{center}
	
	%
	%
	%
	%
	%
	%
	%
	%
\end{example}

%
%

The following two theorems are proved in  \cite{Afrati19}  
and serve as an introduction to the results in the present paper (the second theorem is proven based on the first theorem):


\begin{theorem}
The following two are equivalent:

a) One disjunct in the rhs suffices to make the containmnent entailment true.

b) For each containment implication, one disjunct in the rhs suffices to make it true.

\end{theorem}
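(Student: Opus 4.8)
The plan is to reduce the statement to a purely combinatorial fact about the quantifier structure of the two conditions, once the distributive law relating the DNF and CNF forms of the right-hand side is made explicit. No property specific to arithmetic comparisons is needed; everything rests on the shape of the two assertions.

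First I would fix notation. Each disjunct of the right-hand side is $\mu_i(\beta_1)$, which, being the image under a mapping of a conjunction of ACs, is itself a conjunction of ACs; write $\mu_i(\beta_1)=\bigwedge_{j} a_{ij}$. Since $\beta_2\Rightarrow\bigwedge_j a_{ij}$ holds exactly when $\beta_2\Rightarrow a_{ij}$ holds for every $j$, condition (a) --- that a single disjunct $\mu_i(\beta_1)$ makes the containment entailment true --- is equivalent to
$$\text{(a)}\quad \exists\, i\;\forall\, j:\ \beta_2\Rightarrow a_{ij}.$$
Next I would unfold the definition of the containment implications. Distributing the DNF right-hand side $\bigvee_i\bigwedge_j a_{ij}$ into CNF yields rhs-conjuncts of the form $\bigvee_i a_{i,f(i)}$, one for each choice function $f$ that selects a single conjunct index $f(i)$ from each disjunct $i$. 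The containment implication associated with $f$ is $\beta_2\Rightarrow\bigvee_i a_{i,f(i)}$, and ``one disjunct suffices'' for it means that some single AC $a_{i,f(i)}$ is already implied by $\beta_2$. Hence condition (b) reads
$$\text{(b)}\quad \forall\, f\;\exists\, i:\ \beta_2\Rightarrow a_{i,f(i)}.$$

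With both conditions in this normal form, the equivalence is the statement that $\exists\, i\,\forall\, j\,P(i,j)$ is equivalent to $\forall\, f\,\exists\, i\,P(i,f(i))$, where $P(i,j)$ abbreviates $\beta_2\Rightarrow a_{ij}$. For (a)$\Rightarrow$(b) I would take the witness $i_0$ supplied by (a) and observe that for every choice function $f$ the same index $i_0$ works, since $P(i_0,f(i_0))$ holds by the universal $\forall\, j$ in (a). For the converse I would argue by contraposition: if (a) fails, then for every $i$ there is some conjunct index $j_i$ with $\neg P(i,j_i)$; setting $f(i)=j_i$ produces a single containment implication none of whose disjuncts is implied by $\beta_2$, which refutes (b). Because the number of disjuncts and the number of conjuncts inside each $\mu_i(\beta_1)$ are finite, this per-disjunct selection of a violated AC is unproblematic.

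The only real work lies in the second and third steps: correctly matching the distributive-law indexing between the DNF disjuncts (the $\mu_i(\beta_1)$) and the CNF rhs-conjuncts, and carrying the quantifiers through unchanged so that ``suffices'' is read as $\exists\, i$ in (a) and as $\exists\, i$ inside each $f$-indexed implication in (b). The hard direction, (b)$\Rightarrow$(a), is where one must assemble a single ``bad'' containment implication out of a choice of one violated AC per disjunct; this selection argument, rather than any reasoning about densely ordered domains or the algorithm AC-sat, is the crux of the proof.
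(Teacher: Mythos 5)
Your proof is correct. Both reductions are sound: condition (a) unfolds to $\exists i\,\forall j\, (\beta_2\Rightarrow a_{ij})$ because an implication into a conjunction holds iff it holds into each conjunct, condition (b) unfolds to $\forall f\,\exists i\,(\beta_2\Rightarrow a_{i,f(i)})$ because the rhs-conjuncts produced by the distributive law are exactly the disjunctions $\bigvee_i a_{i,f(i)}$ indexed by choice functions $f$, and the two quantifier forms are equivalent by the argument you give (the witness $i_0$ handles one direction, and assembling a ``bad'' choice function from one violated AC per disjunct handles the contrapositive of the other, with finiteness making the selection unproblematic). Note that the paper itself does not prove this theorem --- it is quoted from reference [Afrati19] as background --- so there is no in-paper proof to compare against; your self-contained quantifier-exchange argument is the natural proof of the statement, and you are right that it uses nothing about arithmetic comparisons or dense orders, only the shape of the DNF-to-CNF transformation that defines the containment implications.
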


\begin{theorem}
	If the containing query contains only closed  LSIs and the contained query any closed AC then the containment problem is in NP.
\end{theorem}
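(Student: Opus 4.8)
The plan is to show that, under these restrictions, the $\Pi^p_2$ structure of the containment entailment collapses: $Q_2 \sqsubseteq Q_1$ holds if and only if there is a \emph{single} containment mapping $\mu$ from $Q_{10}$ to $Q_{20}$ with $\beta_2 \Rightarrow \mu(\beta_1)$. Once this equivalence is established, membership in NP is immediate: guess $\mu$ (a polynomial-size object, verifiable as a containment mapping in polynomial time), and then verify $\beta_2 \Rightarrow \mu(\beta_1)$ in polynomial time. The latter check reduces to testing, for each LSI $\mu(X) \leq c$ occurring in $\mu(\beta_1)$, whether $\beta_2 \wedge \mu(X) > c$ is contradictory, which is decidable in polynomial time by \textbf{algorithm AC-sat} (Lemma~\ref{lemma-prelim}) — equivalently, by computing through a shortest-path computation on the induced graph the tightest upper bound on $\mu(X)$ forced by $\beta_2$.

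The crucial structural observation is that every LSI in $\beta_1$ has the form $X \leq c$, so every disjunct $\mu_i(\beta_1)$ on the right-hand side of the containment entailment is a conjunction of LSIs only; in particular \emph{no RSI ever appears on the right-hand side}. I would first dispose of the degenerate case: if $\beta_2$ is contradictory then $Q_2(D)=\emptyset$ for every $D$ and containment is trivial (detectable by AC-sat), so assume $\beta_2$ is consistent. Next I would preprocess $Q_2$ so that Theorem~\ref{thm:denorm} becomes applicable. Since a conjunction of upper bounds $X \leq c$ can never force an equality, $\beta_1$ implies no ``$=$'' restriction; to guarantee the same for $\beta_2$, I would detect every equality forced by $\beta_2$ (these are exactly pairs of elements lying in a common strongly connected component of the induced graph of $\beta_2$, including variables equated to a constant) and merge the corresponding variables. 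The resulting query $\tilde Q_2$ is equivalent to $Q_2$ because the merged variables take equal values in every satisfying assignment, and its comparison set implies no ``$=$'' restriction. By Theorem~\ref{thm:denorm}, normalization is then unnecessary and $Q_2 \sqsubseteq Q_1$ is equivalent to the containment entailment $\beta_2 \Rightarrow \bigvee_i \mu_i(\beta_1)$ over the unnormalized containment mappings $\mu_i$ from $Q_{10}$ to $\tilde Q_{20}$.

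To collapse this disjunction, I would pass to the conjunction of containment implications obtained by distributing the right-hand side into CNF; each rhs-conjunct is then a \emph{disjunction of LSIs}. Applying Lemma~\ref{lemma-LSI-appendix-NP} to each such containment implication, and using that its right-hand side contains only LSIs so that the coupling case (ii) is vacuous, each containment implication is true if and only if one of its LSI disjuncts is already implied by $\beta_2$ — that is, one disjunct suffices for every containment implication. By the theorem equating single-disjunct sufficiency of the full entailment with single-disjunct sufficiency of every containment implication, one disjunct then suffices for the entailment itself, which is precisely the existence of a single $\mu_i$ with $\beta_2 \Rightarrow \mu_i(\beta_1)$; this closes the reduction.

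The main obstacle is the bookkeeping around normalization. Theorem~\ref{thm:denorm} requires that \emph{neither} comparison set imply an equality, whereas an arbitrary closed $\beta_2$ may well do so; the merging preprocessing is what makes the hypotheses hold, and I must argue carefully that it preserves both query equivalence and the LSI-only shape of the right-hand side. A related technical point is that an unnormalized containment mapping may send a bounded variable $X$ to a constant, turning $X \leq c$ into a ground comparison rather than a genuine LSI; I would handle this by discarding ground comparisons that are true and discarding any disjunct containing a false ground comparison, so that the surviving disjuncts meet the hypotheses of Lemma~\ref{lemma-LSI-appendix-NP}. Everything else — guessing the mapping and the two polynomial-time AC implication checks — is routine.
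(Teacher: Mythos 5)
Your proposal is correct and follows essentially the route the paper itself indicates: the paper defers this theorem to \cite{Afrati19}, noting only that it is ``proven based on the first theorem'' (the equivalence between one disjunct sufficing for the whole containment entailment and one disjunct sufficing for every containment implication), and your argument instantiates exactly that plan --- Lemma~\ref{lemma-LSI-appendix-NP} with an LSI-only right-hand side makes the coupling case vacuous, the equivalence theorem then collapses the entailment to a single mapping $\mu$ with $\beta_2 \Rightarrow \mu(\beta_1)$, and guess-and-verify with algorithm AC-sat gives membership in NP. Your extra bookkeeping (merging forced equalities so Theorem~\ref{thm:denorm} applies, and disposing of ground comparisons created by mappings onto constants) is sound and fills in details the paper leaves implicit.
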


\subsection{ACs over single-mapping variables}
\label{sec:single-mapping}

Suppose two CQACs $Q_1=Q_{10}+\beta_1$ and $Q_2=Q_{20}+\beta_2$. 
We consider
the containment entailment:

\begin{equation} \label{eq:cont-entail}
\beta_2\Rightarrow\mu_1(\beta_1)\vee\cdots\vee\mu_k(\beta_1)
\end{equation}
\noindent
where $\mu_1,\dots, \mu_k$ are all the containment mappings from$Q_{10}$ to $Q_{20}$.
Suppose  $\beta_1$ is such that $\beta_1 = \beta_{11}\wedge\beta_{12}$ where $\beta_{11}$ is the conjunction of ACs among and on the distinguished variables and $\beta_{12}$ the ACs on the nondistinguished variables, i.e., there is no AC between a head variable and a nondistinguished variable. In this special case, we observe that in the containment entailment, each term on the right hand side becomes:
\begin{align*}
\mu_i(\beta_1)=\mu_i(\beta_{11})\wedge\mu_i(\beta_{12}).
\end{align*}
However, $\mu_i(\beta_{11})$ is the same for every term on the right hand side of the entailment because all the containment mappings $\mu_i$ are the same as concerns the distinguished variables, by definition. Thus, applying the distributive law, we write the containment entailment:
\begin{align*}
\beta_2\Rightarrow\mu_1(\beta_{11})\wedge[\mu_1(\beta_{12})\vee\cdots\vee\mu_k(\beta_{12})].
\end{align*}
Consequently, the containment entailment is equivalent to conjunction of  the following two entailments:
\begin{align*}
\beta_{2}&\Rightarrow\mu_1(\beta_{11}). \\
\beta_{2}&\Rightarrow\mu_1(\beta_{12})\vee\cdots\vee\mu_k(\beta_{12}).
\end{align*}
Hereon, we will call the second entailment the \textit{body containment  entailment} (or simply containment entailment when confusion does not arise) and the first the \textit{head entailment}.

\subsubsection{Introducing single-mapping variables}
The above  analysis  is valid because of the fact that the variables in the head of $Q_1$ always map to the same variable in $Q_2$, independently of the containment mapping from $Q_1$ to $Q_2$. 
Such a property could be straightforwardly extended to other cases. There may exist more variables (besides the head variables) of the containing query that are
 always mapped on the same variables of the contained query, for any containment mapping.  We call them single-mapping variables and give the formal definition below.

\begin{definition}(single-mapping variables)
	\label{defn:single-map-vars}
	Let $Q_1=Q_{10}+\beta_1$, $Q_2=Q_{20}+\beta_2$ be two CQACs, such that there is at least one containment mapping from $Q_{10}$ to $Q_{20}$. Consider the set ${\cal M}$ of all the containment mappings from $Q_{10}$ to $Q_{20}$.  Each variable $X$ of $Q_1$ which is always mapped on the same variable of $Q_2$ (i.e., for each $\mu\in\CM$ the $\mu(X)$ always equals the same variable) is called a \textit{single-mapping variable with respect to $Q_2$}.
\end{definition}

Notice that the head variables of $Q_1$ are single-mapping variables with respect to any query. For another example, consider that there is a predicate $r$ such that  $Q_1$ has $g_{11}, g_{12},\dots , g_{1n}$ subgoals with predicate $r$ and $Q_2$ has a \textit{single} subgoal $g_2$ with predicate $r$. Since each of the $g_{11}, g_{12},\dots , g_{1n}$ subgoals maps on $g_2$, for every containment mapping from $Q_1$ to $Q_2$, the variables in $g_{11}, g_{12},\dots , g_{1n}$ subgoals are single-mapping variables.

Thus, we  extend the previous analysis  in the next subsection and show that the containment entailment can be decomposed into two parts in a more general case.

\subsection{The classes of queries} 
Here we define what it means for a pair of CQACs to be a  disjoint-AC pair wrto a set of single-mapping variables. 
Then we state  Proposition 	\ref{prop:single-map-vars} that says, that, for such a pair, the containment entailment can be broken in two entailments. Then we restrict our definition to containing queries that 
only allow SI on non-single-mapping variables and, in particular with only one RSI. This is the class of queries for which we prove in Section \ref{ssec:mot} that the containment test is in NP. We also define CQAC queries which we call RSI1+ queries and this is the class of queries for which the results of Section \ref{subsec-mcr-datalogAC} hold.
\begin{definition}
Let $Q_1=Q_{10}+\beta_1$, $Q_2=Q_{20}+\beta_2$ be two CQACs with closed ACs, such that there is at least one containment mapping from $Q_{10}$ to $Q_{20}$. Let ${\cal X}_1$ be the set of variables of $Q_1$. We assume  that the set , ${\cal X}_1$, of variables of $Q_1$ can be partitioned into the sets ${\cal X}_1^{sv}$, ${\cal X}_1^{nsv}$, s.t. ${\cal X}_1^{sv}\cap{\cal X}_1^{nsv}=\emptyset$, ${\cal X}_1^{sv}$ contains only single-mapping variables of $Q_1$ with respect to $Q_2$ and there are no ACs of $Q_1$ joining a variable in ${\cal X}_1^{sv}$ with a variable in ${\cal X}_1^{nsv}$. Then we say that ($Q_1$,$Q_2$) is a {\em disjoint-AC pair} with respect to ${\cal X}_1^{sv}$\footnote{When it is obvious from the context, we do not refer to  ${\cal X}_1^{sv}$.}. 
\end{definition}

\begin{proposition}
	\label{prop:single-map-vars}
	Let $Q_1=Q_{10}+\beta_1$, $Q_2=Q_{20}+\beta_2$ be two CQACs with closed ACs, such that there is at least one containment mapping from $Q_{10}$ to $Q_{20}$. Let ${\cal X}_1$ be the set of variables of $Q_1$.
Let ($Q_1$,$Q_2$) be a {\em disjoint-AC pair} with respect to ${\cal X}_1^{sv}$, where ${\cal X}_1^{sv}$ contains only single-mapping variables of $Q_1$ with respect to $Q_2$.

Then, the containment entailment $\beta_2\Rightarrow\mu_1(\beta_1)\vee\cdots\vee\mu_k(\beta_1)$  is true if and only if both the following two are true:
	
	\begin{itemize}
		\item 	$\beta_{2}\Rightarrow\mu_1(\beta_{11})$,  {\em head containment entailment} and
		\item $\beta_{2}\Rightarrow\mu_1(\beta_{12})\vee\cdots\vee\mu_k(\beta_{12})$, {\em body containment entailment}\footnote{We retain the same names as in the simple case above, for simplicity of reference; they are actually  {\em single-mapping entailment } and {\em non-single-mapping entailment}.}
	\end{itemize}
	where  $\mu_1,\dots \mu_k$ are all the containment mappings from $Q_{10}$ to $Q_{20}$ and $\beta_1 = \beta_{11}\wedge\beta_{12}$,
	where $\beta_{11}$
	includes all the ACs of $\beta_1$ over the variables in ${\cal X}_1^{sv}$, and $\beta_{12}$ includes all the ACs of $\beta_1$  over the variables in ${\cal X}_1^{nsv}={\cal X}_1 - {\cal X}_1^{sv}$.
\end{proposition}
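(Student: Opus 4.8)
The plan is to reproduce, in this more general setting, the head/body decomposition already carried out for head variables in Section~\ref{sec:single-mapping}, now exploiting the two defining features of a disjoint-AC pair: that ${\cal X}_1^{sv}$ consists of single-mapping variables, and that no AC of $\beta_1$ joins a variable of ${\cal X}_1^{sv}$ with a variable of ${\cal X}_1^{nsv}$. The entire argument is a purely propositional manipulation of the containment entailment, so no new machinery (canonical databases, enumeration of mappings) is needed beyond what is already assumed in the hypothesis; I take the containment entailment $\beta_2\Rightarrow\mu_1(\beta_1)\vee\cdots\vee\mu_k(\beta_1)$ as given and show it factors.

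First I would justify the split $\beta_1=\beta_{11}\wedge\beta_{12}$. Since ${\cal X}_1^{sv}$ and ${\cal X}_1^{nsv}$ partition ${\cal X}_1$ and, by the disjoint-AC hypothesis, no AC of $\beta_1$ mentions a variable from each block at once, every AC of $\beta_1$ has all of its variables inside one block of the partition (constants are irrelevant here). Hence each AC of $\beta_1$ falls either into $\beta_{11}$ (all variables in ${\cal X}_1^{sv}$) or into $\beta_{12}$ (all variables in ${\cal X}_1^{nsv}$), and together these exhaust $\beta_1$. Consequently, for each containment mapping $\mu_i$ we have $\mu_i(\beta_1)=\mu_i(\beta_{11})\wedge\mu_i(\beta_{12})$.

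Next I would use the single-mapping property to collapse the $\mu_i(\beta_{11})$ factors. By Definition~\ref{defn:single-map-vars}, each variable of ${\cal X}_1^{sv}$ is mapped to one fixed variable of $Q_2$ regardless of the chosen containment mapping; since $\beta_{11}$ mentions only such variables (and constants, which are fixed), $\mu_i(\beta_{11})$ is literally the same conjunction of ACs for every $i$, so I may write $\mu_i(\beta_{11})=\mu_1(\beta_{11})$ throughout. Setting $\gamma:=\mu_1(\beta_{11})$, the right-hand side of the containment entailment becomes $\bigvee_{i=1}^{k}\big(\gamma\wedge\mu_i(\beta_{12})\big)$, which by the distributive law equals $\gamma\wedge\big(\bigvee_{i=1}^{k}\mu_i(\beta_{12})\big)$.

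Finally I would conclude with the tautology $(A\Rightarrow B\wedge C)\equiv(A\Rightarrow B)\wedge(A\Rightarrow C)$ applied with $A=\beta_2$, $B=\gamma$, $C=\bigvee_i\mu_i(\beta_{12})$: the containment entailment holds iff both $\beta_2\Rightarrow\mu_1(\beta_{11})$ and $\beta_2\Rightarrow\mu_1(\beta_{12})\vee\cdots\vee\mu_k(\beta_{12})$ hold, which are exactly the head and body containment entailments claimed. The only point requiring genuine care -- and the sole place where the disjoint-AC hypothesis is actually used -- is the clean split of $\beta_1$ in the first step: were some AC to couple a single-mapping variable with a non-single-mapping one, then $\mu_i$ would act on that AC through both its fixed and its varying part simultaneously, the factor $\gamma$ could no longer be pulled out of the disjunction uniformly, and the distributive step would fail. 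Everything after that reduction is routine propositional reasoning.
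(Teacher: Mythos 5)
Your proposal is correct and follows essentially the same route as the paper: the paper proves this by the identical decomposition $\beta_1=\beta_{11}\wedge\beta_{12}$ (valid by the disjoint-AC condition), the observation that $\mu_i(\beta_{11})$ is identical for every mapping $\mu_i$ by the single-mapping property, the distributive law, and the splitting of the implication into the head and body entailments -- an argument the paper carries out in detail for head variables in Section~\ref{sec:single-mapping} and then declares to extend immediately to this setting. Your write-up merely makes explicit (and correctly) the details the paper leaves as ``an immediate consequence of the Definition of single-mapping variables.''
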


The proof of the Proposition~\ref{prop:single-map-vars} is an immediate consequence of the Definition of single-mapping variables and was analysed in details in the previous section. 

%
%



%
%

We define a CQAC CRSI1+ query, or simply RSI1+ query hereon, to be a query that:
\begin{enumerate}
	\item  It has only closed ACs.
	\item There are no ACs between a head variable and a nondistinguished variable.
	\item The ACs on  nondistinguished variables are semi-interval ACs  and there is a single right semi-interval AC.
\end{enumerate}
When there are no ACs on the head variables, then we say that this is a RSI1 query.

Notice that, given a query $Q_1$ which is a RSI1+ and any CQAC query $Q_2$ then the pair $(Q_1,Q_2)$ is  a disjoint-AC pair.
%
%

The following definitions formally describes the {\em RSI1 disjoint-AC pair}.
\begin{definition} Let $V_{sm}$ be a set of single-mapping variables in $Q_1$ and $V_{Q_1}$ be the set of variables of $Q_1$.
	A pair of CQACs  ($Q_1$, $Q_2$) is called {\em RSI1 disjoint-AC pair with respect to $V_{sm}$} if the following is true:
	\begin{enumerate}
		\item  Both $Q_1$ and $Q_2$ have only closed ACs.
		\item There are no ACs between the variables in $V_{sm}$ and the variables in $V_{Q_1}-V_{sm}$.
		\item The ACs in $Q_1$ are such that the following are true: 
		\begin{enumerate}
			\item The ACs on variables in $V_{Q_1}-V_{sm}$ are semi-interval (SI, for short),  and 
			\item there is a single right semi-interval (RSI) AC, among the ACs on variables in $V_{Q_1}-V_{sm}$.
		\end{enumerate}
	\end{enumerate}
\end{definition}

Notice that, given a query $Q_1$ which is a RSI1+ and any CQAC query $Q_2$ then the pair $(Q_1,Q_2)$ is  an RSI1 disjoint-AC pair.

We say that a body containment entailment is an {\em RSI1 entailment} if the ACs in each disjunct on the right hand side include only one RSI AC and the others are LSI ACs.

\begin{itemize}
	\item For every RSI1 disjoint-AC pair, the body containment entailment is an RSI1 entailment.
	\item In the next section, we consider  RSI1 disjoint-AC pairs of queries.
\end{itemize}

Naturally, because of symmetry, we can define LSI1 disjoint-AC pairs of quries where now only one LSI is allowed and all the results are also valid for this class.


%
%


\section{CQAC Query Containment Using Datalog}
\label{ssec:mot}

The main result of this section  is the following theorem:

\begin{theorem}
	\label{thm-np-complete}
	Consider a pair ($Q_1$, $Q_2$) which is a  RSI1 disjoint-AC pair of queries.
	Then testing containment of $Q_2$ to $Q_1$ is NP-complete.
\end{theorem}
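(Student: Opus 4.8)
The plan is to prove the two directions separately. NP-hardness is immediate: every conjunctive query is a CQAC with an empty set of comparisons, and for such queries containment is already NP-hard by Chandra and Merlin~\cite{Chandra77-1}; hence RSI1 disjoint-AC pairs inherit this lower bound. The substance of the theorem is therefore membership in NP, and for this I would transform the containment test into a containment of a conjunctive query in a semi-monadic Datalog query.

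First I would reduce the problem to checking the body containment entailment. Any equalities implied by $\beta_1$ or $\beta_2$ can be detected in polynomial time by algorithm AC-sat (Lemma~\ref{lemma-prelim}), which flags a strongly connected component of $\leq$-edges, and eliminated by merging the corresponding variables; after this preprocessing neither $\beta_1$ nor $\beta_2$ implies an equality. Since the pair has only closed ACs, Theorem~\ref{thm:denorm} then tells us that $Q_2 \sqsubseteq Q_1$ is equivalent to the containment entailment $\beta_2 \Rightarrow \mu_1(\beta_1) \vee \cdots \vee \mu_k(\beta_1)$ over all containment mappings, with no normalization needed. Because the pair is disjoint-AC with respect to the single-mapping variables, Proposition~\ref{prop:single-map-vars} splits this into the head containment entailment $\beta_2 \Rightarrow \mu_1(\beta_{11})$ and the body containment entailment $\beta_2 \Rightarrow \mu_1(\beta_{12}) \vee \cdots \vee \mu_k(\beta_{12})$. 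The head entailment is a single implication with a fixed right-hand side, so it is decidable in polynomial time by computing the closure of $\beta_2$ and running AC-sat on $\beta_2$ together with the negation of each conjunct of $\mu_1(\beta_{11})$. All the difficulty is concentrated in the body entailment, which is an RSI1 entailment: each disjunct $\mu_i(\beta_{12})$ is a conjunction of several LSIs together with exactly one RSI, each comparing a non-single-mapping variable with a constant.

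Next I would encode the body entailment as a CQ-in-Datalog containment. The key observation is that the disjunction over the (possibly exponentially many) mappings $\mu_i$ is exactly the kind of disjunction that recursion in Datalog expresses as a union of expansions. Concretely, I would build from $Q_1$ a Datalog query $P_1$ whose EDB predicates are those of $Q_1$ and whose IDB predicates encode the semi-interval constraints: for each constant $c$ appearing in $\beta_{12}$ I would introduce a monadic IDB predicate recording that a value lies in the left part $(\,\cdot\, \leq c)$ determined by the LSIs, and I would use the single RSI to drive one non-monadic IDB that threads the unique right-semi-interval constraint through the recursion. The LSIs, being unary upper-bound constraints, chain monotonically and are captured by monadic rules, while the solitary RSI contributes at most one non-monadic IDB to any rule body; this is precisely what makes $P_1$ semi-monadic rather than fully monadic. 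On the other side I would freeze $Q_2$ into a conjunctive query $C_2$ obtained from a canonical database of $Q_2$ with respect to $Q_1$, so that containment mappings from $Q_{10}$ into $Q_{20}$ correspond to homomorphisms witnessing membership of the frozen head in $P_1(C_2)$. I would then prove that the body containment entailment holds if and only if $C_2 \sqsubseteq P_1$, and invoke the NP-completeness of containment of a conjunctive query in a semi-monadic Datalog query (the intermediate result proved in Appendix~C) to place the body check in NP. Combining the polynomial head-entailment check with this NP body-entailment check keeps the whole test in NP.

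The main obstacle, as I see it, is establishing the correctness of the encoding in both directions, and in particular arguing that a single RSI suffices to stay semi-monadic. The soundness direction, that $C_2 \sqsubseteq P_1$ implies the body entailment, should follow because every datalog-expansion of $P_1$ that maps into $C_2$ decodes, through the LSI/RSI interpretation of the IDBs, into one of the disjuncts $\mu_i(\beta_{12})$ being forced by $\beta_2$. The harder completeness direction, that the body entailment implies $C_2 \sqsubseteq P_1$, requires Lemma~\ref{lemma-LSI-appendix-NP}: distributing the entailment into clauses yields containment implications whose right-hand sides are disjunctions of closed SIs, so a true implication is witnessed either by a single SI or by a \emph{coupling} of one LSI with one RSI. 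I would use this to show that whenever $\beta_2$ forces the disjunction, the coupling is realized by a bounded-depth expansion of $P_1$, so that the single RSI-carrying IDB is enough to witness membership; with two or more RSIs the coupling would have to be tracked across several non-monadic predicates at once, destroying semi-monadicity and pushing the problem back toward $\Pi^p_2$. Verifying that this bounded coupling is always captured by $P_1$, that is, that no unbounded propagation of the RSI constraint is ever needed, is the technical heart of the argument and where I would spend the most care.
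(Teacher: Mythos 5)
Your overall route is the paper's route: hardness from Chandra--Merlin \cite{Chandra77-1}, decomposition of the containment entailment into head and body entailments via Proposition~\ref{prop:single-map-vars}, a polynomial check for the head entailment, transformation of the reduced containing query into a Datalog query and of the contained query into a CQ, and an appeal to the NP-completeness of containment of a CQ in a semi-monadic Datalog query (Theorem~\ref{thm-semi-monadic-np}). Your preprocessing step that eliminates implied equalities before invoking Theorem~\ref{thm:denorm} is a sensible touch. However, what you have written is a plan, not a proof: the equivalence ``body containment entailment holds iff $C_2\sqsubseteq P_1$'' is exactly Theorem~\ref{thm:main123}, and it is where essentially all of the difficulty of the theorem resides. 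You correctly name the ingredients (Lemma~\ref{lemma-LSI-appendix-NP} and the coupling of one LSI with one RSI) and correctly flag this as the hard part, but you do not carry it out. The paper's argument (in~\ref{prf:thm-main123p}) is a double induction: for soundness, induction on the number of mapping-rule firings in a shortest derivation, showing that each computed $J$ fact certifies an implication of the form $\beta_2\Rightarrow\mu_1(\beta_1)\vee\cdots\vee\mu_k(\beta_1)\vee\neg e$; for completeness, induction on the number of mappings in a minimal true entailment, using Propositions~\ref{trick-pro} and~\ref{trick-pro1} to peel off one mapping at a time and to show that the $I$ facts needed to fire the corresponding mapping rule are computable via coupling and link rules. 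Nothing in your sketch substitutes for this inductive machinery.

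Second, your explanation of where the single-RSI hypothesis enters is wrong, and it would mislead you if you tried to execute your own construction. You claim the LSIs are captured by monadic IDBs while ``the solitary RSI contributes at most one non-monadic IDB,'' that this is ``precisely what makes $P_1$ semi-monadic,'' and that two or more RSIs would destroy semi-monadicity. In the paper's construction (Subsection~\ref{subsec-construct-Datalog}), every SI comparison --- LSI or RSI alike --- receives the same pair of semi-unary IDB predicates $I_{\theta c}(X,\overline{W})$ and $J_{\theta c}(X,\overline{W})$; semi-monadicity comes from the binding pattern $b\cdots bf$ (the head variables $\overline{W}$ stay bound through the recursion and only $X$ is free) and is completely independent of how many RSIs are present. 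The RSI1 restriction is used elsewhere: in the completeness direction of Theorem~\ref{thm:main123}, where Proposition~\ref{trick-pro} needs every disjunct to contain at least one LSI among its not-directly-implied ACs, so that Lemma~\ref{lemma-LSI-appendix-NP} (a disjunction of LSIs alone is implied only when a single one of them is) produces the contradiction that isolates a unique special AC per mapping. Indeed the paper explicitly remarks that the soundness direction of Theorem~\ref{thm:main123} holds for any number of LSIs and RSIs: with several RSIs the transformation still yields a semi-monadic program, but the equivalence between Datalog containment and the body entailment breaks, which is a different failure mode from the one you describe.
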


A byproduct of the proof of this theorem is a reduction of the CQAC containment problem, in this special case, to a containment problem where we check containment of a CQ to a Datalog query (i.e.,  both these queries have no ACs, their definitions use only relational atoms). This reduction is also important in other sections of this paper where we use it to construct MCRs for CQAC queries and views and prove that certain answers can be computed in polynomial time for certain cases of queries and views.

 Proposition~\ref{prop:single-map-vars} leads us to focus on the body containment entailment of the two CQAC queries. Thus, we ignore the ACs of the containing query that are on the single-mapping variables and call the resulting query the {\em reduced containing query}. For the first three subsections of this section, we will only refer to the reduced containing query, so,  we will say simply containing query. Note, here, that we do not ignore any AC from the contained query, since all the ACs of the contained query are required in order to check body containment entailment.

Thus this section has two large parts:
\begin{itemize}
	\item Transformation of the reduced containing query $Q_1$  to a Datalog query and transformation of the contained query $Q_2$  into a CQ query. This is presented in the three first subsections of this section.
	\item Proving that $Q_2$ is contained in the reduced containing query $Q_1$ if and only if their transformed CQ and Datalog queries, respectively, are contained in each other. The main results of this section are stated formally  in 
Subsection \ref{subsec-main-proof-sec5}.
\end{itemize}

Theorem \ref{thm-np-complete} extends significantly the corresponding result in  \cite{AfratiLM06}.
The transformations and the proof are along similar lines as the transformations  and the proof explained in \cite{AfratiLM06} with many modifications to capture the new features.
Algorithm  AC-sat presented in Section \ref{subsec-algo} is missing from \cite{AfratiLM06}.
This algorithm  offers an elegant way to prove  technical  preliminary results about implications involving arithmetic comparisons. 

The structure of this section is as follows: In Subsection
\ref{subsec-tree-like}, two implications are analyzed that will be met when we prove the main result in this section later on. These implications are simply implications that involve ACs, their relation to the containment problem is that they have the same structure as the containment entailment. Thus, we provide some explanations in Subsection
\ref{subsec-tree-like} as to the reason these results lead towards the idea of using the Datalog transformation of the containing query.
Then in Subsections \ref{subsec-contained-trans} and
\ref{subsec-construct-Datalog} we present the transformations of the reduced containing query and the contained query respectively.

Subsection \ref{subsec-main-proof-sec5} contains the statement of the main results in this section. 
Subsection \ref{subsec-examples} contains examples of the transformations presented in Subsections \ref{subsec-contained-trans} and
\ref{subsec-construct-Datalog}.

Finaly, in Subsection
\ref{subsec-simplefacts}, we present preliminary partial results and intuition for the proof of the main technical result. The proof itself is presented in \ref{prf:thm-main123p}.
%

\subsection{The tree-like structure of the containment entailment}
\label{subsec-tree-like}

First, as Theorem 2.4 shows, the query normalization is not needed for testing containment into this setting.
The following proposition is where the class of RSI1s comes useful.


\begin{proposition}
	\label{trick-pro}
	Let $\beta$ be a conjunction of closed ACs which is consistent, and each $\beta_1, \beta_2, \ldots ,\beta_k$  be a conjunction of closed
	RSI1s. Suppose the following is true:
	$$\beta \Rightarrow \beta_1 \vee \beta_2 \vee \cdots \vee \beta_k.$$
	Then there is a $\beta_i$ (w.l.o.g. suppose it is $\beta_1$) such that either  of the following two happens:
	
	%
	%
	%
	%
	%
	%
	
	\begin{enumerate}[label=(\roman*)]
		\item $\beta \Rightarrow \beta_1,$ \textbf{or}
		\item there is an AC $e$ in $\beta_1$ such that the following are true:
		\begin{enumerate}[label=(\alph*)]
			\item $\beta \wedge \neg e \Rightarrow   \beta_2 \vee \cdots \vee \beta_k$
			(or equivalently,
			$\beta  \Rightarrow   \beta_2 \vee \cdots \vee \beta_k \vee e$),
			\item $\beta \Rightarrow \beta_1 \vee \neg e$, and
			\item all the other ACs, besides $e$,  in $\beta_1$ are directly implied by $\beta$.
		\end{enumerate}
	\end{enumerate}
	
\end{proposition}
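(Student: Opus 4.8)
The plan is to first notice that conditions (a) and (b) are essentially free, so that the whole content of the statement sits in condition (c) together with the dichotomy (i)/(ii). Indeed, since $e$ is one of the conjuncts of $\beta_1$ we have $\beta_1 \Rightarrow e$, so the hypothesis $\beta \Rightarrow \beta_1 \vee \beta_2 \vee \cdots \vee \beta_k$ at once gives $\beta \Rightarrow e \vee \beta_2 \vee \cdots \vee \beta_k$, which is exactly the second (equivalent) form of (a). Moreover, once (c) holds every conjunct of $\beta_1$ other than $e$ is implied by $\beta$, so under $\beta$ we have $\beta_1 \equiv e$; hence $\beta_1 \vee \neg e \equiv e \vee \neg e$ is vacuously true under $\beta$, which is (b). Consequently it suffices to establish the following counting statement: some disjunct (relabelled $\beta_1$) has at most one of its ACs failing to be implied by $\beta$. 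If none fails then $\beta \Rightarrow \beta_1$ and we are in case (i); if exactly one AC $e$ fails, then (c) holds for that $e$ and, by the above, so do (a) and (b), putting us in case (ii).

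So the task reduces to that counting statement, which I would prove by contradiction. Assume instead that every disjunct $\beta_i$ contains at least two ACs not implied by $\beta$. Because each $\beta_i$ contains exactly one RSI (the rest of its ACs being LSIs), among any two non-implied ACs of $\beta_i$ at least one must be an LSI; so fix a non-implied LSI $\ell_i$ in each $\beta_i$. Now observe that for any selection of a single conjunct $a_i$ from each $\beta_i$ we have $\beta_i \Rightarrow a_i$, hence $\beta_1 \vee \cdots \vee \beta_k \Rightarrow a_1 \vee \cdots \vee a_k$, and combining this with the hypothesis gives $\beta \Rightarrow a_1 \vee \cdots \vee a_k$ for every such choice. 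Taking $a_i = \ell_i$ yields $\beta \Rightarrow \ell_1 \vee \cdots \vee \ell_k$, a disjunction consisting only of closed LSIs.

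The final and crucial step is to apply Lemma~\ref{lemma-LSI-appendix-NP} to this last implication: since $\beta$ is consistent and every $\ell_i$ is a closed semi-interval comparison, the lemma forces one of two alternatives — either some single $\ell_i$ is already implied by $\beta$, or there is a pair of coupling ACs, one LSI and one RSI, whose disjunction is implied. As $\ell_1 \vee \cdots \vee \ell_k$ contains no RSI, the coupling alternative cannot occur, so some single $\ell_{i_0}$ must satisfy $\beta \Rightarrow \ell_{i_0}$, contradicting the choice of $\ell_{i_0}$ as non-implied. This contradiction proves the counting statement, and hence the proposition. I expect the main obstacle to be exactly this reliance on the RSI1 restriction: the scheme works only because each disjunct carries a single RSI, which is precisely what guarantees that the non-implied AC picked from each $\beta_i$ can always be chosen to be an LSI, leaving Lemma~\ref{lemma-LSI-appendix-NP} with no coupling option. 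One should also note the degenerate disjuncts that consist only of the RSI (no LSIs): such a disjunct has at most one AC and therefore satisfies the counting statement outright, so it is handled directly by the dichotomy rather than by the contradiction argument.
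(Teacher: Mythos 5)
Your proof is correct and takes essentially the same route as the paper's: both reduce the proposition to showing that some disjunct has at most one conjunct not directly implied by $\beta$, and both establish this by contradiction, selecting a non-implied closed LSI from each disjunct and invoking Lemma~\ref{lemma-LSI-appendix-NP}, whose coupling alternative is blocked because the resulting disjunction contains no RSI. Your observation that (a) is free (it follows at once from $\beta_1 \Rightarrow e$ and the hypothesis) is a slight streamlining of the paper's derivation, which instead expands $\neg\beta_1$ and uses (c) to discard the other disjuncts, but the substance of the argument is identical.
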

\begin{proof}
	Suppose there is no $\beta_i$ such that $$\beta \Rightarrow \beta_i$$
	Then we claim that there is
	a $\beta_i$ (w.l.o.g.  suppose it is $\beta_1$) such that all the ACs in $\beta_1$ are directly implied by $\beta$ (i.e., $\beta \Rightarrow e_i$ if $e_i$ is an AC in $\beta_1$), except for one AC $e$.,  i.e., we claim that also the following is true:
	$$\beta \Rightarrow \beta_1 \vee \neg e$$
	Towards  contradiction,
	suppose that for all the $\beta_i$s there are at least two ACs that are not directly implied by $\beta$. Since all
	the $\beta_i$'s are RSI1s, each $\beta_i$ has at least one LSI that is not directly implied. If we take all these LSI's after applying the distributive law and converting the right-hand side from a disjunction of conjunctions to a conjunction of disjunctions, then we will have a conjunct that contains only LSIs, none of which is directly implied by $\beta$.
	We  need to show that this is impossible --- i.e., it is not true that $\beta \Rightarrow ac_1 \vee ac_2 \cdots $ if
	none of the LSI $ac_i$ is directly implied by $\beta$. This is proved in Lemma \ref{lemma-LSI-appendix-NP}.
	
	
	Now we write equivalently the implication in the statement of the proposition as:
	$$\beta \wedge \neg \beta_1\imply
	\beta_2 \vee \beta_3 \cdots \vee \beta_k,$$ or
	equivalently (assuming $\beta_1=e_1 \wedge \cdots \wedge e_t$,
	where the $e_i$s are ACs)
	$$(\beta \wedge \lnot e_1)
	\vee (\beta \wedge \lnot e_2) \vee \cdots \vee
	(\beta \wedge \lnot e_t) \imply \beta_2 \vee \beta_3 \cdots \vee \beta_k.$$
	Assume w.l.o.g. that $e=e_1$. Since each $e_i$, with the exception of $e_1$, is
	entailed by $\beta$, each disjunct with the exception of the first one in the left-hand side
	is always false. Hence, the latter entailment yields:\\
	$~~~~~~~~~~~~~~~~~~~~~~~~~~~~~~~~~~~~~~\beta \wedge \neg e\imply \beta_2 \vee \beta_3 \cdots
	\vee \beta_k.$
\end{proof}

%
%

Proposition
\ref{trick-pro}  begins to show a tree-like structure of the containment entailment and it gives the first intuition for constructing a Datalog query from the containing query that will help in deciding query containment. The following example gives an illustration of this intuition.

\begin{example}
	\label{ex:description-of-prop52}
	Let us consider the following two Boolean queries.
	\begin{center}
		\begin{tabular}{lll}
			$Q_1: q()$ & $\symif$ &$a(X,Y,Z),X\leq 8,Y\leq 7,Z\geq 6.$ \\
			$Q_2: q()$ & $\symif$ &$a(X,Y,Z),a(U_1,U_2,X),a(V_1,V_2,Y),$\\
			& &$ a(Z,Z_1,Z_2),a(U_1',U_2',U_1),a(V_1',V_2',V_1),$\\
			& &$ U_1'\leq 8,U_2'\leq 7,U_2\leq 7,V_1'\leq 8,$\\
			& &$ V_2'\leq 7,V_2\leq 7,Z_1\leq 7,Z_2\geq 6.$\\
		\end{tabular}
	\end{center}
	
	\begin{figure}
		\centering
		\includegraphics[width=0.5\linewidth]{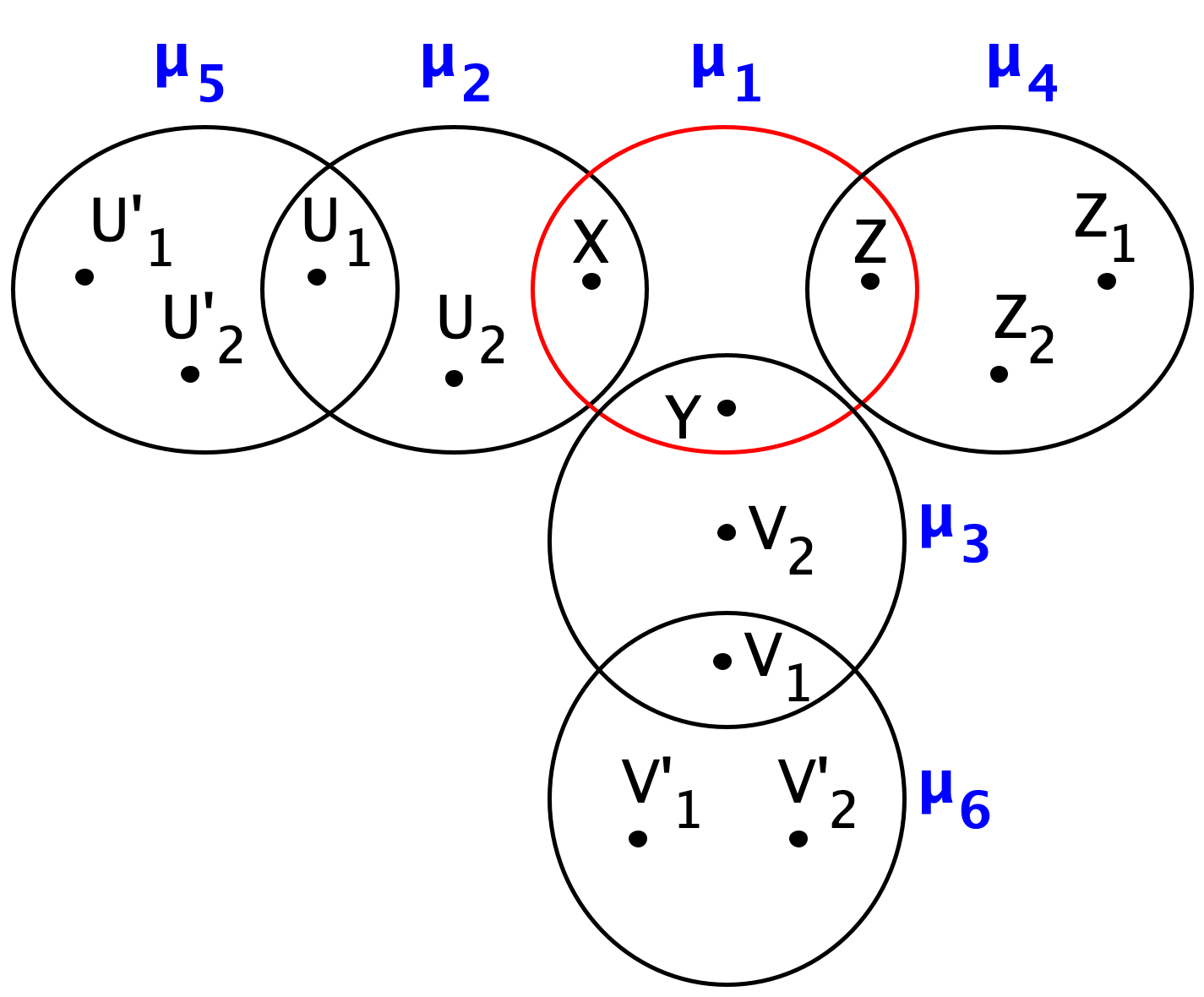}
		\caption{Illustration of containment entailment of Example~\ref{ex:description-of-prop52}}
		\label{fig:intuition-bodyq2}
	\end{figure}
	
	The query $Q_2$ is contained in the query $Q_1$. To verify this, notice that there are $6$ containment mappings from\footnote{we always mean containment mappings from the relational subgoals of $Q_1$ to the relational subgoals of $Q_2$} $Q_1$ to $Q_2$. These mappings are given as follows: $\mu_1:$ $(X,Y,Z)\rightarrow(X,Y,Z)$, $\mu_2:$ $(X,Y,Z)\rightarrow(U_1,U_2,X)$, $\mu_3:$ $(X,Y,Z)\rightarrow(V_1,V_2,Y)$, $\mu_4:$ $(X,Y,Z)\rightarrow(Z,Z_1,Z_2)$, $\mu_5:$ $(X,Y,Z)\rightarrow(U_1',U_2',U_1)$, and $\mu_6:$ $(X,Y,Z)\rightarrow(V_1',V_2',V_1)$. After replacing the variables as specified by the containment mappings, the query entailment is $\beta\Rightarrow\beta_1\vee\beta_2\vee\beta_3\vee\beta_4\vee\beta_5\vee\beta_6$, where:
	
	\begin{center}
		\begin{tabular}{ll}
			\multicolumn{2}{l}{$\beta:$ $U_1'\leq 8\;\wedge U_2'\leq 7\;\wedge U_2\leq 7\;\wedge V_1'\leq 8\;\wedge V_2'\leq 7\;\wedge V_2\leq 7\;\wedge Z_1\leq 7\;\wedge Z_2\geq 6$.}\\
			$\beta_1:$ $X\leq 8\;\wedge Y\leq 7\;\wedge Z\geq 6$. & $\beta_4:$ $Z\leq 8\;\wedge Z_1\leq 7\;\wedge Z_2\geq 6$.\\
			$\beta_2:$ $U_1\leq 8\;\wedge U_2\leq 7\;\wedge X\geq 6$. & $\beta_5:$ $U_1'\leq 8\;\wedge U_2'\leq 7\;\wedge U_1\geq 6$.\\
			$\beta_3:$ $V_1\leq 8\;\wedge V_2\leq 7\;\wedge Y\geq 6$. & $\beta_6:$ $V_1'\leq 8\;\wedge V_2'\leq 7\;\wedge V_1\geq 6$.\\
		\end{tabular}
	\end{center}
	
%
	We now refer to Figure~\ref{fig:intuition-bodyq2} to offer some intuition about  and visualization on Proposition \ref{trick-pro} using the above queries. The circles in the figure represent the mappings $\mu_1, \dots, \mu_6$, and the dots are the variables of $Q_2$. Notice now the intersections between the circles.
	Proposition \ref{trick-pro} refers to these intersections, such as the one between $\mu_3$ and $\mu_6$ (or,  the one between $\mu_2$ and $\mu_5$).
	
	The AC $V_1\geq 6$ ($V_1$ is included in the intersection between $\mu_3$ and $\mu_6$) is the one that is not directly implied  by $\beta$, as stated in the case (ii) of the Proposition \ref{trick-pro}. In particular, it is easy to verify that the following are true:
	
	\begin{itemize}
		\item $\beta\wedge \neg (V_1\geq 6)\Rightarrow\beta_1\vee\beta_2\vee\beta_3\vee\beta_4\vee\beta_5$.
		\item $\beta\Rightarrow\beta_6\vee \neg (V_1\geq 6)$ (i.e., $\beta\Rightarrow(V_1'\leq 8\;\wedge V_2'\leq 7\;\wedge V_1\geq 6)\vee \neg (V_1\geq 6)$).
		\item $\beta\Rightarrow(V_1'\leq 8)$ and $\beta\Rightarrow(V_2'\leq 7)$.
	\end{itemize}

	
\end{example}

Proposition \ref{trick-pro1} is a generalization of
Proposition
\ref{trick-pro}.

\begin{proposition}
	\label{trick-pro1}
	Let $\beta$ be a conjunction of closed SI ACs which is consistent, and $\beta_1, \beta_2, \ldots ,\beta_k$ each be a conjunction of closed
	RSI1s (i.e., in each conjunct there is only one RSI and the rest are LSI ACs). Suppose the following is true:
	$$\beta \Rightarrow \beta_1 \vee \beta_2 \vee \cdots \vee \beta_k \vee e_1 \vee e_2 \vee \cdots$$
	where $e_i$s are closed SIs such that the following implication is not true: $\beta \Rightarrow e_1\vee e_2 \vee \cdots$.
	Then there is a $\beta_i$ (w.l.o.g. suppose it is $\beta_1$) such that either of the following two happen:
	
	\begin{enumerate}[label=(\roman*)]
		\item  $\beta \Rightarrow \beta_1 \vee e_1 \vee e_2 \vee \cdots,$ \textbf{or}
		\item there is an AC $e$, called {\em special for this mapping}, in $\beta_1$ such that the following  are true:
		\begin{enumerate}[label=(\alph*)]
			\item $\beta \wedge \neg e \Rightarrow   \beta_2 \vee \cdots \vee \beta_k \vee e_1 \vee e_2 \vee \cdots$
			, or equivalently,
			$$\beta  \Rightarrow   \beta_2 \vee \cdots \vee \beta_k \vee e \vee e_1 \vee e_2 \vee \cdots.$$
			\item $\beta \Rightarrow \beta_1 \vee \neg e \vee e_1 \vee e_2 \vee \cdots $  and
			\item all the other ACs $ac_j$ in $\beta_1$,  with $j=1,2,\ldots $,  besides $e$, are either  directly implied by $\beta$ or coupled with one of the $e_i$s for $\beta$ i.e., either $\beta \Rightarrow ac_j$ or
			$\beta\Rightarrow e_i \vee ac_j$.
		\end{enumerate}
	\end{enumerate}

\end{proposition}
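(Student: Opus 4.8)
The plan is to follow the template of Proposition~\ref{trick-pro}, treating the extra disjuncts $e_1 \vee e_2 \vee \cdots$ as a fixed ``tail'' $E$ that rides along on the right-hand side of every entailment. First I would dispose of case~(i): if for some index $i$ we already have $\beta \Rightarrow \beta_i \vee E$, then renaming that index to $1$ finishes the proof. So assume from now on that $\beta \not\Rightarrow \beta_i \vee E$ for every $i$; the goal is then to produce the $\beta_1$ and the special AC $e$ of case~(ii). The heart of the matter is the following claim, exactly parallel to the claim inside the proof of Proposition~\ref{trick-pro}: there is some $\beta_i$ (call it $\beta_1$) all of whose ACs except one are \emph{handled} by $\beta$, where I call an AC $ac$ of $\beta_i$ handled if either $\beta \Rightarrow ac$ (directly implied) or $\beta \Rightarrow e_j \vee ac$ for some $e_j$ (coupled). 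Granting this claim, the single unhandled AC is the special $e$, and conditions (a), (b), (c) fall out almost mechanically.

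To prove the claim I would argue by contradiction, assuming that \emph{every} $\beta_i$ has at least two unhandled ACs. Since each $\beta_i$ is an RSI1 --- one RSI and the rest LSIs --- having two unhandled ACs forces at least one unhandled \emph{LSI}; fix one such LSI $l_i$ in each $\beta_i$. Now distribute the right-hand side of the hypothesis $\beta \Rightarrow \beta_1 \vee \cdots \vee \beta_k \vee E$ from a disjunction-of-conjunctions into a conjunction-of-disjunctions; $\beta$ must imply each resulting conjunct, and in particular the one that selects $l_i$ out of each $\beta_i$, giving $\beta \Rightarrow l_1 \vee \cdots \vee l_k \vee e_1 \vee e_2 \vee \cdots$. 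This right-hand side is a disjunction of closed SIs and $\beta$ is consistent, so Lemma~\ref{lemma-LSI-appendix-NP} applies: either a single disjunct is implied by $\beta$, or two of them --- one LSI, one RSI --- are coupled. I would then rule out every outcome. A single $l_i$ implied means $l_i$ is directly implied, contradicting that $l_i$ is unhandled. A single $e_j$ implied, or a coupling among the $e$'s, gives $\beta \Rightarrow e_1 \vee e_2 \vee \cdots$, which the hypothesis explicitly forbids. Finally, since every $l_i$ is an LSI, any genuine LSI/RSI coupling must draw its RSI from the $e$'s; if its LSI partner is some $l_i$ then $\beta \Rightarrow l_i \vee e_j$, so $l_i$ is coupled --- again contradicting that $l_i$ is unhandled. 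Every case is impossible, which proves the claim.

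Having fixed $\beta_1$ and its unique unhandled AC $e$, the three subconditions follow. Condition (c) is just the statement of the claim. For (b), note that under $\beta$ together with $\neg e_1 \wedge \neg e_2 \wedge \cdots$ and $e$ itself, every AC of $\beta_1$ holds --- the directly-implied ones because $\beta$ holds, the coupled ones because their partner $e_j$ is false, and $e$ by assumption --- so $\beta_1$ holds, which is exactly $\beta \Rightarrow \beta_1 \vee \neg e \vee e_1 \vee e_2 \vee \cdots$. For (a), observe that $e$ is a conjunct of $\beta_1$, so $\neg e \Rightarrow \neg \beta_1$; feeding $\neg \beta_1$ into the hypothesis $\beta \Rightarrow \beta_1 \vee \beta_2 \vee \cdots \vee \beta_k \vee E$ removes the $\beta_1$ disjunct and yields $\beta \wedge \neg e \Rightarrow \beta_2 \vee \cdots \vee \beta_k \vee e_1 \vee e_2 \vee \cdots$. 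I would also remark that $\beta_1$ cannot have \emph{zero} unhandled ACs in this branch: if it did, the (b)-style argument with no $e$ would give $\beta \Rightarrow \beta_1 \vee E$, i.e.\ case~(i), which we assumed away --- so exactly one unhandled AC remains and the special $e$ is well defined.

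I expect the main obstacle to be the case analysis after invoking Lemma~\ref{lemma-LSI-appendix-NP}: one must check that the presence of the tail $E$ never produces a new, non-contradictory outcome. The clean way to see this is the uniform observation that \emph{any} outcome of the lemma that involves one of the $e_j$'s on its own, or two of the $e$'s together, collapses to $\beta \Rightarrow e_1 \vee e_2 \vee \cdots$, which is precisely the degeneracy excluded in the statement --- so the $e$'s can only appear as the RSI half of a coupling whose LSI half is one of the $l_i$'s, and that is exactly the ``coupled'' escape hatch that the definition of \emph{handled} is designed to absorb.
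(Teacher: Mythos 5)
Your proof is correct and takes essentially the same approach as the paper's: the same key claim (some $\beta_i$ has all of its ACs except one either directly implied by $\beta$ or coupled with one of the $e_j$'s), proved by contradiction by applying the distributive law, isolating a conjunct consisting only of LSIs plus the tail, and invoking Lemma~\ref{lemma-LSI-appendix-NP}, followed by the same derivations of (a), (b), (c). Your spelled-out case analysis after the lemma, the direct derivation of (a) via $\neg e \Rightarrow \neg\beta_1$, and the explicit remark that $\beta_1$ must have exactly one unhandled AC are minor (and welcome) clarifications of steps the paper leaves terse, not a different argument.
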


\begin{proof}
	
	Suppose there is no $\beta_i$ such that $$\beta \Rightarrow \beta_i\vee e_1\vee e_2 \vee \cdots $$
	Then we claim that there is
	a $\beta_i$ (w.l.o.g.  suppose it is $\beta_1$) such that  all the ACs $a_i$  in $\beta_1$ are such that  $a_i\vee e_1\vee \cdots$ is directly implied by $\beta$ (i.e., $\beta \Rightarrow a_i\vee e_1\vee \cdots$ if $a_i$ is an AC in $\beta_1$), except for one AC $a_1=e$ (wlog suppose this is $a_1$),  i.e., we claim that  the following is true for $e$:
	$$\beta \Rightarrow \beta_1 \vee \neg e\vee e_1\vee e_2 \vee \cdots$$
	Towards  contradiction,
	suppose that for all the $\beta_i$s there are at least two ACs  (say  AC $a^{i12}$ is such an AC) such that the following does not happen:
	\begin{equation} \label{eq-1a}
	\beta \Rightarrow    a^{i12}\vee e_1\vee e_2 \vee \cdots
	\end{equation}
	\noindent
	
	Since all
	the $\beta_i$'s are RSI1s, each $\beta_i$ has at least one LSI for which the implication \ref{eq-1a} is not true. If we take all these LSI's (after applying the distributive law and converting the right-hand side from a disjunction of conjunctions to a conjunction of disjunctions), then we will have a conjunct that contains only LSIs, none of which is such that
	the implication \ref{eq-1a} is true.
	Then we will have a case like in Lemma \ref{lemma-LSI-appendix-NP}.
	According to Lemma \ref{lemma-LSI-appendix-NP}, there are two cases: a) There is a single SI on the rhs  which is implied by $\beta$ or b) there are two SI in the rhs whose disjunction is implied, of which one is LSI and one is RSI. Thus, in both cases, we have only one LSI, say it is $a_{LSI}$ such that
	$$\beta\Rightarrow a_{LSI} \vee e_1\vee e_2 \vee \cdots.$$
	This is a contradiction to our assumption.

	We write equivalently the implication in the statement of the proposition as:
	$$\beta \wedge \neg [\beta_1\vee e_1\vee e_2 \vee \cdots ]\imply
	\beta_2 \vee \beta_3 \cdots \vee \beta_k$$ or
	equivalently (assuming $\beta_1=a_1 \wedge \cdots \wedge a_t$,
	where the $e_i$s are ACs)
	$$(\beta \wedge \neg a_1\wedge \neg e_1 \wedge \neg e_2 \wedge \cdots)
	\vee (\beta \wedge \neg a_2\wedge \neg e_1 \wedge \neg e_2 \wedge \cdots) \vee \cdots \vee
	(\beta \wedge \neg a_t\wedge \neg  e_1 \wedge \neg e_2 \wedge \cdots)$$ $$ \imply \beta_2 \vee \beta_3 \cdots \vee \beta_k$$
	Assume w.l.o.g. that $e=a_1$. Since each $a_i\vee e_1\vee \cdots$, with the exception of $a_1$, is
	entailed by $\beta$, each disjunct with the exception of the first one in the left-hand side
	is always false. Hence, the latter entailment yields:\\
	~~$~~~~~~~~~~~~~~~~~~~~~~~~~~~~~~~~~~~\beta \wedge \neg e\imply \beta_2 \vee \beta_3 \cdots
	\vee \beta_k\vee e_1\vee e_2 \vee \cdots.$
\end{proof}

\begin{example}
	\label{ex:trick-pro1}
	Continuing Example~\ref{ex:description-of-prop52}, we will use the Proposition \ref{trick-pro1} to see how the AC on the variable $V_1$ is related to other ACs on the same variable in another mapping (here it is the mapping $\mu_4$). To see that, notice that the AC $V_1\geq 6$ is the special AC for $\mu_6$ and it is coupled with the AC $V_1\leq 8$ in $\beta_3$ (i.e., $\beta\Rightarrow (V_1\geq 6) \vee (V_1\leq 8)$). In particular, as we saw in  Example~\ref{ex:description-of-prop52}, the following is true.
	$$\beta\Rightarrow\beta_1\vee\beta_2\vee\beta_3\vee\beta_4\vee\beta_5\vee (V_1\geq 6).$$
	
	Then, according to the Proposition~\ref{trick-pro1} (where $e_1=V_1\geq 6$), there is $\beta_i$ (in this case, $\beta_3$ is such a $\beta_i$) such that the following are true (case (ii) in the proposition):
	\begin{itemize}
		\item $\beta\wedge \neg (Y\geq 6)\Rightarrow\beta_1\vee\beta_2\vee\beta_4\vee\beta_5\vee (V_1\geq 6)$.
		\item $\beta\Rightarrow\beta_3\vee \neg (Y\geq 6)\vee (V_1\geq 6)$; i.e., 
		$$\beta\Rightarrow(V_1\leq 8\;\wedge V_2\leq 7\;\wedge Y\geq 6)\vee \neg (Y\geq 6)\vee (V_1\geq 6).$$
		\item $\beta\Rightarrow(V_2\leq 7)$, while $V_1\leq 8$ is coupled with $V_1\geq 6$.
	\end{itemize}
	
	
\end{example}

We give a first glance of what is going to happen in the rest of this section. In particular, we do  the following:

%
%
%

\begin{enumerate}
	\item We transform the containing query $Q_1$ into a Datalog query $Q^{Datalog}_{Q_1}$.
	\item We transform the contained query into a CQ, $Q^{CQ}_{Q_2}$. 
	\item The above two transformations are done by keeping the relational subgoals of $Q_1$ ($Q_2$, respectively) and encoding  the
	arithmetic comparisons into  relational predicates.
	\item
	We prove (Theorem \ref{thm:main}) that $Q_2$ is contained in $Q_1$ if and only if  $Q^{CQ}_{Q_2}$ is contained in $Q^{Datalog}_{Q_1}$.
\end{enumerate}

Intuitively, using those transformations we aim to replace the ACs with relations; hence, transform the problem of CQAC containment to a containment problem of a Datalog query in a CQ. One might wonder why the transformation of the containing query to a Datalog query is required. The answer to this question is based on the containment entailment. The disjunction in the right-hand-side implies arbitrary combinations of the ACs, since the contained query can be arbitrarily long independently of the size of the containing query. Hence, the program-expansion of $Q^{Datalog}_{Q_1}$ that verifies the containment can be 
 arbitrarily long, depending on the size of the contained query.

\subsection{Construction of Datalog Query for Containing Query}
\label{subsec-construct-Datalog}
In this subsection, we describe the construction of a Datalog query for a given RSI1 query $Q$.  

The Datalog query has two kinds of rules: The rules that depend only on the containing query, and we call them {\em basic rules},  and the rules that also take into account the contained query, and we call them {\em dependant rules}.

In various places, in order to illustrate the construction, we will use the query in the following running example.

\begin{example}
	\label{ex:running41}
	The following query $Q_1$ is an  RSI1 query:
	\begin{center}
		\begin{tabular}{lll}
			$Q_1({W_1,W_2})$ & $\symif$ &$a(W_1,W_2,Y), e(X,Y),e(Y,Z),X\geq 5,Z\leq 8.$ \\
		\end{tabular}
	\end{center}
	For simplicity in the notation we will denote by $\overline{W}$  the vector $W_1,W_2$ of head variables. Thus, we are writing the query as:
	\begin{center}
		\begin{tabular}{lll}
			$Q_1(\overline{W})$ & $\symif$ &$a(\overline{W},Y),e(X,Y),e(Y,Z),X\geq 5,Z\leq 8.$ \\
		\end{tabular}
	\end{center}
\end{example}

{\bf Construction of the basic rules} $Q_1^{Datalog}:$ We construct three kinds
of rules, {\em mapping rules, coupling rules,} and a single {\em query rule}.

First, we introduce the EDB predicates and the IDB predicates that we use and describe how we construct them.
The EDB predicates are all the predicates from the relational subgoals of $Q_1$ and an extra binary predicate $U$. Intuitively, $U(X,Y)$ encodes the AC $X\leq Y$.
Now, the IDB predicates are as follows:

\begin{enumerate}
	\item
	We introduce new semi-unary IDBs,\footnote{We call them semi-unary for reasons that will become apparent later during the proof.}  two pairs of IDBs for each
	constant $c$ in $Q_1$ (intuitively, that compares a non-single-mapping variable to this constant), namely $I_{\geq c}$, $I_{\leq c}$ and
	$J_{\geq c}$, $J_{\leq c}$. Intuitively, these predicates have as arguments the vector $\overline{W}$ of variables in the head of the query $Q_1$ and another variable $X$.
	\item
	For each AC $X\theta c$, we construct  the IDB predicate atoms  $I_{\theta
		c}(X,\overline{W})$  and  $J_{\theta c}(X,\overline{W})$, where $\theta$ is either $\leq$ or $\geq$.
	
	\item
	For each AC $X\theta c$, considering  the IDB predicate atom $I_{\theta
		c}(X,\overline{W})$ ($J_{\theta c}(X,\overline{W})$, respectively), we refer to
	$J_{\theta c}(X,\overline{W})$
	($I_{\theta
		c}(X,\overline{W})$, respectively),
	as the {\em associated $I$-atom } ( {\em associated $J$-atom }
	respectively) of $X\theta c$, and we refer to  $X\theta c$  as the {\em associated AC} of
	$I_{\theta
		c}(X,\overline{W})$ ($J_{\theta c}(X,\overline{W})$, respectively). We also refer to $I_{\theta
		c}(X,\overline{W})$ as the associated $I$-atom of $J_{\theta c}(X,\overline{W})$ and vice versa.
	
	\item
	We have also a query IDB predicate which is denoted $Q_1^{Datalog}(\overline{W})$
\end{enumerate}

Now, we describe the construction of the basic rules of the Datalog query which use the EDB predicates of the containing query and are as follows.
We call them basic because they do not depend on the ACs of the contained query.

\begin{enumerate}
	\item
	The {\em query rule} copies into its body all the relational subgoals of $Q_1$, and
	replaces each AC subgoal of $Q_1$ that compares a non-single-mapping variable to a constant by its associated $I$-atom.  The head of this rule is the same as the head of the query $Q_1$.
	\item
	We
	get one {\em mapping rule} for each SI arithmetic comparison  $e$ in $Q_1$ which is on a non-single-mapping variable.
	The body of each mapping rule is a copy of the body of the query rule, except that the
	$I$ atom associated with $e$ is deleted. The head is the  $J$ atom
	associated with $e$.
	\item
	For every pair of constants $c_1 \leq  c_2$ used in $Q_1$, we
construct three {\em coupling rules}.

	First, we construct the following two coupling rules: 
	\begin{center}
		\begin{tabular}{l}
			$I_{\leq c_2}(X,\overline{W})~\symif~J_{\geq c_1}(X,\overline{W})$ \\
			$I_{\geq c_1}(X,\overline{W})~\symif~J_{\leq c_2}(X,\overline{W})$
		\end{tabular}
	\end{center}

Then, we construct a coupling rule which is the following:
		$$I_{\leq c_2}(X,\overline{W})\symif~J_{\geq c_1}(Y,\overline{W}),U(X,Y).$$
%
%

\end{enumerate}



\begin{example}
	\label{ex-running-dat}
	For the query $Q_1$ of Example~\ref{ex:running41}, the construction we described yields the following basic rules of the Datalog
	query $Q_1^{Datalog}$:
	
	\begin{center}
		\begin{tabular}{l l l}
			$Q_1^{Datalog}(\overline{W})$ $\symif$       & $e(X,Y),e(Y,Z),a(\overline{W},Y),I_{\geq 5}(X,\overline{W}),$& \multirow{2}{*}{(query rule)}\\
			&$I_{\leq 8}(Z,\overline{W}).$& \\
			$J_{\leq 8}(Z,\overline{W})$ $\symif$ & $e(X,Y),e(Y,Z),a(\overline{W},Y),I_{\geq 5}(X,\overline{W}).$           & (mapping rule)\\
			$J_{\geq 5}(X,\overline{W})$ $\symif$ & $e(X,Y),e(Y,Z),a(\overline{W},Y),I_{\leq 8}(Z,\overline{W}).$           & (mapping rule)\\
			$I_{\leq 8}(X,\overline{W})$ $\symif$ & $J_{\geq 5}(X,\overline{W}).$                         & (coupling rule)\\
			$I_{\geq 5}(X,\overline{W})$ $\symif$ & $J_{\leq 8}(X,\overline{W}).$                         & (coupling rule)\\
			$I_{\leq 8}(X,\overline{W})$ $\symif$ &$J_{\geq 5}(Y,\overline{W}), U(X,Y)$ & (coupling rule)\\
			$I_{\geq 5}(X,\overline{W})$ $\symif$ &$J_{\leq 8}(Y,\overline{W}), U(Y,X)$ & (coupling rule)\\
		\end{tabular}
	\end{center}
	Intuitively, a coupling rule denotes that a formula $AC_1 \vee AC_2$ ( for two
	SI comparisons $AC_1=X\theta_1 c_1$ and $AC_2=Y\theta_2 c_2$) is either true or it is implied by $X\leq Y$ (which is encoded by  the predicate $U(X,Y)$). Thus, the first coupling rule in the above query says that
	$ X\leq 8  \vee X\geq 5$ is true and the second coupling rule says the same but refering to different
	$I$ and $J$-atoms. Moreover, the last coupling rule says that $X\leq Y\Rightarrow X\leq 8  \vee Y\geq 5$.
\end{example}
{\bf Construction of the dependant rules} $Q_1^{Datalog}:$
First, we describe the EDB predicates that we introduce (they all depend on the ACs of the contained query):

\begin{itemize}
	\item A unary predicate $U_{\theta c}(X,\overline{W})$, where $\theta$ is either $\leq$ or $\geq$ (the intuition for $\overline{W}$ is  that it will carry, during the computation, the head variables of the query rule), for each SI AC  $X\theta c$ in the closure of the ACs in the contained query. Note that although $U_{\theta c}$ typically includes $\overline{W}$, in the following, we could ignore it, for simplicity.
	
\end{itemize}

We have one kind of  dependant rules, the  \textit{link rules}:

\begin{itemize}
	\item
	For each pair of constants $(c_1,c_2)$, one in SIs of $Q_1$ and the other in an SI in the closure of ACs of $Q_2$ then, if $c_1\leq c_2$, we add the non-recursive link rule:
	$$ I_{\geq c_1}(X,\overline{W}):-U_{\geq c_2}(X,\overline{W}).$$
	Similarly, we do in a symmetric way for the $\leq$ ACs in $Q_1$ and $Q_2$.
	

\end{itemize}

Thus,
each link
rule encodes an
entailment of the form $X\leq 7 \imply X\leq 8$, i.e., it encodes,
in general, an entailment  $X\leq c_1 \imply X\leq c_2$ where $c_1
\leq c_2$. Intuitively, the link rules are used to link the ACs between the contained query and the containing query, as described through the containment entailment. Typically, the unary predicates represent the ACs of the contained query. 

For an example of dependant rules, see below (also analyzed in the next subsections):

	\begin{center}
	\begin{tabular}{l l l}
		$I_{\geq 5}(X,\overline{W})$ & $\symif~U_{\geq 6}(X,\overline{W}).$ & (link rule)\\
		$I_{\leq 8}(X,\overline{W})$ & $\symif~U_{\leq 7}(X,\overline{W}).$ & (link rule)\\
	\end{tabular}
\end{center}

\subsection{Construction of CQ for Contained Query}
\label{subsec-contained-trans}
%

We now describe the construction of the contained query turned into a CQ .

{\bf Construction of } $\mathbf{Q_2^{CQ}:}$ We introduce new unary
EDBs, specifically two of them, by the names $U_{\geq c}$ and $U_{\leq c}$, for each constant $c$ in $Q_2$. In addition, we use the binary predicate $U$ to represent the closed SI ACs between two variables, as we saw in the previous section.
Let us now construct the CQ $Q_2^{CQ}$ from $Q_2$.
We initially copy the regular subgoals of $Q_2$,
and for each SI  $X_i\theta c_i$ in the closure of $\beta_2$ we add a
unary predicate subgoal $U_{\theta c_i}(X_i)$. Then, for each AC $X\leq Y$ in the closure of ACs in $Q_2$, we add the unary subgoal $U(X,Y)$ in the body of the rule.

For example, considering the CQAC $Q_2$ with the following definition:

	\begin{center}
	\begin{tabular}{ll}
		$Q_2(W_1,W_2)\symif$ & $e(A,B),e(B,C),e(C,D),e(D,E), A\geq 6,$\\
		&$E\leq 7,
		a(W_1,W_2,B), a(W_1,W_2,D).$ \\
	\end{tabular}
\end{center}

we construct the $Q_2^{CQ}$ whose definition is:

	\begin{center}
	\begin{tabular}{ll}
		$Q_2^{CQ}(W_1,W_2)\symif$ & $e(A,B),e(B,C),e(C,D),e(D,E),U_{\geq 6}(A),$\\
		&$U_{\leq 7}(E),
		a(W_1,W_2,B), a(W_1,W_2,D).$ \\
	\end{tabular}
\end{center}

Thus the dependant rules for our running example, query $Q_1$, and the above contained query $Q_2$ are:

	\begin{center}
	\begin{tabular}{l l l}
		$I_{\geq 5}(X,\overline{W})$ & $\symif~U_{\geq 6}(X,\overline{W}).$ & (link rule)\\
		$I_{\leq 8}(X,\overline{W})$ & $\symif~U_{\leq 7}(X,\overline{W}).$ & (link rule)\\
	\end{tabular}
\end{center}


Now, we have completed the description of the construction of both $Q_1^{Datalog}$ from $Q_1$ and
$Q_2^{CQ}$ from $Q_2$. We go back to our examples and put all together.

\begin{example}
	\label{ex-full}
	Our contained query is the one in Subsection  \ref{subsec-contained-trans}.
	Our containing query is the one  in Example \ref{ex:running41}. The transformation of the contained query is shown in Subsection  \ref{subsec-contained-trans}.  The transformation of the contained query is shown in Example \ref{ex-running-dat}, where we see the basic rules. To complete the Datalog query, we add
	the following link rules:

	
	\begin{center}
		\begin{tabular}{l l l}
			$I_{\geq 5}(X,\overline{W})$ & $\symif~U_{\geq 6}(X,\overline{W}).$             &             (link rule)\\
			$I_{\leq 8}(X,\overline{W})$ & $\symif~U_{\leq 7}(X,\overline{W}).$              &            (link rule)\\
		\end{tabular}
	\end{center}
	In fact,  we constructed  the two new link rules in the Datalog query for $Q_1$. One rule links the constant 6 from the ACs of  $Q_2$ to the constant 5 from the ACs of  $Q_1$. The other link rule links constants 7 and 8 from queries $Q_1$ and $Q_2$, respectively.
\end{example}


%
%
%
\subsection{Proving the main theorem and the complexity}
\label{subsec-main-proof-sec5}

The constructions of the Datalog query and the CQ presented in Sections \ref{subsec-construct-Datalog} and \ref{subsec-contained-trans}, respectively, lead to the following theorem.

\begin{theorem}
	\label{thm:main}
	Consider two conjunctive queries with arithmetic comparisons, $Q_1$ and $Q_2$  such that ($Q_1,Q_2$) is an RSI1 disjoint-AC pair.
	%
	%
	%
	Then, $Q_1$ contains $Q_2$ if and only if  the following two happen a) $Q_1^{Datalog}$ contains  $Q_2^{CQ}$ and b) the head entailment is true.
\end{theorem}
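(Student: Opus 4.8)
The plan is to reduce the statement to a pure claim about the body containment entailment, and then to establish, in both directions, a correspondence between datalog-expansions of $Q_1^{Datalog}$ and the recursive ``peeling'' of that entailment supplied by Propositions~\ref{trick-pro} and~\ref{trick-pro1}. First I would dispose of the head part: since $(Q_1,Q_2)$ is an RSI1 disjoint-AC pair, $Q_1$ has only closed ACs, so Theorem~\ref{thm:denorm} says no normalization is needed and $Q_2\sqsubseteq Q_1$ is equivalent to the containment entailment $\beta_2\Rightarrow\mu_1(\beta_1)\vee\cdots\vee\mu_k(\beta_1)$. By Proposition~\ref{prop:single-map-vars} this splits into the head entailment $\beta_2\Rightarrow\mu_1(\beta_{11})$, which is exactly condition (b), and the body containment entailment. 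Hence it remains to prove the equivalence
$$Q_1^{Datalog}\text{ contains }Q_2^{CQ}\iff \beta_2\Rightarrow\mu_1(\beta_{12})\vee\cdots\vee\mu_k(\beta_{12}).$$
I would rephrase the left side by the standard test for containment of a CQ in a Datalog query: $Q_2^{CQ}\sqsubseteq Q_1^{Datalog}$ holds iff some datalog-expansion of $Q_1^{Datalog}$ admits a containment mapping into $Q_2^{CQ}$; equivalently, running $Q_1^{Datalog}$ on the frozen body of $Q_2^{CQ}$ derives the frozen head. Note that this frozen body carries exactly one $U_{\theta c}$ fact per SI in the closure of $\beta_2$ and one $U$ fact per $\leq$ implied by $\beta_2$, so the available EDB facts encode precisely the (closure of the) antecedent.

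For the completeness direction ($\Leftarrow$) I would construct the derivation by following the recursion of Propositions~\ref{trick-pro} and~\ref{trick-pro1}, starting from $\beta_2\Rightarrow\bigvee_i\mu_i(\beta_{12})$, where each disjunct is a conjunction of closed RSI1s by the RSI1 assumption. The recursion repeatedly extracts a mapping together with a single \emph{special} AC $e$ and recurses on the smaller disjunction with $e$ added on the right. I would translate each step into a rule application: a \emph{mapping rule} for the special AC $e$ of a peeled mapping $\mu$ produces the atom $J_e$, its body containing $\mu$'s relational subgoals and the $I$-atoms of all the other ACs of $\mu$, each of which is discharged by a \emph{link rule} to a $U_{\theta c}$ fact of $Q_2^{CQ}$ (when directly implied) or by a \emph{coupling rule} consuming a previously produced $J_{e'}$ together with the $U$-fact witnessing the coupling $\beta_2\Rightarrow e'\vee ac$. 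The final, fully-resolvable mapping, namely the one for which case (i) of Proposition~\ref{trick-pro1} applies, becomes the root \emph{query rule}. Gluing the partial mappings $\mu_{i_1},\dots,\mu_{i_m}$ along the variables shared through the $J$- and $U$-atoms then yields a single containment mapping from the resulting expansion into $Q_2^{CQ}$.

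For soundness ($\Rightarrow$) I would first observe that every rule of $Q_1^{Datalog}$ encodes a valid logical fact: a link rule encodes $X\geq c_2\Rightarrow X\geq c_1$ for $c_1\leq c_2$, a coupling rule encodes either the tautology $X\geq c_1\vee X\leq c_2$ or the implication $X\leq Y\Rightarrow X\leq c_2\vee Y\geq c_1$, and each query or mapping rule carries a containment mapping of relational subgoals from $Q_{10}$ to $Q_{20}$. Given any containment mapping from a datalog-expansion of $Q_1^{Datalog}$ into $Q_2^{CQ}$, I would induct on the structure of that expansion, equivalently on the derivation tree of the frozen head, reading off at each node the entailment it certifies and assembling these, in the reverse order of the peeling, into a proof that $\beta_2$ implies the disjunction of the $\mu_i(\beta_{12})$.

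I expect the main obstacle to lie in the completeness direction, specifically in showing that the locally chosen partial mappings glue into a single globally consistent containment mapping into $Q_2^{CQ}$: the variable identifications forced by the shared arguments of the $J$- and $U$-atoms must be compatible with the frozen $\leq$-facts of $Q_2^{CQ}$. This is exactly the point where the RSI1 restriction is used, since at most one RSI per disjunct means at most one special AC per mapping, so that Lemma~\ref{lemma-LSI-appendix-NP} and Proposition~\ref{trick-pro1} keep the branching a tree rather than an uncontrolled DAG. A secondary care point is that the program-expansion realizing containment may be arbitrarily long in terms of $Q_2$, so both the construction and the induction must be carried out uniformly in the recursion depth rather than bounded by the size of $Q_1$.
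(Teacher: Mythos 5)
Your proposal is correct and follows essentially the same route as the paper: the paper likewise disposes of condition (b) via Proposition~\ref{prop:single-map-vars}, reduces (a) to the equivalence of the body containment entailment with $Q_2^{CQ}\sqsubseteq Q_1^{Datalog}$ (Theorem~\ref{thm:main123}), and proves that equivalence on the canonical database of $Q_2^{CQ}$ by two inductions — one reading off entailments from mapping/coupling/link rule firings in a shortest derivation, the other peeling off one mapping and one special AC at a time via Propositions~\ref{trick-pro} and~\ref{trick-pro1} (with Lemma~\ref{lemma-LSI-appendix-NP} supplying the single-special-AC property from the RSI1 restriction), exactly as you describe.
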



The challenging part of the Theorem~\ref{thm:main} concerns the part (a) which is restated in the Theorem~\ref{thm:main123}. The part (b) of Theorem~\ref{thm:main} is a straightforward consequence of Proposition~\ref{prop:single-map-vars}.

\begin{theorem}
	\label{thm:main123}
	Consider two conjunctive queries with arithmetic comparisons, $Q_1$ and $Q_2$  such that ($Q_1,Q_2$) is an RSI1 disjoint-AC pair.
	Let  $Q_1^{Datalog}$ be the  transformed Datalog query of $Q_ 1$. Let
	$Q_2^{CQ}$ be the  transformed CQ query of $Q_2$.
	Then, the  body containment entailment for containment of $Q_2$ to $Q_1$  is true  if and only if  $Q_1^{Datalog}$ contains  $Q_2^{CQ}$.
\end{theorem}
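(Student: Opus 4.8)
The plan is to prove both directions by connecting the Datalog-expansions of $Q_1^{Datalog}$ to the body containment entailment, exploiting the tree-like decomposition established in Propositions~\ref{trick-pro} and~\ref{trick-pro1}. The key bridge is that a CQ $C$ is contained in a Datalog query $D$ if and only if some datalog-expansion of $D$ has a containment mapping into $C$; since $Q_1^{Datalog}$ contains $Q_2^{CQ}$ iff one of its (infinitely many) expansions maps into $Q_2^{CQ}$, the task reduces to matching these expansions against the structure of the containment entailment $\beta_2\Rightarrow\mu_1(\beta_{12})\vee\cdots\vee\mu_k(\beta_{12})$.

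First I would establish the backward direction (if $Q_1^{Datalog}$ contains $Q_2^{CQ}$ then the body containment entailment holds). I would take a datalog-expansion $E$ of $Q_1^{Datalog}$ that maps into $Q_2^{CQ}$ and read off, from the structure of $E$, a certificate that the entailment is true. The mapping rules, coupling rules, and link rules are each designed to encode exactly one logical step: a link rule encodes $X\leq c_1\Rightarrow X\leq c_2$, a coupling rule encodes $AC_1\vee AC_2$ being implied (possibly conditioned on a $U(X,Y)$ atom, i.e., $X\leq Y$), and a mapping rule corresponds to selecting one disjunct $\mu_i(\beta_{12})$ and peeling off its special AC. Since the $U_{\theta c}$ and $U$ subgoals of $Q_2^{CQ}$ exactly record the closure of $\beta_2$, a successful containment mapping from $E$ into $Q_2^{CQ}$ means every EDB atom of $E$ (both relational and the AC-encoding predicates) is satisfied under the ACs of $\beta_2$. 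Unwinding $E$ from the query rule down to its leaves then yields precisely the chain of implications that Proposition~\ref{trick-pro1} guarantees, certifying the body containment entailment.

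For the forward direction I would argue inductively on the size of the contained query's AC set, using Proposition~\ref{trick-pro1} as the inductive engine. Assuming the body containment entailment $\beta_2\Rightarrow\bigvee_i\mu_i(\beta_{12})$ holds, Proposition~\ref{trick-pro1} furnishes a single $\beta_i$ for which either the entailment reduces to $\beta\Rightarrow\beta_i\vee(\text{fewer ACs})$ (case (i)) or there is a special AC $e$ such that the residual entailment $\beta\wedge\neg e\Rightarrow\bigvee_{j\neq i}\beta_j\vee\cdots$ has strictly smaller complexity, while the remaining ACs of $\beta_i$ are each either directly implied or coupled with an $e_\ell$. Each of these outcomes corresponds to firing exactly one rule of $Q_1^{Datalog}$: the mapping rule for the special AC $e$, coupling rules for the coupled ACs, and link rules for the directly implied ones. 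By building the derivation tree top-down from the query rule and applying the proposition repeatedly, I would construct a finite datalog-expansion whose leaves are all satisfied by the relational subgoals and the $U$-predicates of $Q_2^{CQ}$, hence a containment mapping witnessing $Q_2^{CQ}\sqsubseteq Q_1^{Datalog}$.

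The main obstacle I anticipate is the bookkeeping that guarantees termination and well-formedness of the induction in the forward direction: each application of Proposition~\ref{trick-pro1} must genuinely decrease a well-founded measure (the number of ACs not yet ``accounted for'' on the right-hand side), and I must verify that the special-AC/coupling structure produced by the proposition is faithfully realizable by the specific rule set---in particular that the coupling rule conditioned on $U(X,Y)$ correctly captures the case where an AC is coupled via a $\leq$-relation between two contained-query variables rather than directly implied. Checking that these Datalog rules compose correctly through unfolding, so that no spurious expansion falsely certifies containment and no needed expansion is missing, is where the delicate semi-unary structure of the IDBs (the reason for introducing both $I$- and $J$-atoms) does the real work; I would isolate this as the central technical lemma, presumably the content deferred to~\ref{prf:thm-main123p}.
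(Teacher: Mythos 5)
Your proposal is correct and takes essentially the same approach as the paper: the paper likewise proves the backward direction by reading a certificate of the entailment off the Datalog derivation (by induction on mapping-rule firings, maintaining $\beta_2 \Rightarrow \mu_1(\beta_1)\vee\cdots\vee\mu_k(\beta_1)\vee\neg e$ for each computed $J$ fact), and the forward direction by repeatedly applying Proposition~\ref{trick-pro1} to peel off one disjunct at a time while firing the corresponding mapping, coupling, and link rules, exactly the induction-with-well-founded-measure scheme you outline. The only cosmetic difference is that the paper phrases Datalog containment via the fixpoint computation on the canonical database of $Q_2^{CQ}$ rather than via datalog-expansions with containment mappings; these are equivalent characterizations.
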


The proof of Theorem \ref{thm:main123} is in the \ref{prf:thm-main123p}. The following theorem proves that checking body containment entailment is NP-complete.

\begin{theorem}
	\label{thm:datalog-np-complete}
	Consider two conjunctive queries with arithmetic comparisons, $Q_1$ and $Q_2$  such that ($Q_1,Q_2$) is an RSI1 disjoint-AC pair. Let  $Q_1^{Datalog}$ be the  transformed Datalog query of $Q_ 1$. Let
	$Q_2^{CQ}$ be the  transformed CQ query of $Q_2$. Checking whether $Q_2^{CQ}$ is contained in $Q_1^{Datalog}$ is NP-complete.
\end{theorem}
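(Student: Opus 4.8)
The plan is to prove the two directions separately, with membership in NP being the substantive part.

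\textbf{NP-hardness.} I would obtain this by restricting to the AC-free case. Containment of one conjunctive query in another is already NP-complete~\cite{Chandra77-1}. Given an arbitrary instance $Q_2\sqsubseteq Q_1$ of CQ containment (no ACs at all), observe that $Q_1$ is (vacuously) an RSI1 query: it has only closed ACs on nondistinguished variables and at most one RSI (in fact none), so $(Q_1,Q_2)$ is an RSI1 disjoint-AC pair. Inspecting the constructions of Subsections~\ref{subsec-construct-Datalog} and~\ref{subsec-contained-trans}, when $Q_1$ carries no body ACs there are no $I$/$J$ IDBs and no mapping, coupling, or link rules, so the only rule of $Q_1^{Datalog}$ is the query rule, which is $Q_1$ itself; likewise $Q_2^{CQ}=Q_2$, since there are no $U_{\theta c}$ or $U$ subgoals to add. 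Hence $Q_2^{CQ}\sqsubseteq Q_1^{Datalog}$ iff $Q_2\sqsubseteq Q_1$, and the reduction is immediate.

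\textbf{Membership in NP.} The key is that $Q_1^{Datalog}$ lies in the class of semi-monadic Datalog queries, so I can invoke the result established in Appendix~C that containment of a conjunctive query in a semi-monadic Datalog query is in NP. First I would verify the syntactic shape of $Q_1^{Datalog}$: its IDB predicates are exactly the query predicate $Q_1^{Datalog}(\overline{W})$ and the semi-unary predicates $I_{\theta c}(X,\overline{W})$ and $J_{\theta c}(X,\overline{W})$. In every rule type -- query, mapping, coupling, and link -- the head vector $\overline{W}$ is copied unchanged from body to head and is never touched by the recursion; thus $\overline{W}$ acts as a fixed parameter and each IDB carries exactly one genuine argument $X$ besides this parameter, which is precisely what makes the query semi-monadic rather than merely monadic. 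I would then check that $Q_1^{Datalog}$ has size polynomial in $|Q_1|+|Q_2|$: the only data-dependent rules are the link rules and their extra unary EDBs, whose number is bounded by the product of the constants in $Q_1$ and the constants in the closure of the ACs of $Q_2$, hence polynomial. Applying the Appendix~C upper bound to the pair $(Q_2^{CQ},Q_1^{Datalog})$ then places the problem in NP.

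\textbf{Main obstacle.} I expect the delicate step to be confirming that the construction genuinely satisfies the definition of a semi-monadic Datalog query, especially the third coupling rule $I_{\leq c_2}(X,\overline{W})\symif J_{\geq c_1}(Y,\overline{W}),U(X,Y)$, which uses the binary EDB $U$ to relate the single argument $Y$ of a $J$-atom to the single argument $X$ of the derived $I$-atom. I must check that permitting such binary EDBs (while keeping all IDBs semi-unary) is admissible in the semi-monadic class and does not break the NP argument of Appendix~C; this is exactly the feature that separates semi-monadic from the linear monadic Datalog of~\cite{Chaudhuri94-1}, and is the reason that extension was needed here. A secondary point to pin down is that the mutual recursion between the $I$- and $J$-atoms, carried along with the parameter $\overline{W}$, stays within the semi-monadic bound, so that the polynomial-size estimate above legitimately transfers the NP membership to our instance.
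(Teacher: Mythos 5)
Your proposal is correct and takes essentially the same route as the paper: NP membership is obtained exactly as in Section~\ref{sec:semi-monadic-datalog-cont}, by checking that $Q_1^{Datalog}$ is semi-monadic (every IDB carries the bound parameter $\overline{W}$ plus one free argument, i.e., binding pattern $b\ldots bf$) and invoking Theorem~\ref{thm-semi-monadic-np}, while NP-hardness follows from the AC-free special case, where both transformations are the identity and the problem degenerates to ordinary CQ containment. Your flagged ``obstacle'' about the binary EDB $U$ in the third coupling rule is indeed harmless, since the semi-monadic condition constrains only the binding patterns of IDB predicates and the Appendix argument places no restriction on EDB arities.
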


Theorem~\ref{thm:datalog-np-complete} can be generalized to a stronger result, which is presented in Section~\ref{sec:semi-monadic-datalog-cont} in Theorem~\ref{thm-semi-monadic-np}. Theorem \ref{thm-np-complete}  is a straightforward consequence of Theorem~\ref{thm:check-head-body-entailments}.

\begin{theorem}
	\label{thm:check-head-body-entailments}
	Consider two conjunctive queries with arithmetic comparisons, $Q_1$ and $Q_2$  such that ($Q_1,Q_2$) is an RSI1 disjoint-AC pair.  Let $\phi_h$ and $\phi_b$ be the head and body entailments, respectively. Then, checking $\phi_h$ is polynomial and checking $\phi_b$ is NP-complete.
	
\end{theorem}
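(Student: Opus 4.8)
The plan is to prove the two assertions separately, leaning on the decomposition of the containment entailment established in Proposition~\ref{prop:single-map-vars} and on the two preceding theorems that already handle the body part. Recall that for an RSI1 disjoint-AC pair, Proposition~\ref{prop:single-map-vars} splits the containment entailment into the head entailment $\phi_h:\ \beta_2\Rightarrow\mu_1(\beta_{11})$ and the body entailment $\phi_b:\ \beta_2\Rightarrow\mu_1(\beta_{12})\vee\cdots\vee\mu_k(\beta_{12})$, so it suffices to bound the cost of checking each one in isolation.

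For $\phi_h$ I would argue polynomiality directly. The consequent $\mu_1(\beta_{11})$ is a single conjunction of closed ACs (all ACs are closed in an RSI1 pair), so $\phi_h$ holds if and only if $\beta_2$ implies each conjunct separately, i.e.\ iff $\beta_2\Rightarrow c$ for every closed AC $c$ of $\mu_1(\beta_{11})$. Each single-AC implication is decided by building the induced directed graph of $\beta_2$ from Algorithm AC-sat (Section~\ref{subsec-algo}) and testing, via transitive closure, whether the appropriate $\leq$-path(s) exist between the two endpoints (with the constants ordered as usual); by the soundness and completeness of that machinery (Lemma~\ref{lemma-prelim}) this correctly decides $\beta_2\Rightarrow c$. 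The number of conjuncts is linear in the size of $Q_1$, and the images of the single-mapping variables under any containment mapping are fixed --- in particular, for the head variables they are read off directly from the head atoms of $Q_2$ --- so no exponential search for a mapping is needed. Hence $\phi_h$ is checkable in polynomial time.

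For $\phi_b$ I would invoke the transformation results. By Theorem~\ref{thm:main123}, for an RSI1 disjoint-AC pair the body entailment $\phi_b$ is true if and only if $Q_1^{Datalog}$ contains $Q_2^{CQ}$, and the transformations $Q_1\mapsto Q_1^{Datalog}$ and $Q_2\mapsto Q_2^{CQ}$ of Sections~\ref{subsec-construct-Datalog} and~\ref{subsec-contained-trans} are computable in polynomial time. Membership in NP therefore follows: deciding $\phi_b$ reduces in polynomial time to deciding $Q_2^{CQ}\sqsubseteq Q_1^{Datalog}$, which is in NP by Theorem~\ref{thm:datalog-np-complete}. NP-hardness follows from the same two theorems in the other direction: Theorem~\ref{thm:datalog-np-complete} asserts that deciding $Q_2^{CQ}\sqsubseteq Q_1^{Datalog}$ is NP-hard over exactly the transformed RSI1 pairs, and since this containment is, by Theorem~\ref{thm:main123}, equivalent to $\phi_b$ through polynomial transformations, deciding $\phi_b$ is NP-hard as well. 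Combining the two directions gives that checking $\phi_b$ is NP-complete.

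The main obstacle lies entirely inside the NP-membership claim for $\phi_b$, and it is discharged by the cited machinery rather than by this proof: a priori $\phi_b$ is a disjunctive (seemingly $\Pi^p_2$-flavoured) statement quantifying over the possibly exponentially many containment mappings $\mu_1,\dots,\mu_k$, so the nonobvious content is that it collapses into an existential, NP-checkable condition. This collapse is exactly what the Datalog encoding of $Q_1$ together with Theorem~\ref{thm:main123} provide, and the NP upper bound for CQ-in-Datalog containment (Theorem~\ref{thm:datalog-np-complete}) closes the argument. By contrast, the head part and the NP-hardness of the body part are routine once the decomposition and the transformation theorems are in hand.
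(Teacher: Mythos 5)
Your proposal is correct and follows essentially the same route as the paper: the paper also checks $\phi_h$ in polynomial time by computing the closure of the ACs (your AC-sat/reachability argument is just a more explicit version of that), and it also obtains NP-completeness of $\phi_b$ by combining the equivalence of Theorem~\ref{thm:main123} with the NP-completeness of the transformed containment test in Theorem~\ref{thm:datalog-np-complete}. The only additions you make --- spelling out the invertibility of the transformation for the hardness direction and noting that head-variable images are fixed --- are details the paper leaves implicit, not a different approach.
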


To prove that checking $\phi_h$ is polynomial, observe that it suffices to compute the closure of a set of ACs. This can be done in polynomial time.

%

Consider two conjunctive queries with arithmetic comparisons, $Q_1$ and $Q_2$  such that $Q_1$ is an RSI1+ query and $Q_2$ is a CQAC with closed ACs.  It is straightforward that ($Q_1$, $Q_2$) is a RSI1 disjoint-AC pair with respect to the set of head variables of $Q_1$.

The following is a corollary of Theorem \ref{thm:check-head-body-entailments}.
\begin{corollary}
	\label{cor:check-head-body-entailments}
	
	Consider two conjunctive queries with arithmetic comparisons, $Q_1$ and $Q_2$  such that $Q_1$ is an RSI1+ query and $Q_2$ is a CQAC with closed ACs.  
	Let $\phi_h$ and $\phi_b$ be the head and body entailments, respectively. Then, checking $\phi_h$ is polynomial and checking $\phi_b$ is NP-complete.
	
\end{corollary}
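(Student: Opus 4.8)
The plan is to obtain the corollary as a direct instantiation of Theorem~\ref{thm:check-head-body-entailments}. Since that theorem already gives exactly the desired conclusion (checking $\phi_h$ is polynomial and checking $\phi_b$ is NP-complete) for every RSI1 disjoint-AC pair, all that remains is to verify that the hypotheses here force $(Q_1,Q_2)$ to be an RSI1 disjoint-AC pair with respect to an appropriate set of single-mapping variables. Once that is established, the statement transfers verbatim and no further argument is needed.

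First I would take $V_{sm}$ to be the set of head (distinguished) variables of $Q_1$. As noted after Definition~\ref{defn:single-map-vars}, the head variables of $Q_1$ are single-mapping with respect to any query: every containment mapping from $Q_{10}$ to $Q_{20}$ must send the head of $Q_1$ to the head of $Q_2$, so each head variable is mapped to the same variable of $Q_2$ under all such mappings. Hence $V_{sm}$ is a legitimate choice in the definition of an RSI1 disjoint-AC pair, and $V_{Q_1}-V_{sm}$ is precisely the set of nondistinguished variables of $Q_1$.

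Next I would check the defining conditions of an RSI1 disjoint-AC pair against those of an RSI1+ query together with the hypothesis on $Q_2$. The condition that both queries have only closed ACs holds because $Q_1$ is RSI1+ (hence has only closed ACs) and $Q_2$ is assumed to be a CQAC with closed ACs. The condition that no AC joins a variable of $V_{sm}$ with a variable of $V_{Q_1}-V_{sm}$ is exactly the RSI1+ requirement that there be no AC between a head variable and a nondistinguished variable, since $V_{Q_1}-V_{sm}$ is the nondistinguished set. Finally, the requirements that the ACs on $V_{Q_1}-V_{sm}$ be semi-interval and contain exactly one right semi-interval are precisely the RSI1+ conditions on the ACs over nondistinguished variables. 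Thus all conditions hold.

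There is essentially no obstacle here: the notion of an RSI1+ query is deliberately defined so that pairing it with an arbitrary closed-AC query $Q_2$ automatically yields an RSI1 disjoint-AC pair, and the corollary is simply a restatement of the general theorem specialized to this clean syntactic subclass. Having verified that $(Q_1,Q_2)$ is an RSI1 disjoint-AC pair with respect to the head variables of $Q_1$, I would conclude by invoking Theorem~\ref{thm:check-head-body-entailments} to obtain that checking $\phi_h$ is polynomial and checking $\phi_b$ is NP-complete.
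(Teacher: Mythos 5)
Your proof is correct and takes essentially the same route as the paper: the paper likewise observes that an RSI1+ query $Q_1$ paired with any closed-AC CQAC $Q_2$ is ``straightforwardly'' an RSI1 disjoint-AC pair with respect to the head variables of $Q_1$, and then obtains the corollary by invoking Theorem~\ref{thm:check-head-body-entailments}. Your explicit verification of the pair conditions (head variables are single-mapping, closedness, no head--body ACs, SI ACs with a single RSI on the nondistinguished variables) simply spells out what the paper leaves as an immediate observation.
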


\subsection{More examples to illustrate the technique}
\label{subsec-examples}

Another example to use later to illustrate the functionality of the second kind of coupling rules.

%
%
%
%
%
%
%

\begin{example}
	\label{ex:running412}
	Consider a relational schema with the binary relations $e$ and $a$, as well as the following two CQACs over this schema.
	
	\begin{center}
		\begin{tabular}{lll}
			$Q_1: q(W_1,W_2)$ & $\symif$ &$a(W_1,W_2,Y),e(X,Y),e(Y,Z),X\geq 5,Z\leq 5$ \\
			$Q_2: q(W_1,W_2)$ & $\symif$ &$e(A,B),e(B,C_1),e(C_2,D),e(D,E),a(W_1,W_2,B),$\\
			& &$ a(W_1,W_2,D), C_1\leq C_2,A\geq 5,E\leq 5$\\
		\end{tabular}
	\end{center}
	
	Checking the containment $Q_2\sqsubseteq Q_1$, note that there are two containment mappings $\mu_1$, $\mu_2$ from $Q_{10}$ to $Q_{20}$ such that $\mu_1(W_i)=\mu_2(W_i)=W_i$, and
	\begin{itemize}
		\item $\mu_1:$ $Y\rightarrow B$, $X\rightarrow A$, $Z\rightarrow C_1$.
		\item $\mu_2:$ $Y\rightarrow D$, $X\rightarrow C_2$, $Z\rightarrow E$.
	\end{itemize}
	Then, applying the mappings on the query entailment we conclude the following implication:
	$$((C_1\leq C_2)\wedge(A\geq 5)\wedge(E\leq 5))\Rightarrow((A\geq 5)\wedge (C_1\leq 5))  \vee ((C_2\geq 5)\wedge (E\leq 5))$$
	
	
	Analyzing the aforementioned entailment, it is easy to verify that it is true, since $(C_1\leq C_2)\Rightarrow (C_1\leq c)\vee (C_2\geq c)$ is true for every constant $c$; hence, $Q_2\sqsubseteq Q_1$.
	
	Let us now construct $Q_1^{Datalog}$ from $Q_1$ and $Q_2^{CQ}$ from $Q_2$.  To construct $Q_1^{Datalog}$ from $Q_1$ we follow the algorithm in Section~\ref{subsec-construct-Datalog}. In particular, we initially construct the query rule, which is given as follows. For simplicity in the notation, we will denote by $\overline{W}$  the vector of head variables $W_1,W_2$. Note that the subgoals $I_{\geq 5}(X,\overline{W})$, $I_{\leq 5}(Z,\overline{W})$ correspond to the ACs $X\geq 5$ and $Z\leq 5$, respectively.
	
	\begin{center}
		\begin{tabular}{ll}
			$Q_1^{Datalog}: q(\overline{W})$ $\symif$ &$e(X,Y),e(Y,Z),a(\overline{W},Y),I_{\geq 5}(X,\overline{W}),I_{\leq 5}(Z,\overline{W})$\\
		\end{tabular}
	\end{center}
	
	Then, we construct the basic mapping and coupling rules, which are given by the following rules:
	
	\begin{center}
		\begin{tabular}{lll}
			$J_{\geq 5}(X,\overline{W})$ $\symif$ &$e(X,Y),e(Y,Z),a(\overline{W},Y),I_{\leq 5}(Z,\overline{W})$ & (mapping rule)\\
			$J_{\leq 5}(Z,\overline{W})$ $\symif$ &$e(X,Y),e(Y,Z),a(\overline{W},Y),I_{\geq 5}(X,\overline{W})$ & (mapping rule)\\
			$I_{\leq 5}(X,\overline{W})$ $\symif$ &$J_{\geq 5}(X,\overline{W})$ & (coupling rule)\\
			$I_{\geq 5}(X,\overline{W})$ $\symif$ &$J_{\leq 5}(X,\overline{W})$ & (coupling rule)\\
			$I_{\leq 5}(X,\overline{W})$ $\symif$ &$J_{\geq 5}(Y,\overline{W}), U(X,Y)$ & (coupling rule)\\
			$I_{\geq 5}(X,\overline{W})$ $\symif$ &$J_{\leq 5}(Y,\overline{W}), U(X,Y)$ & (coupling rule)\\
		\end{tabular}
	\end{center}
	
	To find the $Q_2^{CQ}$, we initially copy the head $Q_2$, along with its relational subgoals. Then, we consider the subgoal $U(C_1,C_2)$ representing the AC $C_1\leq C_2$, as well as the unary suboals $U_{\geq 5}(A,\overline{W})$ and $U_{\leq 5}(E,\overline{W})$ to represent the ACs $A\geq 5$ and $E\leq 5$, respectively. Consequently, we end up with the following CQ definition:
	
	\begin{center}
		\begin{tabular}{lll}
			$Q_2^{CQ}: q(W_1,W_2)$ & $\symif$ &$e(A,B),e(B,C_1),e(C_2,D),e(D,E),a(W_1,W_2,B),$\\
			& &$ a(W_1,W_2,D), U(C_1,C_2),U_{\geq 5}(A,\overline{W}),U_{\leq 5}(E,\overline{W})$\\
		\end{tabular}
	\end{center}
	
	Finally, the link rules included in the Datalog query $Q_1^{Datalog}$ are constructed as follows:
	\begin{center}
		\begin{tabular}{ll}
			$I_{\leq 5}(X,\overline{W})$ $\symif$ &$U_{\leq 5}(X,\overline{W})$\\
			$I_{\geq 5}(X,\overline{W})$ $\symif$ &$U_{\geq 5}(X,\overline{W})$\\
		\end{tabular}
	\end{center}
	
	%
	
	%
	%
	%
	%
\end{example}

{\sl Useful observation:}
Notice that, because of the restrictions we have assumed on our queries, $\overline{W}$ as it appears in the construction of the Datalog query does not contain any of the variables in the first position of a semi-unary predicate.

Finally, it helps with the inuition to obseerve the following: Even if the query $Q_1$ was different but only as concerns AC that involve head variables, the Datalog query would be the same because we do the test for such ACs in the preliminary step. Thus the following CQAC would have been transformed to the same query as above:
%
%

\begin{center}
	\begin{tabular}{ll}
		$Q_1({W_1,W_2})\symif$ & $a(W_1,W_2,Y), e(X,Y),e(Y,Z),$\\
		& $X\geq 5,Z\leq 8,W_1<W_2,W_1< 4.$\\
	\end{tabular}
\end{center}

%
%

\subsection{Preliminary partial results and intuition on the proof of Theorem \ref{thm:main123}  
}
\label{subsec-simplefacts}

The proof of Theorem \ref{thm:main123} is  presented in the \ref{prf:thm-main123p}. Here we give some insight into the technicalities involved in its proof.

In our proof, we will apply the Datalog query  $Q_1^{Datalog}$ on the canonical database of the CQ query $Q_2^{CQ}$ constructed from the contained query $Q_2$.
This canonical database uses constants (different from the constants in the ACs) that correspond one-to-one to variables of the query $Q_2$.
Thus, as we compute facts, each fact being either an $I$ fact or a $J$ fact, we do the following observations about the result of firings for each of the  two kinds of recursive rules (i.e., the coupling rules and the mapping rules):
(all the $\theta$s represent either $\leq$ or $\geq$ and the $c_i$s are constants from the ACs of the queries.
\begin{itemize}
	\item We have two kinds of coupling rules. Consider a coupling rule of the first kind which is of the form: $$I_{\theta_1 c_1}(X,\overline{W}) \symif~J_{\theta c_2}(X,\overline{W}).$$
	When this rule is fired, its variable $X$ is instantiated to a constant, $y$, in the canonical database, $D$, of $Q_{20}$. The constant $y$ corresponds to the variable $Y$ of $Q_2$ by convention. Then the following is true by construction: $X\theta_1 c_1 \vee X\theta_2 c_2$, and, hence, the following is true:
	$$\beta_2\Rightarrow X\theta_1 c_1 \vee X\theta_2 c_2$$
	Now consider the other kind of coupling rule, which is of the form:
	$$I_{\theta_1 c_1}(X,\overline{W}) \symif~J_{\theta c_2}(Y,\overline{W}),U(X,Y).$$
	By construction of the rule, the EDB $U(X,Y)$    is mapped in $D$ to two constants/variables 
	such that there in $Q_2$ an AC which is $X\leq Y$.       Thus, by construction of the rule, the following is true again:
	$$\beta_2\Rightarrow X\theta_1 c_1 \vee Y\theta_2 c_2$$
	We say in both cases of coupling rules that the facts in both sides of the rule are coupled and that the corresponding ACs are coupled.
	\item Consider a mapping rule
	$$J_{\theta_1 c_1}(Z,\overline{W}) \symif~BodyQ_1,I_{\theta_2 c_2}(X,\overline{W}), I_{\theta_3 c_3}(X,\overline{W}),\dots.$$
	The $BodyQ_1$ denotes all the relational subgoals of $Q_1$. When a mapping rule is fired, then there is a containment mapping, $\mu$,  from the relational subgoals of $Q_1$ to the relational subgoals of $Q_2$ and, moreover, the $I$ facts in the body of the rule have been computed in previous rounds of the computation.
	
	The $I$ facts can be computed either via link rules or via coupling rules. When the $I$ facts in the body of the rule (for the instantiation that fires the rule) are  computed via coupling rules using $J$ facts, each  $I$ fact is coupled with a $J$ fact.
	Notice that each $I$ fact corresponds to an AC in $\mu(\beta_1)$ by construction of a mapping rule.
	Putting the implications we derived for coupling rules above together for all $I$ facts in the body of the mapping rule,
	we derive the implication:
	$$\beta_2 \Rightarrow \mu (\beta_1)\vee e_1 \vee e_2 \vee \cdots , \vee e_t$$
	where $ e_1 , e_2, \ldots$ are the ACs corresponding to the $J$ facts from which each $I$ fact was computed. Finally, observe that by construction of the rule, one of the ACs in $\mu(\beta_1)$ is not represented in the body of the rule (it is represented in the head of the rule). This justifies the
	presence of $e_t$ in the implication, which represents this special AC in $\mu(\beta_1)$.
\end{itemize}

\section{When U-CQAC MCRs compute certain answers}
\label{sec-6}

\def\L{\mathcal{L}}
\def\I{\mathcal{I}}
\def\P{\mathcal{P}}
\def\D{\mathcal{D}}
\def\V{\mathcal{V}}
\def\C{\mathcal{C}}
\def\T{\mathcal{T}}
\def\S{\mathcal{S}}
\def\U{\mathcal{U}}
\def\M{\mathcal{M}}
\def\E{\mathcal{E}}
\def\K{\mathcal{K}}
\def\LAV{\textsf{LAV}}
\def\DeptA{\textsf{Dept}_\textsf{A}}
\def\DeptB{\textsf{Dept}_\textsf{B}}
\def\Staff{\textsf{Staff}}
\def\corecoverc{\textsf{CoreCover$\C$}}

In this section we prove that, given CQAC query and views, if there is a maximally contained rewriting (MCR) in the language of (possibly infinite) union of CQACs then this MCR computes all the certain answers on any view instance $\I$. This section extends the results in \cite{AfratiK10} for CQs.

Moreover, we prove this result in a more general setting, in that we also assume that there is a set of constraints $\C$ that the database ought to satisfy.
The set $\C$ contains tuple generating dependencies (tgds) and equality generating dependencies (egds). We assume that the chase algorithm (see description of chase algorithm as well as definitions for tgds and egds  in \ref{pre:dependencies-chase}) terminates on $\C$.

We give the definition of certain answers under constraints, as follows.

\begin{definition}
	Suppose there exists a database instance $D$ such that $\I\subseteq \V(D)$. Then,
	we define the certain answers of ($Q,\I$) with respect to $\V$ as follows:
	\begin{itemize}
		\item Under the Open World Assumption:
		\[\text{certain}(Q,\I)=\bigcap\{Q(D): D \text{ such that } \I\subseteq\V(D)\}\] In the presence of a set of constraints $\C$, we also require that all databases $D$ used for certain$(Q,\I)$ satisfy $\C$ and denote it by  $\text{certain}_{\C}(Q,\I)$.
	\end{itemize}
	
	If there is no database instance $D$ such that $\I\subseteq \V(D)$, we say that the set $\text{certain}_{\C}(Q,\I)$ is undefined. 
\end{definition}

\subsection{Preliminaries}

We first  define query containment  under constraints:
\begin{definition}
	Let $\C$ be a set of tdgs and egds, and $Q_1$, $Q_2$ be two conjunctive queries. We say that {\em $Q_1$ is contained in $Q_2$ under the dependencies $\C$,} denoted $Q_1\sqsubseteq_\C Q_2$, if for all databases $D$ that satisfy $\C$ we have that $Q_1(D)\subseteq Q_2(D)$.
\end{definition}

We check  CQAC containment under contstraints $\C$  by using the $\C$-canonical databases (see \ref{pre:dependencies-chase1}).
We define contained rewriting under constraints:

\begin{definition}
	\label{cont-rewr-dfn}
	(Contained rewriting)
	Let $Q$ be a query defined on schema $\cal S$, and $\cal{V}$ a set of views defined on $\cal S$. Let $R$ be a query formulated in terms of the view relations in
	the set $\cal{V}$.
	
	$R$ is a {\em contained rewriting of $Q$ using $\V$ under the OWA and under the constraints $\C$} if and only if for every view instance $\I$ the following is true:
	For any database $D$ such that $\I\subseteq \V(D)$ that satisfies the constraints in $\C$, we have that $R(\I)\subseteq Q(D)$.
\end{definition}


\begin{theorem}
	Suppose query $Q$, views $\cal{V}$, and rewriting $R$ all belong to the language of CQACs. Then $R$ is a contained rewriting of
	$Q$ using views  $\cal{V}$ if and only if $R^{exp}\sqsubseteq_\C  Q$.
\end{theorem}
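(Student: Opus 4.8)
The plan is to reduce the statement to a single unfolding identity relating the view-expansion to evaluation over view instances, namely that for every database $D$ over the schema $\CS$ one has $R^{exp}(D)=R(\V(D))$. This identity is the composition property of conjunctive queries, extended to carry the arithmetic comparisons: an assignment satisfying the body of $R$ over the view facts $\V(D)$ can be decomposed, one view atom at a time, into an assignment over $D$ that satisfies the relational subgoals and all ACs of the corresponding view bodies (after applying the head-to-subgoal mapping $\mu_i$ and renaming nondistinguished variables to the fresh variables of $R^{exp}$), together with the ACs of $R$ itself; and conversely. Once this identity is in hand, both directions of the equivalence are short, and the constraints $\C$ enter only as a restriction on the admissible databases, symmetrically on both sides.

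For the direction $R^{exp}\sqsubseteq_\C Q\Rightarrow R$ is a contained rewriting, I would fix an arbitrary view instance $\I$ and an arbitrary $D$ with $D\models\C$ and $\I\subseteq\V(D)$. Since CQACs are monotone, $\I\subseteq\V(D)$ gives $R(\I)\subseteq R(\V(D))$, and the unfolding identity rewrites the right-hand side as $R^{exp}(D)$. Because $D$ satisfies $\C$, the hypothesis $R^{exp}\sqsubseteq_\C Q$ yields $R^{exp}(D)\subseteq Q(D)$, so chaining the inclusions gives $R(\I)\subseteq Q(D)$, which is exactly the defining condition of a contained rewriting under the OWA and constraints $\C$.

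For the converse, I would take an arbitrary database $D$ with $D\models\C$ and instantiate the definition of contained rewriting at the particular view instance $\I=\V(D)$; this is admissible because $D$ itself witnesses $\I\subseteq\V(D)$ and $D\models\C$. The defining condition then gives $R(\V(D))\subseteq Q(D)$, and the unfolding identity turns the left-hand side into $R^{exp}(D)$, so $R^{exp}(D)\subseteq Q(D)$. As $D$ ranged over all $\C$-satisfying databases, this is precisely $R^{exp}\sqsubseteq_\C Q$.

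The main obstacle is establishing the unfolding identity cleanly in the presence of arithmetic comparisons, rather than the equivalence itself. I would verify it by proving both inclusions of $R^{exp}(D)=R(\V(D))$ through the correspondence between satisfying assignments: each view subgoal $v_i$ of $R$ is matched, under $\V(D)$, by a head tuple of $V_i(D)$, which unpacks to a satisfying assignment of $\mu_i(V_i)$'s relational and AC subgoals over $D$, and the fresh renaming of nondistinguished variables guarantees these per-subgoal assignments combine consistently. The only point requiring care is that the ACs are conjoined across all view bodies together with the ACs of $R$, so one must check that joint satisfiability over the densely ordered domain is preserved under composition; this is immediate because no comparisons are introduced beyond those already present in $R$ and the $V_i$. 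I would also note that $R^{exp}$ is itself a CQAC, so $\sqsubseteq_\C$ is well defined for it via the $\C$-canonical database test referenced earlier.
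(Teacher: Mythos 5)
Your proof is correct. The forward direction (containment of the expansion implies contained rewriting) coincides with the paper's own argument: monotonicity gives $R(\I)\subseteq R(\V(D))$, the unfolding identity $R(\V(D))=R^{exp}(D)$ converts this, and the hypothesis $R^{exp}\sqsubseteq_\C Q$ closes the chain $R(\I)\subseteq R(\V(D))\subseteq R^{exp}(D)\subseteq Q(D)$. Where you genuinely diverge is the converse. The paper argues contrapositively: assuming $R^{exp}\not\sqsubseteq_\C Q$, it invokes the $\C$-canonical-database test to extract a witness database $D$ (a chased canonical database of $R^{exp}$) on which $R^{exp}$ computes a tuple $t\notin Q(D)$, and then observes that $t\in R(\V(D))$ because the frozen body of $R$ sits inside $\V(D)$; this exhibits a concrete counterexample view instance. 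You instead instantiate the definition of contained rewriting directly at the pair $(\I,D)=(\V(D),D)$ for an arbitrary $\C$-satisfying database $D$, and convert $R(\V(D))\subseteq Q(D)$ into $R^{exp}(D)\subseteq Q(D)$ via the same unfolding identity. Your route is more elementary and more self-contained: it never touches canonical databases or the chase, so this direction does not lean on chase termination or on the correctness of the $\C$-canonical-database containment test (the paper's Appendix D machinery). You also isolate and actually verify, with the arithmetic comparisons carried along, the identity $R^{exp}(D)=R(\V(D))$ that the paper asserts only in passing --- and on which, in fact, both directions of the paper's proof implicitly rest (the counterexample step uses the inclusion $R^{exp}(D)\subseteq R(\V(D))$ in the guise of ``a subset of $\I$ is isomorphic to the body of $R$''). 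What the paper's version buys in exchange is an explicit failing view instance, a style of witness construction it reuses later in that section for the representative-possible-worlds results. Both arguments are sound; yours is arguably the cleaner proof of the stated equivalence.
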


%

\begin{proof} If the expansion is not contained in the query, then we find a counterexample to prove that it is not a contained rewriting as follows: Since $R^{exp}$ is not contained in $Q$, there is a $\C$-canonical database $D$ of $R^{exp}$ such that a tuple $t$ is computed by $Q_2$ on $D$ but not by $Q_1$. We compute $\V$ on $D$ and produce view instance $\I$. Then $t$ is in $R(\I)$ (because a subset of $\I$ is isomorphic to the body of $R$) but $t$ is not in $Q(D)$.

	If the expansion $R^{exp}$ is contained in the query then, since $\I \subseteq \V(D)$ for any $D$ that satisfies the constraints, we have that
	$ R(\I) \subseteq R(\V(D) ) $.
	However $R(\V(D) )$ is equal to $R^{exp}(D )$ because to compute the former we first apply the mappings from the view definition to $D$ (to compute $\V(D)$) and then apply the mapping from $R$ to $\V(D)$ thus resulting in a mapping from $R^{exp}$ to $D$ for each tuple that is computed.
	Consequently, the following is true:
	$$ R(\I) \subseteq R(\V(D) ) \subseteq  R^{exp}(D ) \subseteq Q(D)$$
	for any $D$ that satisfies the constraints. Hence $R$ is a contained rewriting under the constraints.
\end{proof}
%
%


\subsubsection{Database AC-instance with t-instance}

A {\em database AC-instance with ACs} is a database with domain a set of constants and a set of variables that we call {\em labeled nulls} (the two sets are disjoint), i.e., it contains  relational atoms that use  labeled nulls and constants. It may also contain ACs among the labeled nulls or among labeled nulls and constants. 
When the ACs define a total ordering, then we call $\I$ a \textit{t-instance}.

%


%

 Let $J_1$, $J_2$ be sets of atoms over the schema $\CS$ such that $J_1$ is an AC-instance  and $J_2$ is a t-instance. An 
\textit{order-homomorphism}p $h : J_1\rightarrow J_2$ is a mapping from the atoms in $J_1$ to the atoms in $J_2$ with the following properties:

\begin{enumerate}

	\item For every constant $c$ in $J_1$, we have $h(c) = c$.

	\item For every atom $r(X_1,\dots,X_m)$ in $J_1$, we have that $r(h(X_1),\dots,h(X_m))$ is an atom in $J_2$, where $X_1,\dots,X_m$ are either variables or constants.

	\item if $(X_1\;\theta\;X_2)$ is true in $J_1$, where $\theta$ is $<, >, \leq, \geq, =$, then $(h(X_1)\;\theta\;h(X_2))$ is implied by the partial order of $J_2$.


\end{enumerate}


\subsection{Representative possible worlds (RPW)}

In this section, we will prove that, for  CQAC views, a
maximally contained rewriting $\P$  with respect to U-CQAC  \footnote{In the literature, usually, by U-CQAC we define the class of finite unions of CQACs, in this section we assume that it may be also infinite.} of a CQAC query $Q$ under a given set of constraints computes the certain answers of $Q$ under the OWA, i.e., we prove the following theorem.

\begin{theorem}\label{owa_mcrs}
	Let $\C$ be a set of constraints that are tgds and egds.
	Let $Q$ be a CQAC query, $\V$ a set of CQAC views. Suppose there exists an MCR $\R_{MCR}$ of $Q$ with respect to U-CQAC and under the constraints $\C$. Let $\I$ be a view instance such that the set $\text{certain}_{\C}(Q,\I)$ is defined. 
	Then, under the open world assumption, $\R_{MCR}$ computes all the certain answers of $Q$ on any view instance $\I$ under the constraints $\C$, that is: $\R_{MCR}(\I)=\text{certain}_{\C}(Q,\I)$.
\end{theorem}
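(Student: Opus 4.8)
The plan is to establish the equality by proving the two inclusions separately. The inclusion $\R_{MCR}(\I)\subseteq\text{certain}_{\C}(Q,\I)$ is the easy (soundness) direction. By the definition of an MCR, $\R_{MCR}$ is in particular a contained rewriting of $Q$ using $\V$ under $\C$; hence, by Definition~\ref{cont-rewr-dfn}, for every database $D$ with $\I\subseteq\V(D)$ that satisfies $\C$ we have $\R_{MCR}(\I)\subseteq Q(D)$. Since by hypothesis $\text{certain}_{\C}(Q,\I)$ is defined, at least one such $D$ exists, and taking the intersection over all of them gives exactly $\text{certain}_{\C}(Q,\I)$, so $\R_{MCR}(\I)\subseteq\text{certain}_{\C}(Q,\I)$.

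The content of the theorem is the completeness inclusion $\text{certain}_{\C}(Q,\I)\subseteq\R_{MCR}(\I)$, and this is where the representative possible worlds enter. Fix a constant tuple $t\in\text{certain}_{\C}(Q,\I)$. From $\I$ I would first build a finite family of \emph{representative possible worlds} $W_1,\dots,W_n$: expand every view atom of $\I$ through the body of its view definition, replacing nondistinguished view variables by fresh labeled nulls and carrying the ACs; gather the constants appearing in $\I$, $Q$, $\C$ and the view definitions; enumerate the finitely many consistent total orderings of the labeled nulls together with these constants; and chase each resulting totally ordered AC-instance with $\C$, which terminates by hypothesis, yielding a t-instance satisfying $\C$. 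Up to isomorphism there are finitely many such $W_j$. Replacing the labeled nulls of a fixed $W_j$ by distinct fresh constants respecting its total order produces an ordinary database $D_j$ with $\I\subseteq\V(D_j)$ and $D_j\models\C$, so each $D_j$ belongs to the family over which $\text{certain}_{\C}(Q,\I)$ is computed.

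Because $t$ is a certain answer and every $D_j$ is a member of that family, $t\in Q(D_j)=Q(W_j)$ for each $j$. Picking a $W_j$ and an order-homomorphism $h$ from (a normalized disjunct of) $Q$ into $W_j$ that witnesses $t$, I would read off the view atoms of $\I$ that are hit by $h$ and assemble a CQAC rewriting $R$ whose subgoals are precisely those view atoms and whose ACs are those of $Q$ transported through $h$. By construction $t\in R(\I)$, since the chosen view atoms lie in $\I$ and $h$ maps the head of $Q$ to $t$. The folding of $h$ supplies a containment mapping from $Q$ into $R^{exp}$ that respects the ACs, which --- checked against the $\C$-canonical databases of $R^{exp}$ --- yields $R^{exp}\sqsubseteq_\C Q$; hence $R$ is a contained rewriting of $Q$ using $\V$ under $\C$. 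Finally, since $\R_{MCR}$ is an MCR, every contained rewriting in U-CQAC is contained in it, so $R\sqsubseteq\R_{MCR}$ and therefore $t\in R(\I)\subseteq\R_{MCR}(\I)$, completing the second inclusion.

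The main obstacle is to make the representative possible worlds genuinely representative in the presence of \emph{both} ACs and dependencies, and correspondingly to guarantee that the rewriting read off a single witness is really contained. On the AC side, one must argue that enumerating all consistent total orderings of the labeled nulls and constants captures every relative order realizable in an arbitrary member $D$ of the family, so that order-homomorphisms (not plain homomorphisms) are the right notion and the transported ACs are never violated; because CQAC containment requires \emph{all} containment mappings (Theorem~\ref{thm:cont-CQAC}), verifying $R^{exp}\sqsubseteq_\C Q$ forces one to control $Q$ on every $\C$-canonical database of $R^{exp}$, not merely on the one coming from $W_j$. On the dependency side, one needs the universality of the chase --- that the chased $W_j$ maps homomorphically into every $D\models\C$ extending it --- to tie the finite family back to the infinite intersection defining $\text{certain}_{\C}(Q,\I)$. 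Reconciling these two requirements is the crux of the proof.
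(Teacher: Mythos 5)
Your overall skeleton is the paper's: the soundness inclusion via the definition of contained rewriting, the construction of representative possible worlds by expanding $\I$, enumerating the consistent total orderings of labeled nulls and constants, and chasing with $\C$, and the final appeal to MCR-maximality all match the paper's proof (its Propositions~\ref{pro-qi-certain} and~\ref{owa_mcrs_lemma}). The gap is in the one step that carries all the weight: building the contained rewriting $R$ from a \emph{single} world $W_j$ and a \emph{single} witnessing order-homomorphism $h$. First, ``the ACs of $Q$ transported through $h$'' need not be expressible in $R$: $h$ may send an AC-variable of $Q$ to a labeled null of $W_j$ (view-nondistinguished or chase-generated), and such a null is not a variable of any rewriting over $\V$ --- the variables of a rewriting can only occupy view-atom positions, i.e., correspond to constants of $\I$. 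Second, dropping the inexpressible ACs can destroy containment, and the certain-answer hypothesis cannot repair this for your smaller $R$, because the $\C$-canonical databases of the smaller $R^{exp}$ are \emph{not} possible worlds of $\I$ (they omit the expansions of the view atoms not hit by $h$), so the fact that $t$ is certain tells you nothing about $Q$ on them. Concretely, take $V_1(X)\symif c(X,Z)$, $\;V_2(X)\symif c(X,Z), Z\leq 5$, $\;Q()\symif c(X,Z), Z\leq 5$, $\;\I=\{v_1(1),v_2(1)\}$, no constraints. The empty tuple is certain (via $v_2$), but in any world whose ordering happens to place the null $z_1$ of $v_1$'s expansion below $5$ there is a witness $h$ using only $v_1$'s expansion atom; your recipe then yields $R()\symif v_1(A)$ (the transported AC $z_1\leq 5$ being inexpressible and dropped), which is \emph{not} a contained rewriting, so the final step $R\sqsubseteq\R_{MCR}$ collapses. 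There is also the chase-provenance issue you do not address: $h$ may hit atoms of $W_j$ generated by tgds from expansions of view atoms that $h$ never touches, in which case $h$ does not even fold into your $R^{exp}$. Your closing paragraph names exactly this difficulty (controlling $Q$ on \emph{every} $\C$-canonical database of $R^{exp}$, not just the one coming from $W_j$) but leaves it unresolved, so the proposal is an outline with the crux missing rather than a proof.

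The paper resolves the crux by refusing to localize: the contained rewriting of Proposition~\ref{owa_mcrs_lemma} is the \emph{entire} Boolean query $R_{\I}$ --- all view atoms of $\I$, with ACs fixing the total order of all of $\I$'s constants --- given a head that outputs $t_0$. With this choice, the $\C$-canonical databases of $R^{exp}$ are exactly the representative possible worlds, so ``$t_0$ is a certain answer'' translates verbatim into ``$Q$ computes $t_0$ on every $\C$-canonical database of $R^{exp}$,'' which is precisely the canonical-database containment test for $R^{exp}\sqsubseteq_{\C}Q$. The per-ordering, possibly different choice of witness is thereby absorbed into the containment test itself instead of into the syntax of $R$, and no AC of $Q$ ever needs to be transported into the rewriting. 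If you want to keep a small, witness-driven rewriting, you would have to prove that some witness choice yields a contained one; in general that claim is exactly as hard as the theorem and is most easily obtained from the full-rewriting argument anyway.
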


We define the concept of representative possible worlds of a view instance $\I$ in order to analyze how we compute the certain answers.

Given a view instance $\I$,  we define a set of {\em representative possible worlds (RPW, for short)} $\P_\I$. A RPW is a AC-instance. The set  $\P_\I$ has the following properties: a) for all
$D_\I\subseteq \P_\I$ the following is true:  $\I\subseteq\V(D_\I)$, b) for each database instance $D$ such that $\I\subseteq\V(D)$ there is a representative possible world $D_{\I}$ in $\P_\I$ such that there is an 
order-homomorphism
from  $D_{\I}$ to $D$


The set  $\P_\I$ of RPWs is finite and we can construct it by the following algorithm, consisting of two main stages:

\paragraph{Stage 1:} In this stage we construct a Boolean query. 
Let $\I$ be a view instance. We use $\I$ to produce a Boolean CQAC rewriting, $R_{\I}$, as follows:\footnote{A rewriting is a CQAC query expressed in terms of the views; it stands alone, it does not have to be contained in a specific query.}

\begin{enumerate}
	\item We turn all the constants in $\I$ to variables so that distinct constants are turned into distinct variables.
	\item We add on the variables the ACs that imply a total ordering, which is the ordering of the constants they came from (recall that constants are from a totally ordered domain).
\end{enumerate}

\paragraph{Stage 2:} The following steps construct the set of RPWs:
\begin{enumerate}
	\item We consider the expansion $R_{\I}^{exp}$ of $R_{\I}$. We consider the set ${\cal{R}}_{\I}$ of the canonical databases of $R_{\I}^{exp}$ for which $R_{\I}^{exp}$ computes to true.
	Each element of $R_{\I}$ is a database t-instance.
	\item For each $D$ in ${\cal{R}}_{\I}$, we do as follows: We apply the chase on $D$ with constraints 
	$\C$. Thus, if the chase succeeds, we derive $D_{chased}$ and
	add it in $\P_\I$ which is the set of representative possible worlds.
\end{enumerate}

This finishes the construction of $\P_\I$. Notice that the databases in $\P_\I$ are exactly all the $\C$-canonical databases of $R_{\I}^{exp}$.

\begin{theorem}
	The above procedure finds all representative possible worlds of the view instance $\I$.
\end{theorem}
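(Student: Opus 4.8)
The plan is to verify that the set $\P_\I$ produced by the two-stage construction satisfies the two defining properties of the representative possible worlds of $\I$: (a) every $D_\I\in\P_\I$ satisfies $\I\subseteq\V(D_\I)$, and (b) for every database $D$ with $D\models\C$ and $\I\subseteq\V(D)$ there is some $D_\I\in\P_\I$ admitting an order-homomorphism $D_\I\to D$. By construction the members of $\P_\I$ are exactly the $\C$-canonical databases of $R_\I^{exp}$, that is, the chased images of the canonical t-instances in ${\cal R}_\I$, so these two properties are all that remain to be shown.

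First I would establish (a). Fix $D_\I\in\P_\I$; it is $D_{chased}$ for some canonical t-instance $D\in{\cal R}_\I$ of $R_\I^{exp}$. Recall that $R_\I$ was obtained from $\I$ by turning distinct constants into distinct variables and adding the ACs that reproduce their original total order; hence the frozen body of $R_\I$ is $\I$ (up to the renaming that freezes its variables back to the original constants). Using the identity $R(\V(D))=R^{exp}(D)$ established earlier, freezing $R_\I^{exp}$ and then applying the view definitions recovers exactly the view atoms of $R_\I$, so $\I\subseteq\V(D)$. Since the chase only adds atoms and equates labeled nulls, it yields an order-homomorphism $D\to D_{chased}$, and CQAC views are preserved along order-homomorphisms; therefore $\I\subseteq\V(D)\subseteq\V(D_{chased})=\V(D_\I)$. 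Because the chase succeeded, $D_\I\models\C$, which completes (a).

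Next I would prove (b), the substantive direction. Take $D\models\C$ with $\I\subseteq\V(D)$. Each view fact $f\in\I$ is produced by some assignment of the body of its defining view into $D$; gluing these assignments over all of $\I$ and composing with the correspondence between $R_\I$ and its expansion gives a single mapping $h$ of the relational atoms of $R_\I^{exp}$ into $D$. Because the ACs of $R_\I$ merely record the original order of the constants of $\I$ and the domain of $D$ is totally ordered, $h$ respects those ACs, so it is an order-homomorphism. The total order that $h$ induces on the variables of $R_\I^{exp}$ is consistent with its ACs, hence selects one canonical t-instance $D'\in{\cal R}_\I$ on which $R_\I^{exp}$ evaluates to true, and $h$ factors as an order-homomorphism $g:D'\to D$. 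Finally, since $D\models\C$, the universal property of the chase (for order-homomorphisms into $\C$-models) guarantees that the chase of $D'$ succeeds, so $D_\I:=D'_{chased}\in\P_\I$, and that $g$ extends to an order-homomorphism $D_\I\to D$, which is exactly what (b) requires.

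The main obstacle I expect is the interaction between the arithmetic comparisons and the chase in part (b). Two points need care: that canonical databases must be enumerated one per consistent total ordering (so that the ordering read off from $D$ actually indexes a member of ${\cal R}_\I$ on which $R_\I^{exp}$ is true), and that the universal-model property of the chase must be re-established for order-homomorphisms rather than ordinary homomorphisms, including the fact that the chase fails precisely when no order-homomorphism into a $\C$-model exists. Granting the chase results summarized in the appendix, the remainder is bookkeeping about freezing and the identity $R(\V(D))=R^{exp}(D)$.
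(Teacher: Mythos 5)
Your argument is correct and is essentially the paper's own proof: you glue the per-fact view assignments into an order-homomorphism from $R_{\I}^{exp}$ into $D$, factor it through the canonical t-instance determined by the induced order (which is exactly what the paper's Proposition~\ref{pro-RPW-property} supplies), and then invoke the universal property of the chase (Theorem~\ref{thm-chase-main22}) to obtain a $\C$-canonical database with an order-homomorphism into $D$. Your additional verification of property (a) and your explicit caveat that the chase theorem must be upgraded from plain homomorphisms to order-homomorphisms only make explicit what the paper leaves implicit, so they are refinements rather than deviations.
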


\begin{proof}
	Let $D$ be a database instance that satisfies the constraints $\C$ and such that $\I\subseteq \V(D)$. The tuples in $\I \cap \V(D)$ are produced by an
	order-homomorphism,
	 $ h_1$, from  $R_{\I}^{exp}$ to $D$. To see that, imagine that we apply the view definitions in $\V$ on $D$ in one step (since we know that $\I\subseteq \V(D)$).
	
	This means that $D$ is contained under $\C$ (we can imagine that $D$ is a Boolean query with no variables, just constants) in $R_{\I}^{exp}$.
	Thus, by the containment test, and taking into account  Theorem~\ref{thm-chase-main22},  there is a $\C$-canonical database of $R_{\I}^{exp}$ that maps isomorphically on $D$ by $h_2$ according to the following proposition. 

	
	\begin{proposition}
		\label{pro-RPW-property}
		Suppose database instance $D$ which, viewed as a Boolean query, is contained in a CQAC $Q$. Then, there is a canonical database of $Q$ that maps isomorphically on $D$.  
	\end{proposition}
\begin{proof}
The proof of this proposition results from the observation that, by definition, the canonical databases of $Q$ represent all homomorphic images of the relational atoms of $Q$ that satisfy the ACs in $Q$. 
\end{proof}
	\vspace*{-.6cm}
\end{proof}


The following is an example showing  how we construct $R_{\I}$ and  $R_{\I}^{exp}$.
\begin{example}
	Consider the query $Q$ and the views $V_1$, $V_2$ with the following definitions.
	
	\begin{center}
		\begin{tabular}{ll}
			$Q: q()$ $\symif$ &$a(X,Y,W), b(Y,Z,W), X\leq 14$\\
		\end{tabular}
	\end{center}
	\begin{center}
		\begin{tabular}{ll}
			$V_1: v_1(X,Y)$ $\symif$ &$a(X,Y,Z),X\leq 9$\\
			$V_2: v_2(X,Y)$ $\symif$ &$b(X,Y,Z)$\\
		\end{tabular}
	\end{center}
	Now, we consider the following view instance:
	$\I:\{v_1(1,2),v_2(2,3),v_1(5,6)\}$.
	We build a Boolean rewriting from $\I$ as we explained above, which, in this specific view instance is the following rewriting:
	
	\begin{center}
		\begin{tabular}{ll}
			$R_{\I}: q()$ $\symif$ &$v_1(X_1,X_2), v_2(X_2,X_3), v_1(X_5,X_6),$\\
			&$X_1< X_2,X_2< X_3,X_5< X_6,X_3< X_5$\\
		\end{tabular}
	\end{center}
where, variable $X_1$ represents constant 1, variable $X_2$ represents constant 2, etc. Since $1<2<3<5<6$, we have added in the above query $X_1< X_2,X_2< X_3,X_5< X_6,X_3< X_5$.
	
	This rewriting is a contained rewriting in the query $Q$. However this is not always the case, e.g., imagine a view instance $\I'$ that contained only $v_1(5,6)$; it is easy to verify that the rewriting built based on this view instance $\I'$ would not have been contained in $Q$.
	
	The expansion of the rewriting $R_{\I}$ is the following:
	
	\begin{center}
		\begin{tabular}{ll}
			$R^{exp}_{\I}: q()$ $\symif$ &$a(X_1,X_2,Z_1), b(X_2,X_3,Z_2), a(X_5,X_6,Z_3),$\\
			&$X_1< X_2,X_2< X_3,X_5< X_6,X_3< X_5$\\
		\end{tabular}
	\end{center}
	
	The representative possibe worlds for $\I:\{v_1(1,2),v_2(2,3),v_1(5,6)\}$ are obtained  from the canonical databases of the expansion $R^{exp}_{\I}$. Each RPW contains the relational atoms in  $R^{exp}_{\I}$ and the variables (labeled nulls) $X_i$ have the total order shown in  $R^{exp}_{\I}$. However the variables (labeled nulls) $Z_i$ can have any ordering, thus all their orderings create more than one RPW. 
\end{example}

\subsection{When a view instance has at least one representative possible world}

There is a broad class of views where the set $\text{certain}_{\C}(Q,\I)$ is always defined  independently of the view instance $\I$, as the following proposition shows. 
\begin{proposition}\label{repeat_vars_lemma}
	Let $\V$ be a set of CQAC views and $Q$ a CQAC query. If there are no egds in the set of constraints $\C$ and, each view definition a) has no repeated variables in the head and b) has no ACs that contain head variables, then the set $\text{certain}_{\C}(Q,\I)$ is defined on any view instance $\I$.
	%
	%
\end{proposition}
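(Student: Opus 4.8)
The plan is to prove the statement by exhibiting a single witness database $D$ that satisfies $\C$ and has $\I\subseteq\V(D)$; by the definition of $\text{certain}_{\C}$, the existence of even one such $D$ is exactly what makes the set defined. First I would build a ``pre-image'' $D_0$ of the view instance $\I$ over the base relations, one view fact at a time. For each fact $v_i(\overline{a})\in\I$ with corresponding view definition $V_i$, I map the head variables of $V_i$ to the constants $\overline{a}$; this is possible precisely because, by hypothesis (a), the head of $V_i$ has no repeated variables, so no two positions of $\overline{a}$ are forced to coincide. The nondistinguished variables of $V_i$ are then mapped to fresh constants, chosen pairwise distinct across all facts processed. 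Crucially, by hypothesis (b) the ACs of $V_i$ contain no head variables, so they constrain only the nondistinguished variables; since these ACs are consistent (standing assumption) and the domain is densely and infinitely ordered, I can always pick values for the nondistinguished variables satisfying them, regardless of the already-fixed head values $\overline{a}$. Collecting the resulting ground relational atoms into $D_0$ gives, by construction, $v_i(\overline{a})\in V_i(D_0)$ for every fact of $\I$, hence $\I\subseteq\V(D_0)$.

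Second, $D_0$ need not satisfy $\C$, so I would chase $D_0$ with $\C$. Because $\C$ contains only tgds and no egds, the chase can never fail (failure arises only when an egd attempts to equate two distinct constants), and by the standing assumption that the chase terminates on $\C$, it halts, yielding an instance $D_1\supseteq D_0$ (possibly containing labeled nulls) that satisfies $\C$. I then instantiate the labeled nulls of $D_1$ by distinct fresh constants to obtain an honest database $D$. Since this instantiation is injective and fixes every constant, it is an isomorphism of relational structures from $D_1$ onto $D$; as tgds and egds are first-order sentences preserved under isomorphism, $D$ still satisfies $\C$.

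Finally I would close the loop. The instantiation fixes the constants of $D_0$, so $D_0\subseteq D_1$ yields $D_0\subseteq D$; since CQAC views are monotone in the database (every assignment witnessing a view fact on a smaller instance still witnesses it on a larger one), we get $\V(D_0)\subseteq\V(D)$ and therefore $\I\subseteq\V(D)$. Thus $D$ is a database satisfying $\C$ with $\I\subseteq\V(D)$, so $\text{certain}_{\C}(Q,\I)$ is defined.

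As for difficulty, the second and third stages are routine chase-plus-monotonicity bookkeeping; the part that genuinely needs the two structural hypotheses is the invertibility of the views in the first stage. The main obstacle I would watch is ensuring the nondistinguished-variable assignment satisfies the view ACs \emph{simultaneously with} the forced head values $\overline{a}$ — exactly where hypothesis (b) is used — and, in the chase stage, verifying that passing from nulls to constants neither re-introduces a constraint violation (handled by the isomorphism observation) nor destroys the already-produced view facts (handled by monotonicity). The role of excluding egds is twofold: it guarantees the chase does not abort, and, together with (a), it rules out view instances $\I$ that would encode an impossible equality between distinct constants.
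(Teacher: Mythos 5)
Your proof is correct and takes essentially the same route as the paper's: hypotheses (a) and (b) enter in exactly the same place, namely to map each view head position-wise onto the constants of its view fact and to satisfy the view's ACs independently of those already-fixed constants, while the absence of egds guarantees the chase with $\C$ cannot fail. The paper phrases this inside its RPW construction (as the existence of an order-homomorphism from each view head to its view tuple), whereas you build the witness database directly and make explicit the chase and null-instantiation steps that the paper's one-paragraph proof leaves implicit.
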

\begin{proof}
	When we construct the RPWs, for each view tuple in the view instance $\I$, we associate position-wise each variable in the head of the view definition with a constant in the view tuple. This should create an order-homomorphism from the head of the view definition to the view tuple. This is possible because there are no duplicate variables and no ACs on the head variables that could be violated.
\vspace*{-.5cm}
\end{proof}
\vspace*{-.4cm}
Towards future work, we begin a discussion on  it in Section \ref{sec-app-unclean} to argue that 
even in the case where certain answers are not defined, an MCR can be used to produce results that ``make sense'', when we assume that we are dealing with 
non-clean data.

\subsection{Main result  }

We will now prove Theorem \ref{owa_mcrs}, which is the main result of this section and its main ingredients are the following Propositions \ref{pro-qi-certain} and \ref{owa_mcrs_lemma}. The first says that if we take  the intersection of all the answers computed by applying the query $Q$ on each of the representative possible worlds we produce all the certain answers of the query. The second one says that there is a CQAC contained rewriting that produces this intersection.
We also need to use the fact that  each CQAC contained rewriting computes only certain answers if applied on a view instance $\I$; this is true by the definition of contained rewriting (Definition \ref{cont-rewr-dfn}).

\begin{proposition}
	\label{pro-qi-certain}
	Let  $\C$ be a set of constraints that are tgds and egds. Let $\V$ be a set of CQAC views and $\I$ a view instance such that the set  $\text{certain}_{\C}(Q,\I)$ is defined.
	Let $Q$ be a CQAC query.
	Then $\bigcap_{\D_\I \in \P_{\I}} Q(D_\I)$ is equal to the certain answers of $Q$ given $\V$ on view instance $\I$ under the constraints $\C$, where $ \P_{\I}$ is the set of representative possible worlds on $\I$.
\end{proposition}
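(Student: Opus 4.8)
The plan is to prove the equality by double inclusion, leaning on the two defining properties of the set $\P_\I$ of representative possible worlds established just above, namely (a) every $D_\I\in\P_\I$ satisfies $\I\subseteq\V(D_\I)$ and, being produced by a successful chase, satisfies $\C$, and (b) for every database $D$ with $\I\subseteq\V(D)$ there is some $D_\I\in\P_\I$ admitting an order-homomorphism $h:D_\I\rightarrow D$. Throughout I would treat answer tuples as tuples of constants, as is standard for certain answers.

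For the inclusion $\text{certain}_\C(Q,\I)\subseteq\bigcap_{D_\I\in\P_\I}Q(D_\I)$, I would observe that each $D_\I$, once its labeled nulls are instantiated by fresh constants respecting its total order, is itself a database that satisfies $\C$ and satisfies $\I\subseteq\V(D_\I)$ by property (a). Hence the family $\{D_\I:D_\I\in\P_\I\}$ is a subfamily of the databases over which the definition of $\text{certain}_\C(Q,\I)$ intersects, and intersecting $Q(\cdot)$ over a subfamily can only enlarge the result, which gives this inclusion immediately.

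The reverse inclusion $\bigcap_{D_\I\in\P_\I}Q(D_\I)\subseteq\text{certain}_\C(Q,\I)$ is the substantive direction. I would fix an arbitrary database $D$ with $\I\subseteq\V(D)$ and $D\models\C$ and show $t\in Q(D)$ for every $t$ in the left-hand intersection. Using property (b), pick $D_\I\in\P_\I$ and an order-homomorphism $h:D_\I\rightarrow D$. The key step is the homomorphism property for CQACs under order-homomorphisms: if $t\in Q(D_\I)$ then $h(t)\in Q(D)$. This follows by composing the satisfying assignment $\sigma$ witnessing $t\in Q(D_\I)$ with $h$, since condition (2) of order-homomorphism sends each body atom of $\sigma$ into $D$ and condition (3) guarantees every arithmetic comparison holding under $\sigma$ still holds under $h\circ\sigma$ in the total order of $D$, so $h\circ\sigma$ is a satisfying assignment producing $h(t)$. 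Since $t$ lies in the intersection we have in particular $t\in Q(D_\I)$, whence $h(t)\in Q(D)$; and because $t$ is a tuple of constants and order-homomorphisms fix constants by condition (1), $h(t)=t$, so $t\in Q(D)$. As $D$ was arbitrary, $t\in\text{certain}_\C(Q,\I)$.

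The hard part will be the homomorphism-preservation lemma for CQACs under order-homomorphisms: one must verify that condition (3) of the order-homomorphism definition is exactly strong enough to transport the arithmetic comparisons of $Q$ (not merely equalities) across $h$, and that we work with t-instances so that the target decides every relevant comparison. The remainder is bookkeeping, namely checking that each chased RPW is a legitimate witness database for $\text{certain}_\C$ so that the first inclusion applies, and that only constant answer tuples ever need to be moved, so that $h(t)=t$ collapses the homomorphic image back to $t$.
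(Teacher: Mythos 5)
Your proposal is correct and follows essentially the same route as the paper's proof: one inclusion is immediate because every representative possible world is itself a legitimate witness database for $\text{certain}_{\C}(Q,\I)$, and the other uses property (b) of RPWs together with the fact that an order-homomorphism $h:D_\I\rightarrow D$ yields $Q(D_\I)\subseteq Q(D)$ (the paper phrases this direction as a contradiction, you argue it directly, which is an immaterial difference). The only distinction is that you explicitly unfold the preservation lemma --- composing the satisfying assignment with $h$ and checking conditions (1)--(3) --- which the paper asserts in a single clause; this is a welcome elaboration, not a different approach.
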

\begin{proof}
	Certainly, $\bigcap_{\D_\I \in \P_{\I}} Q(D_\I)$ is a superset of the set of certain answers. 
	We want to prove that it is also a subset of the set of certain answers.  By contradiction, suppose not. Then, there is a PW $D$  such that the answers of $Q$ on $D$ do not contain all the tuples in $\bigcap_{\D_\I \in \P_{\I}} Q(D_\I)$. This means that there is a tuple $t$ in $\bigcap_{\D_\I \in \P_{\I}} Q(D_\I)$
	which is not in $Q(D)$. However, according to the definition of RPW, there is a RPW
 $D_r$ such that there is an 
	order-homomorphism
	from $D_r$ to $D$, hence $Q(D_r)\subseteq Q(D)$. Since $t$ is in $\bigcap_{\D_\I \in \P_{\I}} Q(D_\I)$, $t$ is also in $Q(D_r)$. Hence contradiction. 
	%
	%
\end{proof}

\begin{proposition}\label{owa_mcrs_lemma}
	Let  $\C$ be a set of constraints that are tgds and egds.
	Let $Q$ be CQAC query and $\V$ be a set of CQAC views. Let $\I$ be a view instance such that
	the set $\text{certain}_{\C}(Q,\I)$ is defined.
	Then, given a
	tuple $t_0 \in certain(Q,\I)$,  there is a contained CQAC rewriting $R$ 
	such that $t_0 \in R(\I)$.
\end{proposition}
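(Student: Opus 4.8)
The plan is to exploit Proposition~\ref{pro-qi-certain} together with the fact that the representative possible worlds are exactly the $\C$-canonical databases of $R_{\I}^{exp}$. Since $t_0\in\text{certain}_{\C}(Q,\I)$ and $\text{certain}_{\C}(Q,\I)=\bigcap_{D_\I\in\P_{\I}}Q(D_\I)$ by Proposition~\ref{pro-qi-certain}, we already know that $Q$ computes $t_0$ on every representative possible world $D_\I\in\P_{\I}$. First I would observe that, since the components of a certain answer range over the active domain of $\I$, each entry of $t_0$ is a constant occurring in $\I$ and hence corresponds, under Stage~1 of the RPW construction, to one of the variables of the Boolean rewriting $R_{\I}$.

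Next I would build the desired rewriting $R$ directly from $R_{\I}$: keep the entire body of $R_{\I}$ (the view subgoals together with the ACs encoding the total order of the constants of $\I$) and replace the Boolean head $q()$ by the head $q(X_{a_1},\dots,X_{a_n})$, where $X_{a_i}$ is the variable of $R_{\I}$ created from the constant $a_i$ of the tuple $t_0=(a_1,\dots,a_n)$ (if some $a_i$ does not occur in $\I$, it is placed in the head as the literal constant $a_i$). This $R$ is a safe CQAC over the view predicates. To see that $t_0\in R(\I)$, consider the assignment $h_0$ that sends each variable $X_c$ back to the constant $c$ from which it was produced. By construction of $R_{\I}$ every view subgoal of $R$ maps under $h_0$ to a tuple of $\I$, and the ACs of $R$ are satisfied because they merely record the genuine order of these constants; hence $h_0$ is a satisfying assignment and it yields the head tuple $h_0(X_{a_1},\dots,X_{a_n})=(a_1,\dots,a_n)=t_0$.

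It then remains to verify that $R$ is a contained rewriting, i.e.\ (by the expansion test under constraints stated earlier) that $R^{exp}\sqsubseteq_\C Q$. Since $R$ and $R_{\I}$ differ only in the head, $R^{exp}$ and $R_{\I}^{exp}$ have the same body, so the $\C$-canonical databases of $R^{exp}$ coincide with those of $R_{\I}^{exp}$, which are exactly the representative possible worlds $\P_{\I}$. Applying the $\C$-canonical database containment test, $R^{exp}\sqsubseteq_\C Q$ holds iff, for each $D_\I\in\P_{\I}$, the frozen head tuple carried by $R^{exp}$ on $D_\I$ lies in $Q(D_\I)$; this frozen head tuple is precisely the image of $t_0$ under the freezing that defines $D_\I$, and we already established $t_0\in Q(D_\I)$ for every $D_\I\in\P_{\I}$. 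Thus $R^{exp}\sqsubseteq_\C Q$, so $R$ is a contained CQAC rewriting with $t_0\in R(\I)$, as required.

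The main obstacle I anticipate is the bookkeeping of the correspondence between the original constants of $\I$ and the fresh constants introduced when the canonical databases (and hence the RPWs) are formed: one must check that, under this freezing, the head tuple that $R^{exp}$ carries along coincides component-wise with the image of $t_0$ that Proposition~\ref{pro-qi-certain} places in $\bigcap_{D_\I\in\P_{\I}}Q(D_\I)$. A secondary point to pin down is that every component of a certain answer indeed originates from $\I$ rather than from a labeled null introduced by a tgd during the chase, which is what guarantees that the head of $R$ can be expressed over the variables of $R_{\I}$.
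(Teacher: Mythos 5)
Your proposal is correct and takes essentially the same route as the paper's own proof: both define $R$ as $R_{\I}$ with the variables representing the constants of $t_0$ promoted to the head, and both establish $R^{exp}\sqsubseteq_\C Q$ by noting that the $\C$-canonical databases of $R^{exp}$ coincide with those of $R_{\I}^{exp}$, i.e.\ with the representative possible worlds, on each of which $Q$ computes $t_0$. Your explicit check that $t_0\in R(\I)$ and your handling of constants of $t_0$ not occurring in $\I$ are details the paper leaves implicit, but they do not alter the argument.
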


\begin{proof}

	%
	%
	%
	%
	
	We consider as $R$ the Boolean query $R_{\I}$ with the proper variables in the head that are the variables that represent the constants in $t_0$.
	
	Now we need to prove that $R$ is a contained rewriting. $R$ was created from $R_{\I}$ which produces all the RPWs. Since $t_0$ is in the certain answers of the query $Q$, there is a 
	order-homomorphism
from $Q$ to every RPW and this order-homomorphism produces $t_0$. 
	All the RPWs are all the canonical databases of $R_{\I}^{exp}$ chased with the constraints.  Hence the previously mentioned order-homomorphisms provide the proof for the containment test that proves containment of $R_{\I}^{exp}$ to  $Q$ under the constraints $\C$. Since $R$ only differs from $R_{\I}$  as to the head, the same 
	order-homomorphisms
	can be used to prove containment of $R^{exp}$ to  $Q$ under the constraints $\C$. 
	%
	%
	%
	%
\end{proof}
%

We now put all together to finish the   proof of Theorem \ref{owa_mcrs}:
\begin{proof}{\bf (Theorem \ref{owa_mcrs})}
	We will show the following:
	\begin{enumerate}
		\item $\P(\I) \subseteq$ certain$(Q,\I)$
		\item certain$(Q,\I)\subseteq \P(\I)$
	\end{enumerate}
	Since $\P$ is a contained rewriting of $Q$, the first is a direct consequence of the definition of a contained rewriting.
	
	To prove (2), we use the two propositions. One proposition says that we can compute all the certain answers by considering only a finite number of possible worlds, $\P(\I)$.  The other one uses $\P(\I)$ to prove that there is   CQAC contained rewriting which computes a tuple $t_0$ if this tuple is in certain answers.
\end{proof}


\section{Finding MCR for CQAC-RSI1+ Query and CQAC$^-$ Views}
\label{subsec-mcr-datalogAC}

In this section, we show  that for a RSI1+ query and a special case of CQAC views, we can find an MCR in the language of (possibly infinite)  union of CQACs.
We will show that this MCR is expressed in
Datalog$^{AC}$. In detail, we consider
the following case of query and views:
\begin{itemize}
	\item There are only closed  arithmetic comparisons in both query and views.
	\item The views are  CQAC queries which do not use ACs of the form $X\leq Y$ or $X\geq Y$ where $X$ is a head variable and $Y$ is a nondistinguished variable. We call this class of CQAC queries CQAC$^{-}$.
	\item The query is a esRSI1+ query.
	%
\end{itemize}


We think of an expansion of a rewriting as having three kinds of variables: a) the head variables, b) the {\em view-head} variables, which   are all the variables that are present in the rewriting and c) the {\em view-nondistinguished} variables, which are  all the other variables in the expansion of the rewriting (these do not appear in the rewriting). The head variables are also view-head variables.
\subsection{ACs in rewritings}
\label{subsec-rectified}
Consider  a CQAC query and a set of CQAC views.
When we have a  rewriting $R$ the  variables in the rewriting also satisfy some ACs that are in the closure of the ACs in the expansion of the rewriting. We include those ACs in the rewriting $R$ and produce $R'$, which we call the {\em AC-rectified rewriting of $R$}. Thus, the expansions of $R$ and $R'$ are equivalent queries. Hence, we derive the following proposition:

\begin{proposition}
	Given  a set of CQAC views, a  rewriting $R$  and its rectified version $R'$, the following is true: For any
	view instance $\I$ such that there is a database instance $D$  for which  $\I\subseteq \V(D)$, we have that $R(\I)=R'(\I)$.
\end{proposition}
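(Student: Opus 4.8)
The plan is to prove the equality $R(\I)=R'(\I)$ by establishing the two inclusions separately. The inclusion $R'(\I)\subseteq R(\I)$ is immediate and holds for \emph{every} instance, legitimate or not: by construction $R'$ has exactly the same view subgoals and the same head as $R$, and its arithmetic comparisons are those of $R$ together with the extra ACs coming from the closure of the ACs of $R^{exp}$. Hence any assignment that satisfies the subgoals and ACs of $R'$ a fortiori satisfies those of $R$, so every tuple produced by $R'$ on $\I$ is also produced by $R$.

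The content is therefore the reverse inclusion $R(\I)\subseteq R'(\I)$, and this is where the hypothesis that $\I$ is legitimate (i.e.\ $\I\subseteq\V(D)$ for some $D$) is used. First I would fix a tuple $t\in R(\I)$ together with the assignment $\sigma$ from the variables of $R$ to the values of $\I$ that witnesses it: each view subgoal $v_i(\overline{X}_i)$ of $R$ maps under $\sigma$ to a fact of $\I$, $\sigma$ satisfies the ACs of $R$, and $\sigma$ sends the head to $t$. Because $\I\subseteq\V(D)$, every matched fact $v_i(\sigma(\overline{X}_i))$ lies in $\V(D)$, so it is produced by some satisfying valuation $\tau_i$ of the definition $V_i$ on $D$; in particular $\tau_i$ makes the body atoms of $V_i$ true in $D$, satisfies the ACs of $V_i$, and agrees with $\sigma$ on the distinguished (view-head) variables. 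I would then glue the $\tau_i$ together with $\sigma$ into a single valuation $\rho$ of the variables of $R^{exp}$: on the view-head variables $\rho$ equals $\sigma$, and on the fresh view-nondistinguished variables introduced for the $i$-th view atom $\rho$ follows $\tau_i$. The consistency of this gluing is exactly the agreement of $\tau_i$ and $\sigma$ on the distinguished variables, and no clashes arise among the nondistinguished variables since the expansion renames them to be pairwise fresh.

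By construction $\rho$ maps all relational subgoals of $R^{exp}$ into $D$ and satisfies all of its ACs: the ACs inherited from each view body are satisfied via the corresponding $\tau_i$, and the ACs inherited from $R$ are satisfied since $\sigma$ already satisfies them and $\rho=\sigma$ on the view-head variables. Consequently $\rho$ satisfies the entire closure of the ACs of $R^{exp}$. The extra ACs added to form $R'$ are, by definition, members of this closure that mention only view-head variables; on those variables $\rho$ coincides with $\sigma$, so $\sigma$ itself satisfies the extra ACs. Since $\sigma$ already satisfies the view subgoals and the original ACs of $R$, it satisfies all of $R'$, and therefore $t\in R'(\I)$.

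The routine part is the bookkeeping of the gluing; the one genuinely essential point, and the place where a careless argument would go wrong, is that the added ACs of $R'$ cannot simply be read off from the matched view facts in $\I$ (an arbitrary instance need not respect the view-body ACs), but become valid only after passing through the witnessing valuations $\tau_i$ guaranteed by $\I\subseteq\V(D)$. This is precisely why the statement is restricted to legitimate view instances, and it mirrors the identity $R(\V(D))=R^{exp}(D)$ used earlier, here applied to the sub-instance $\I$ rather than to all of $\V(D)$.
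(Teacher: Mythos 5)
Your proof is correct. For comparison: the paper does not really spell out a proof at all --- it observes that the ACs added to form $R'$ lie in the closure of the ACs of $R^{exp}$, concludes that the expansions of $R$ and $R'$ are equivalent queries, and presents the proposition as an immediate consequence ("Hence, we derive the following proposition"). Your argument supplies exactly the step this sketch leaves implicit, and it is a step worth making explicit: equivalence of $R^{exp}$ and $R'^{exp}$ does not by itself give $R(\I)=R'(\I)$, because $\I$ may be a \emph{proper} subset of $\V(D)$, so one cannot identify $R(\I)$ with $R^{exp}(D)$. Your gluing construction --- taking the witnessing assignment $\sigma$ over $\I$, pulling each matched view fact back to a satisfying valuation $\tau_i$ of the view body over $D$ (which exists precisely because $\I\subseteq\V(D)$), and assembling a valuation $\rho$ of $R^{exp}$ that satisfies all its ACs and hence their closure --- is the correct way to bridge that gap, and your closing remark pinpoints the right moral: the added ACs of $R'$ become valid only through the witnessing valuations guaranteed by legitimacy of $\I$, which is exactly why the proposition carries that hypothesis. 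In short, the underlying idea (closure ACs of the expansion restricted to view-head variables) is the same as the paper's, but your write-up is a genuine proof where the paper offers a one-line derivation, and the trivial inclusion $R'(\I)\subseteq R(\I)$ is also cleanly separated out.
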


\begin{definition}
	We say that a  rewriting $R$ is {\em AC-contained} in a   rewriting $R_1$  if the AC-rectified rewriting $R'$ of $R$ is contained in $R_1$ as queries.
\end{definition}
From hereon,
when we refer to a  rewriting, we mean the AC-rectified version of it and when we say that a   rewriting is contained in another  rewriting we mean that it is AC-contained.
An example follows.
\begin{example}
	\label{ex:export-nondist1}
	Consider query $Q$ and view $V_2$:
	\begin{center}
		\begin{tabular}{l l}
			$Q(A)$     & $~\hbox{\rm :-}~~p(A), A < 4.$\\
			$V_2(Y,Z)$ & $~\hbox{\rm :-}~~p(X), s(Y,Z), Y \leq X, X \leq Z.$\\
		\end{tabular}
	\end{center}
	The following rewriting is a contained rewriting of the query in terms of the view in the language CQAC:
	$$R(Y_1)~~\hbox{\rm :-}~~V_2(Y_1,Z_1),V_2(Y_2,Z_2), Z_1\leq Y_2, Y_1\geq Z_2, Y_1 < 4.$$
	
	\noindent
	Now consider the following contained rewriting:
	
	$R'(X) $ ~\hbox{\rm :-}~ $V_2(X,X), X < 4.$
	
	\noindent
	This rewriting uses only one copy of the view. We can show that $R$ is not contained in $R'$ and that
	$R'$ is not contained in $R$. However, they compute the same output on any view instance (to see that just include in $R$ the ACs $Y_1\leq Z_1$ and $Y_2\leq Z_2$).
	
	
\end{example}

\subsection{Building MCRs for RSI1 queries}
\label{subsec-buildingMCRs}
First, in this subsection, we present the algorithm for building an MCR in the language of (possibly infinite) union of CQACs for the case of CQAC$^-$ views and queries that are RSI1. The algorithm for building  an MCR  for query $Q$ and view set $\V$ is the following:

%
%
%
%

{\bf Algorithm MCR-RSI1}:

\begin{enumerate}
	
	\item For the query $Q$, we construct the Datalog query $Q^{Datalog}$.
	We use the construction in Subsection \ref{subsec-construct-Datalog}. The link rules will use the constants present in the views and in the query.
	
	\item For each view $v_i$ in $\V$, we construct a new view $v_i^{CQ}$.  We use the construction in Subsection \ref{subsec-contained-trans}.
	
	\item Consider the EDB predicates introduced  in Section  \ref{ssec:mot} (and used in Steps 2 and 3 above) which encode ACs. We call them AC-EDB predicates and use them to construct a new set of {\em auxiliary} views as follows: a) Views with head  $u_{\theta c}$, one for each
	semi-unary predicate $U_{\theta c}$. The definition is $u_{\theta c}(\overline{W},X)~\symif~
	U_{\theta c}(\overline{W},X)$. b) A single view  $u$, whose definition is $u(X,Y)~\symif~
	U(X,Y)$. We will refer to those EDB predicates (i.e., the $U(X,Y)$ predicate and the semi-unary predicates) as {\em AC-predicates} or {\em AC-subgoals}.
	
	\item We consider now the view set $\V{^{CQ}}$ that contains the views as constructed in the two previous steps above.
	
	\item We find an MCR $R^{CQ}_{MCR}$ for the Datalog query $Q^{Datalog}$ using the views
	in $\V{^{CQ}}$. For building the MCR we use the inverse rule algorithm  \cite{Duschka97-3}.
	
	\item To obtain an MCR $R_{MCR}$ for $Q'$, we replace in the found MCR $R^{CQ}_{MCR}$, each $v_i^{CQ}$ by
	$v_i$, each $u_{\theta c}(X)$ by arithmetic comparison $X \theta c$ and each $u(X,Y)$ by arithmetic comparison $X\leq Y$.
\end{enumerate}

\begin{example}
	\label{ex-prime-old1}
In this example,
the reverse rule algorithm produces an MCR without including the auxiliary views, hence, we have not written these views, in order to keep things simple.
	Consider the query $Q_1$ and the views:
	\begin{center}
		\begin{tabular}{ll}
			$Q_1()$    & $\mathrm{:-}~e(X,Z),e(Z,Y),X\geq 5,Y\leq 8.$\\
			$V_1(X,Y)$ & $\mathrm{:-}~e(X,Z),e(Z,Y),Z\geq 5.$\\
			$V_2(X,Y)$ & $\mathrm{:-}~e(X,Z),e(Z,Y),Z\leq 8.$\\
			$V_3(X,Y)$ & $\mathrm{:-}~e(X,Z_1),e(Z_1,Z_2),e(Z_2,Z_3),e(Z_3,Y).$\\
		\end{tabular}
	\end{center}
	
	We have already built the Datalog program $Q_1^{Datalog}$ in Example~\ref{ex-running-dat} in a more general setting, where we assume that the query is not Boolean. Here, we use the same $Q_1^{Datalog}$ only that we delete $\overline{W}$ from all the rules. We need to add the link rules which will be with respect to constants 5 and 8 (these are the only constants that appear in the definitions).
	
	The views that will be used to apply the inverse-rule algorithm are:

	\begin{center}
		\begin{tabular}{ll}
			$V'_1(X,Y)$ & $\mathrm{:-}~e(X,Z),e(Z,Y),U_{\geq 5}(Z).$\\
			$V'_2(X,Y)$ & $\mathrm{:-}~e(X,Z),e(Z,Y),U_{\leq 8}(Z).$\\
			$V'_3(X,Y)$ & $\mathrm{:-}~e(X,Z_1),e(Z_1,Z_2),e(Z_2,Z_3),e(Z_3,Y).$\\
		\end{tabular}
	\end{center}
	Notice that we conveniently did not add any auxiliary views here because we guessed that they will not be needed.
	
	In this example, it is relatively easy to anticipate the result of applying the inverse-rule algorithm, by observing the simple form of the expansions of $Q_1^{Datalog}$. Each expansion of $Q_1^{Datalog}$ is a simple path with two unary predicates, one at one end of the path and the other at the other end. Thus, the output of the  inverse-rule algorithm is the following program. It is an MCR of $Q_1^{Datalog}$ using the views $V'_1(X,Y)$, $V'_2(X,Y)$, and  $V'_3(X,Y)$.
	
	\begin{center}
		\begin{tabular}{ll}
			$R'() $ & $\mathrm{:-}~v'_1(X,W),T(W,Z),v'_2(Z,Y).$\\
			$T(W,W)$ & $\mathrm{:-}~.$\\
			$T(W,Z)$ & $\mathrm{:-}~T(W,U),v'_3(U,Z).$\\
		\end{tabular}
	\end{center}

	The following is an MCR of the input query $Q_1$ (rather than of $Q_1^{Datalog}$) using the views  $V_1(X,Y)$, $V_2(X,Y)$ and  $V_3(X,Y)$:
	\begin{center}
		\begin{tabular}{ll}
			$R() $ & $\mathrm{:-}~V_1(X,W),T(W,Z),V_2(Z,Y).$\\
			$T(W,W)$ & $\mathrm{:-}~.$\\
			$T(W,Z)$ & $\mathrm{:-}~T(W,U),V_3(U,Z).$\\
		\end{tabular}
	\end{center}
\end{example}

\subsection{Proof that the algorithm {\bf MCR-RSI1} is correct}
The proof of the following proposition is a straightforward consequence of the construction of $R_{MCR}$ from  $R^{CQ}_{MCR}$.

\begin{proposition}
	\label{pro-conncect}
	Consider the Datalog programs $R^{CQ}_{MCR}$ and $R_{MCR}$. For each CQAC Datalog-expansion $E$ of $R_{MCR}$, there is   a CQ Datalog-expansion $E^{CQ}$ of $R^{CQ}_{MCR}$ (and vice versa), where the following is true:  The relational subgoals of $E$ are isomorphic to the purely relational subgoals of $E^{CQ}$ (by ``purely relational subgoals we mean those that do not encode ACs) and each AC in $E$ corresponds to a subgoal in $E^{CQ}$  that encodes this AC.
\end{proposition}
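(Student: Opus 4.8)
The plan is to exploit the fact that, by Step 6 of Algorithm \textbf{MCR-RSI1}, the program $R_{MCR}$ is obtained from $R^{CQ}_{MCR}$ by a single rule-wise syntactic substitution $\sigma$ that rewrites only the EDB/AC subgoals and leaves every IDB atom and every rule head untouched. Concretely, $\sigma$ sends each view atom $v_i^{CQ}(\overline{X})$ to $v_i(\overline{X})$, each semi-unary atom $u_{\theta c}(X)$ to the comparison $X\,\theta\,c$, and each atom $u(X,Y)$ to the comparison $X\leq Y$. First I would record that $\sigma$ is a bijection between the relevant classes of atoms: the three source atom types map to three syntactically distinguishable target types (a view atom, a semi-interval comparison against a constant, and a two-variable $\leq$-comparison, the latter being the only form of two-variable AC that occurs, since $U$ is the unique binary AC-predicate and encodes $X\leq Y$). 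Hence $\sigma^{-1}$ is well defined on the atoms occurring in expansions of $R_{MCR}$.

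Because $\sigma$ acts rule by rule and preserves every IDB subgoal and head verbatim, the two programs share the same set of IDB predicates and their rules are in bijective correspondence, so their ``unfolding skeletons'' coincide. The core step is a commutation claim: for every partial expansion $P^{CQ}$ obtained from $R^{CQ}_{MCR}$ by a sequence of unfoldings, the image $\sigma(P^{CQ})$ is exactly the partial expansion of $R_{MCR}$ obtained by performing the corresponding unfoldings (using $\sigma$-corresponding rules and the same choice of fresh nondistinguished variables), and symmetrically for $\sigma^{-1}$. I would prove this by induction on the number of unfolding steps. The base case is the initial query rule, where $\sigma$ maps the chosen rule of $R^{CQ}_{MCR}$ to the corresponding rule of $R_{MCR}$. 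For the inductive step, unfolding an IDB subgoal $p(\overline{t})$ replaces it by a freshly renamed rule body; since $\sigma$ fixes $p(\overline{t})$, commutes with variable renaming, and is defined rule-wise, substituting-then-unfolding and unfolding-then-substituting yield the same result.

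Applying the commutation claim to a complete Datalog-expansion, where no IDB subgoal remains, gives the desired correspondence: if $E^{CQ}$ is a CQ Datalog-expansion of $R^{CQ}_{MCR}$ (containing only the EDBs $v_i^{CQ}$, $u_{\theta c}$, $u$), then $E=\sigma(E^{CQ})$ is a CQAC Datalog-expansion of $R_{MCR}$, and conversely $E^{CQ}=\sigma^{-1}(E)$. Reading off $\sigma$ on each atom type then yields the two asserted properties: the relational subgoals of $E$ are the $v_i$ atoms, which are identical to the purely relational subgoals $v_i^{CQ}$ of $E^{CQ}$ up to the predicate renaming $v_i^{CQ}\mapsto v_i$ (hence isomorphic), while each AC of $E$ is the $\sigma$-image of precisely the $u_{\theta c}$ or $u$ subgoal of $E^{CQ}$ that encodes it.

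I expect the only delicate point to be the bookkeeping in the commutation claim, namely ensuring that the fresh-variable renamings used when unfolding in the two programs can be chosen consistently so that $\sigma$ literally maps one partial expansion onto the other; once the renamings are synchronized (which is possible because the IDB structure is identical), everything else is a direct reading of the definition of $\sigma$. Invertibility of $\sigma$ on the occurring atoms, in particular the absence of any two-variable AC other than $\leq$ and the disjointness of the constant-comparison and two-variable-comparison forms, is what secures the ``and vice versa'' direction.
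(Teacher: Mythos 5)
Your proof is correct and takes essentially the same route as the paper, whose entire argument is the remark that the proposition ``is a straightforward consequence of the construction of $R_{MCR}$ from $R^{CQ}_{MCR}$'' (Step 6 of Algorithm \textbf{MCR-RSI1}). Your induction showing that the atom-wise substitution $\sigma$ commutes with rule unfolding, together with the observation that $\sigma$ is invertible on the atom types that actually occur, simply supplies the details the paper leaves implicit.
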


\begin{theorem}
	\label{thm-mcr-cont}
Given a query $Q$ which is RSI1 and views $\V$ which are
	CQAC$^{-}$s, the following is true:
	Let $R$ be a  CQAC contained rewriting of $Q$ in terms of $\V$. Then $R$ is contained in the one found by the algorithm in Subsection~\ref{subsec-buildingMCRs} Datalog$^{AC}$ program $R_{MCR}$.
\end{theorem}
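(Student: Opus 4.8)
The strategy is to transfer the problem into the CQ/Datalog world, where the inverse--rule algorithm already guarantees maximality, and then transfer the resulting containment back to the CQAC setting. Throughout I may assume, by the convention of Subsection~\ref{subsec-rectified}, that $R$ is AC--rectified. Since $Q$ is RSI1 it has no ACs on its head variables, so the head entailment of Proposition~\ref{prop:single-map-vars} is vacuous; containment in $Q$ is therefore governed entirely by the body containment entailment, i.e.\ by the Datalog transform $Q^{Datalog}$ through Theorem~\ref{thm:main123}. This is what lets me work purely with the transformed queries.

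First I would transform $R$ into a CQ rewriting $R^{CQ}$ of $Q^{Datalog}$ over the view set $\V{^{CQ}}$, following the construction of Subsection~\ref{subsec-contained-trans}: each view subgoal $v_i$ is replaced by $v_i^{CQ}$, each AC $X\theta c$ of $R$ by the auxiliary--view subgoal $u_{\theta c}(X)$, and each AC $X\leq Y$ by $u(X,Y)$. I then claim that $R^{CQ}$ is a \emph{contained} rewriting of $Q^{Datalog}$ with respect to $\V{^{CQ}}$. Indeed, since $R$ is a contained rewriting of $Q$, the characterization of contained rewritings via expansions gives $R^{exp}\sqsubseteq Q$; as $(Q,R^{exp})$ is an RSI1 disjoint--AC pair (the views are CQAC$^{-}$ with closed ACs, so $R^{exp}$ has only closed ACs), Theorem~\ref{thm:main123} turns this into $Q^{Datalog}$ containing $(R^{exp})^{CQ}$. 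The expansion of $R^{CQ}$ over $\V{^{CQ}}$ coincides, modulo the AC--encoding subgoals supplied by the link and coupling rules of $Q^{Datalog}$, with $(R^{exp})^{CQ}$, so $Q^{Datalog}$ contains $(R^{CQ})^{exp}$ and $R^{CQ}$ is a contained rewriting. Because $R^{CQ}_{MCR}$ is an MCR of $Q^{Datalog}$ over $\V{^{CQ}}$ produced by the inverse--rule algorithm~\cite{Duschka97-3}, maximality yields $R^{CQ}\sqsubseteq R^{CQ}_{MCR}$.

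Next I would translate this containment back. Since $R^{CQ}$ is a CQ and $R^{CQ}_{MCR}$ a Datalog program, $R^{CQ}\sqsubseteq R^{CQ}_{MCR}$ means there is a Datalog--expansion $E^{CQ}$ of $R^{CQ}_{MCR}$ together with a containment mapping $\mu$ from $E^{CQ}$ into $R^{CQ}$. By Proposition~\ref{pro-conncect}, $E^{CQ}$ corresponds to a CQAC Datalog--expansion $E$ of $R_{MCR}$ whose relational subgoals are isomorphic to the purely relational subgoals of $E^{CQ}$ and whose ACs are exactly those encoded by the AC--subgoals of $E^{CQ}$. Reading $\mu$ on the relational part gives a single containment mapping from $E$ to $R$, and because $\mu$ sends each AC--encoding subgoal of $E^{CQ}$ to an AC--encoding subgoal of $R^{CQ}$, every AC in $\mu(\beta_E)$ is encoded among the ACs of the AC--rectified $R$; hence $\beta_R\Rightarrow\mu(\beta_E)$. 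A single containment mapping satisfying this implication already witnesses the CQAC containment $R\sqsubseteq E$, and since $E$ is one of the Datalog--expansions of $R_{MCR}$ we conclude $R\sqsubseteq E\sqsubseteq R_{MCR}$, i.e.\ $R$ is (AC--)contained in $R_{MCR}$.

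The main obstacle is the identification, used in the second paragraph, of the expansion of $R^{CQ}$ with $(R^{exp})^{CQ}$. The former encodes only the ACs written explicitly in $R$ and, per component, in the view definitions, whereas $(R^{exp})^{CQ}$ encodes the full joint closure of all these ACs. Reconciling the two requires showing that the link rules (encoding $X\leq c_1\Rightarrow X\leq c_2$) and the two kinds of coupling rules of $Q^{Datalog}$ reconstruct precisely the jointly--implied ACs during the evaluation, so that the outcome of the Datalog containment test is insensitive to whether the closure is pre--computed. This is where AC--rectification of $R$ and the bookkeeping of the $U$, $I$, and $J$ predicates from Subsection~\ref{subsec-construct-Datalog} are essential, and it is the step that most closely mirrors, and relies on, the proof of Theorem~\ref{thm:main123}.
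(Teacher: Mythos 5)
Your overall route is the same as the paper's: transform $R$ into a CQ rewriting over $\V^{CQ}$, use Theorem~\ref{thm:main123} together with the maximality of the inverse-rule MCR $R^{CQ}_{MCR}$ to get containment in the CQ/Datalog world, and pull the containment back through Proposition~\ref{pro-conncect} (your third paragraph matches the paper's back-translation). But the ``main obstacle'' you flag at the end --- identifying the $\V^{CQ}$-expansion of your $R^{CQ}$ with $(R^{exp})^{CQ}$ --- is not a deferrable detail: it is the crux of the theorem, and the mechanism you propose for resolving it would fail. The two CQs genuinely differ: $(R^{exp})^{CQ}$ encodes the \emph{joint} closure of all ACs of $R^{exp}$, while the expansion of $R^{CQ}$ encodes only the ACs of the rectified $R$ plus, per view atom, each view's own closure. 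The missing AC-encodings are \emph{EDB} subgoals ($U$, $U_{\theta c}$); their absence shrinks the canonical database on which $Q^{Datalog}$ is evaluated, and the link and coupling rules derive $I$- and $J$-facts, never $U$-facts, so the Datalog evaluation cannot ``reconstruct'' missing jointly-implied ACs. Concretely, Theorem~\ref{thm:main123} applied to $R^{exp}\sqsubseteq Q$ yields containment of the \emph{larger} CQ $(R^{exp})^{CQ}$ in $Q^{Datalog}$, which is strictly weaker than the containment of $(R^{CQ})^{exp}$ that you need in order to conclude that $R^{CQ}$ is a contained rewriting of $Q^{Datalog}$.

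The paper closes this gap not with the Datalog machinery but with a structural argument, and this is exactly where the CQAC$^-$ hypothesis enters (your proof uses only the closedness of the views' ACs, never the ban on ACs between head and nondistinguished view variables). Classify the ACs in the closure of $R_{exp}$ into: (a) ACs carried over from single view definitions, (b) ACs among view-head variables only, and (c) everything else, i.e., jointly implied ACs involving view-nondistinguished variables. Because the views are CQAC$^-$, class (c) is empty: producing such an AC would require combining ACs from two different view atoms, each relating a view-nondistinguished variable to a view-head variable of its definition, which CQAC$^-$ forbids. Class (b) ACs already appear in $R$ by AC-rectification. Hence the joint closure adds nothing beyond what $R^{CQ}$ and the $v_i^{CQ}$ encode, and the two expansions coincide. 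This step is essential, not cosmetic: the paper's example in the subsection ``Extending the result'' exhibits general CQAC views for which precisely such a class-(c) AC (there, $Z\leq X$) arises and the theorem fails, so no argument that bypasses the CQAC$^-$ restriction at this point can be correct.
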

\begin{proof}
	Let $R$ be a CQAC contained rewriting of $Q$ using $\V$ and let $R_{exp}$ be the view-expansion of $R$.
	We assume $R$ is AC-rectified (see Subsection \ref{subsec-rectified}). Let $Q^{Datalog}$ be the transformed query of $Q$ as in Subsection  \ref{subsec-construct-Datalog}. We argue using the following rewritings and their expansions:
	

	
	\begin{itemize}
		\item $R$ is a CQAC query which is a contained rewriting of $Q$ using $\V$.
		
		\item $R_{exp}$ is the view-expansion of $R$ with respect to $\V$.
		
		\item $R'$ is $R$ with ACs in the closure of ACs in $R$ now being relational  predicates.
		
		\item $R'_{exp}$  is the view-expansion of $R'$ with respect to $\V^{CQ}$.
		\end{itemize}
We also consider:
		\begin{itemize}

		\item $R_{exp}^{CQ}$  is $R_{exp}$  transformed into a CQ as in Subsection \ref{subsec-contained-trans}.
		
		\item $R^{CQ}$ is the rewriting that we prove can be created from $R_{exp}^{CQ}$ (we mean contained rewriting of $Q^{Datalog}$ using  $\V^{CQ}$).
		
	\end{itemize}

First, we observe that because we have the auxiliary views in $\V^{CQ}$, $R'$ is a rewriting in terms of $\V^{CQ}$ (we do not know yet whether it is a contained rewriting to the Datalog query).

The closure of ACs in $R_{exp}$ may contain: a) ACs carried over from the views definitions, b) ACs that involve only view-head variables and c) ACs that involve view-nondistinguished variables and are not in class (a) or (b).
Because of the constraint on the views to be only CQAC$^-$, the third class (c) does not exist.
The reason is that, in this class, belong ACs that are implied from at least two ACs, each one carried over from views definitions of two different view atoms in $R$. For this to happen, these two ACs should, each, relate a nondistinguished variable in a view definition with a head variable in the view definition (this is represented as view-head variable in $R_{exp}$).

Because $R$ is AC-rectified the second class (b) of ACs appear in $R$ too. Thus $R'_{exp}$ can be viewed as resulting from $R_{exp}$ by a CQ-transformation (i.e., according to the
Subsection \ref{subsec-contained-trans}). Thus we argue as follows:
Since $R$ is a contained rewriting to $Q$, $R_{exp}$ is a contained query to $Q$, and according to the results of Section 5, $R'_{exp}$ is a contained query to $Q^{Datalog}$, hence $R'$ is a contained rewriting to $Q^{Datalog}$. Hence $R'$ is contained in $R^{CQ}_{MCR}$.

Now, we  use  Proposition \ref{pro-conncect}.
$R$ and $R'$ differ only in that the ACs of one are AC-predicates of the other, in one to one fashion. For any Datalog-expansion of $R_{MCR}$ there is a Datalog-expansion of  $R^{CQ}_{MCR}$ (and vice versa) that differ in the same way. Hence the Datalog-expansion of $R^{CQ}_{MCR}$ that proves $R'$ is in
	$R^{CQ}_{MCR}$ can be used to derive a Datalog-expansion of $R_{MCR}$ that proves $R$ is in
	$R_{MCR}$.

	\end{proof}

\begin{theorem}
\label{thm-mcr-in-query}
Given a query $Q$ which is CQAC-RSI1 and views $\V$ which are
	CQAC$^{-}$s, the following is true:
	The found by the algorithm in Subsection~\ref{subsec-buildingMCRs} Datalog$^{AC}$ program, $R_{MCR}$,   is a contained rewriting.
\end{theorem}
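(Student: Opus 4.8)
The plan is to verify the defining condition of a contained rewriting, namely $R_{MCR}^{exp}\sqsubseteq Q$, by working expansion-by-expansion. Since $R_{MCR}$ is a Datalog$^{AC}$ program it is equivalent to the (possibly infinite) union of its Datalog$^{AC}$-expansions, so it suffices to fix an arbitrary Datalog$^{AC}$-expansion $E$ of $R_{MCR}$, let $E^{exp}$ be its view-expansion with respect to $\V$, and prove $E^{exp}\sqsubseteq Q$. I would establish this by routing the argument through the transformed world of Section~\ref{ssec:mot} rather than reasoning about ACs directly.

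First I would invoke the soundness of the inverse-rule algorithm \cite{Duschka97-3}: because $Q^{Datalog}$ is a pure Datalog query and the views in $\V^{CQ}$ are pure CQs (the AC-EDB predicates being treated as ordinary relations), the program $R^{CQ}_{MCR}$ produced in Step~5 of Algorithm MCR-RSI1 is a \emph{contained} rewriting of $Q^{Datalog}$ using $\V^{CQ}$. By Proposition~\ref{pro-conncect}, the chosen expansion $E$ corresponds to a CQ Datalog-expansion $E^{CQ}$ of $R^{CQ}_{MCR}$ whose purely relational subgoals are isomorphic to those of $E$ and whose AC-subgoals encode exactly the ACs of $E$. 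Soundness then gives $(E^{CQ})^{exp}\sqsubseteq Q^{Datalog}$, where this expansion is taken with respect to $\V^{CQ}$.

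The central step is to compare the two ways of crossing between the CQAC world and the transformed world: $(E^{exp})^{CQ}$ (first expand, then CQ-transform as in Subsection~\ref{subsec-contained-trans}) against $(E^{CQ})^{exp}$ (first CQ-transform the views, then expand). Both have, up to isomorphism, the same relational atoms, namely the relational bodies of the view definitions. The AC-EDB subgoals of $(E^{CQ})^{exp}$ come from two sources only: the closure of the ACs inside each view $v_i$ (carried in by $v_i^{CQ}$) and the ACs of $E$ on view-head variables (carried in by the auxiliary views $u_{\theta c}$ and $u$). Every such AC lies in the closure of the ACs of $E^{exp}$, so the AC-EDB subgoals of $(E^{CQ})^{exp}$ form a subset of those of $(E^{exp})^{CQ}$. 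Hence the identity map is a containment mapping from $(E^{CQ})^{exp}$ to $(E^{exp})^{CQ}$, giving $(E^{exp})^{CQ}\sqsubseteq (E^{CQ})^{exp}$ and, by transitivity, $(E^{exp})^{CQ}\sqsubseteq Q^{Datalog}$. Here the CQAC$^{-}$ restriction is exactly what guarantees that combining view definitions creates no AC between a view-head and a view-nondistinguished variable (class (c) in the analysis of Theorem~\ref{thm-mcr-cont}), so the two routes line up cleanly.

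Finally I would close the loop with the main result of Section~\ref{ssec:mot}. The pair $(Q,E^{exp})$ is an RSI1 disjoint-AC pair with respect to the head variables of $Q$: $E^{exp}$ is a CQAC with only closed ACs (the views are CQAC$^{-}$ and the ACs introduced into $R_{MCR}$ are closed), while $Q$, being RSI1, has closed ACs, no ACs between head and nondistinguished variables, and semi-interval body ACs with a single RSI. Applying Theorem~\ref{thm:main123} to this pair, $(E^{exp})^{CQ}\sqsubseteq Q^{Datalog}$ yields that the body containment entailment for containment of $E^{exp}$ in $Q$ holds; and since $Q$ is RSI1 it has no ACs on head variables, so the head entailment is vacuously true. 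By Theorem~\ref{thm:main}, $Q$ contains $E^{exp}$, i.e.\ $E^{exp}\sqsubseteq Q$, which completes the argument. The step I expect to be the main obstacle is the inclusion between $(E^{exp})^{CQ}$ and $(E^{CQ})^{exp}$: one must track precisely which ACs survive into the closure of the expansion and check that the encoding predicates produced by the two routes match in the correct containment direction; this is where the CQAC$^{-}$ hypothesis is genuinely used, mirroring in reverse the analysis in the proof of Theorem~\ref{thm-mcr-cont}.
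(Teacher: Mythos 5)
Your proof is correct and follows essentially the same route as the paper's: fix a Datalog$^{AC}$-expansion of $R_{MCR}$, relate the CQ-transform of its view-expansion to a view-expansion of the corresponding Datalog-expansion of $R^{CQ}_{MCR}$, invoke soundness of the inverse-rule algorithm to obtain containment in $Q^{Datalog}$, and pull this back to $E^{exp}\sqsubseteq Q$ via Theorems \ref{thm:main123} and \ref{thm:main}. The only divergence is that where the paper asserts the two transforms coincide exactly (arguing ``in the same way as'' Theorem \ref{thm-mcr-cont}), you establish just the one-directional containment $(E^{exp})^{CQ}\sqsubseteq (E^{CQ})^{exp}$ by a subgoal-subset argument, which is a slightly more careful, and fully sufficient, treatment of that step.
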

%

\begin{proof}
	Consider a CQAC Datalog-expansion, $R$, of the found Datalog$^{AC}$ program, $R_{MCR}$. Take the view-expansion, $R_{exp}$, of $R$. Transform $R_{exp}$ into a CQ, $R_{exp}^{CQ}$,  using the construction in Subsection \ref{subsec-contained-trans}. Now, we argue in the same way as we argued in the proof of Thorem 	\ref{thm-mcr-cont} to prove that $R_{exp}^{CQ}$ is the view-expansion of  a Datalog-expansion of $R^{CQ}_{MCR}$.


According to the reverse-rule algorithm, $R_{exp}^{CQ}$  is contained in the $Q^{Datalog}$ program, hence according to the results in Section \ref{ssec:mot}, $R_{exp}$  is contained  in the query $Q$. Consequently, $R$ is a contained rewriting of $Q$.
\end{proof}

Thus we have proved:

\begin{theorem}
	\label{thm-mainsec6}
	Given a query $Q$ which is CQAC-SI1 and views $\V$ which are
	CQAC$^{-}$s, the algorithm in Subsection~\ref{subsec-buildingMCRs} finds an MCR of $Q$ using $\V$ in the language of (possibly infinite) union of CQACs.
\end{theorem}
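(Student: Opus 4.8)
The plan is to verify the two defining conditions of a maximally contained rewriting directly from the two theorems just established, Theorem~\ref{thm-mcr-in-query} and Theorem~\ref{thm-mcr-cont}, using the fact (recalled in the Datalog preliminaries) that a Datalog$^{AC}$ program is equivalent to a (possibly infinite) union of CQACs. Write $\CL$ for the language of (possibly infinite) unions of CQACs; the output $R_{MCR}$ of Algorithm MCR-RSI1 is a Datalog$^{AC}$ program, hence a member of $\CL$, namely the union of the CQACs given by its Datalog-expansions.

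For the first condition, that $R_{MCR}$ is a contained rewriting of $Q$ using $\V$ in $\CL$, I would invoke Theorem~\ref{thm-mcr-in-query}. That theorem shows every CQAC Datalog-expansion of $R_{MCR}$ is a contained rewriting of $Q$; since $R_{MCR}$ evaluated on a view instance $\I$ equals the union of its expansions evaluated on $\I$, and each such expansion returns only answers contained in $Q(D)$ for every $D$ with $\I\subseteq\V(D)$, the union does so as well. Thus $R_{MCR}(\I)\subseteq Q(D)$ for all such $D$, which is exactly the statement that $R_{MCR}$ is a contained rewriting.

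For the second condition, maximality, I would take an arbitrary contained rewriting $R^{*}$ of $Q$ using $\V$ in $\CL$ and write it as $R^{*}=\bigcup_i R_i$, where each $R_i$ is a CQAC rewriting (working throughout with AC-rectified rewritings, per Subsection~\ref{subsec-rectified}, and with AC-containment of rewritings). The key reduction is to observe that each individual disjunct $R_i$ is itself a contained rewriting: for any view instance $\I$ and any database $D$ with $\I\subseteq\V(D)$ we have $R_i(\I)\subseteq R^{*}(\I)\subseteq Q(D)$. Hence Theorem~\ref{thm-mcr-cont} applies to each $R_i$ and gives that $R_i$ is contained in $R_{MCR}$. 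Since containment of a union into a fixed Datalog$^{AC}$ program holds exactly when every disjunct is contained in it, it follows that $R^{*}=\bigcup_i R_i$ is contained in $R_{MCR}$, establishing maximality.

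I expect the heavy lifting to have already been done in Theorems~\ref{thm-mcr-cont} and~\ref{thm-mcr-in-query}; the only genuinely new point here is the passage from a single CQAC rewriting to an arbitrary (possibly infinite) union, and the main thing to get right is that the two reductions used, namely that a disjunct of a contained union is contained and that a union is contained iff every disjunct is, are both justified by the pointwise (per view instance, per underlying database $D$) definition of contained rewriting under the OWA, so that no finiteness of the union is needed anywhere.
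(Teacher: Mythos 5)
Your proof is correct and takes essentially the same route as the paper: the paper obtains this theorem directly as the combination of Theorem~\ref{thm-mcr-in-query} (condition 1 of the MCR definition) and Theorem~\ref{thm-mcr-cont} (condition 2), exactly as you do. The only difference is that you explicitly spell out the passage from an arbitrary (possibly infinite) union of CQACs to its individual CQAC disjuncts and back, a bookkeeping step the paper leaves implicit.
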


\subsection{Building MCR for RSI1+ query}
Now, we present the algorithm for building an MCR in the language of (possibly infinite) union of CQACs for the case of CQAC$^-$ views and queries that are RSI1+. The algorithm for building  an MCR  for query $Q$ and viewset $\V$ is the following:

{\bf Algorithm MCR-RSI1+}:

\begin{enumerate}
	\item We consider query $Q'$ which results from the given query $Q$ after we have removed the ACs that contain only head variables.

\item We apply the algorithm for building MCR for query $Q'$ and views $\V$ (from previous subsection). Let this MCR be
$R'_{MCR}$.

\item We add a new rule in $R'_{MCR}$ (and obtain $R_{MCR}$ ) to compute the query predicate $Q$ as follows:
	$$Q(\overline{W}):- Q'(\overline{W}), ac_1, ac_2, \ldots$$
	where $ac_1, ac_2, \ldots$ are the ACs that we removed in the first step of the present algorithm.
\end{enumerate}

\subsection{Proof that the algorithm {\bf MCR-RSI1+} is correct}
We consider the found by the  {\bf Algorithm MCR-RSI1+}  Datalog$^{AC}$ program, $R_{MCR}$.
Theorem	\ref{thm-mcr-cont1} below proves that every CQAC contained rewriting is contained in $R_{MCR}$ and Theorem
\ref{thm-mcr-in-query1} proves that $R_{MCR}$ is a contained rewriting.

%
%
%


\begin{theorem}
	\label{thm-mcr-cont1}
Given a query $Q$ which is RSI1+ and views $\V$ which are
	CQAC$^{-}$s, the following is true:
	Let $R$ be a  CQAC contained rewriting to $Q$ in terms of $\V$. Then $R$ is contained in the one found by the  {\bf Algorithm MCR-RSI1+}  Datalog$^{AC}$ program, $R_{MCR}$.
\end{theorem}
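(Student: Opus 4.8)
The plan is to reduce the RSI1+ case to the RSI1 case already settled in Theorem~\ref{thm-mcr-cont}, treating the head-variable ACs as a final filter. Write $Q=Q'+\beta_H$, where $\beta_H = ac_1\wedge ac_2\wedge\cdots$ is the conjunction of head-variable ACs that Algorithm MCR-RSI1+ strips off in its first step and $Q'$ is the resulting RSI1 query. By construction, $R_{MCR}$ is the program $R'_{MCR}$ produced for $Q'$ augmented with the single rule $Q(\overline{W})\,\symif\, Q'(\overline{W}), ac_1,ac_2,\ldots$; hence, as a query over the views, $R_{MCR}$ returns exactly those tuples computed by $R'_{MCR}$ whose head satisfies $\beta_H$.

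First I would note that $Q'$ has strictly fewer ACs than $Q$, so $Q\sqsubseteq Q'$, and therefore every CQAC contained rewriting $R$ of $Q$ is also a contained rewriting of $Q'$: for every admissible $D$ we have $R(\I)\subseteq Q(D)\subseteq Q'(D)$. Applying Theorem~\ref{thm-mcr-cont} to the RSI1 query $Q'$ then yields that $R$ is AC-contained in $R'_{MCR}$.

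Next I would argue that re-imposing $\beta_H$ discards no output of $R$. Since $R$ is a contained rewriting of $Q$, its view-expansion satisfies $R_{exp}\sqsubseteq Q$; and because $Q$ is RSI1+, the pair $(Q,R_{exp})$ is a disjoint-AC pair with respect to the head variables, so Proposition~\ref{prop:single-map-vars} splits this containment into a body entailment and a head entailment $\beta_{R_{exp}}\Rightarrow\mu_1(\beta_H)$. As we adopt the convention that $R$ is AC-rectified, the ACs of $R$ on its head variables are precisely the closure of $\beta_{R_{exp}}$ restricted to those variables; the head entailment states exactly that these ACs imply $\beta_H$ (under the head-to-head identification given by $\mu_1$, which is forced because view-expansion preserves the head). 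Consequently $R\equiv R+\beta_H$ as queries. Adjoining $\beta_H$ to both sides of the AC-containment $R\sqsubseteq R'_{MCR}$ preserves it, so $R\equiv R+\beta_H\sqsubseteq R'_{MCR}+\beta_H=R_{MCR}$, which is the desired conclusion.

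I expect the middle step to be the main obstacle: establishing cleanly that $R$'s own head ACs already imply $\beta_H$. This relies on lining up two facts carefully --- that the AC-rectified $R$ literally carries the closure of its expansion's ACs on the head variables, and that Proposition~\ref{prop:single-map-vars} applies to $(Q,R_{exp})$ so the head entailment is available as a standalone implication --- together with checking that $\mu_1$ restricted to the head variables identifies $\overline{W}$ with $R$'s head positions, so that $\mu_1(\beta_H)$ is genuinely $\beta_H$ read on $R$. The surrounding steps, namely $Q\sqsubseteq Q'$ and the monotonicity of query containment under adjoining head ACs, are routine.
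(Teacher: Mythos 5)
Your proposal is correct and takes essentially the same route as the paper's proof: both reduce to the RSI1 case by observing $Q\sqsubseteq Q'$ and invoking Theorem~\ref{thm-mcr-cont}, then use the head entailment obtained from Proposition~\ref{prop:single-map-vars} (applied to the pair $(Q,R_{exp})$) together with AC-rectification to conclude that $R$'s own head ACs imply $\beta_H$, and finish by arguing that adjoining these implied head ACs preserves containment. The only difference is presentational: the paper isolates the last step as an explicit claim (adding head ACs implied by the contained query to the containing query preserves containment), whereas you phrase it as $R\equiv R+\beta_H$ followed by monotonicity of filtering on both sides.
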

\begin{proof}
Let $R$ be a contained rewriting to query $Q$. Since $Q'$ contains $Q$, $R$ is a contained rewriting of $Q'$ too. Hence, according to the results Theorem \ref{thm-mcr-cont}, $R$ is contained to $R'_{MCR}$.

Since $R$ is contained to $Q$, we consider the view-expansion of $R$, let it be $R_{exp}$ and we know that this is contained in $Q$, hence the containment entailment is true. However, $Q$ is a RSI1+ query, hence we can, according to Section \ref{sec:single-mapping} break the containment entailment in two as follows:

%

	\begin{center}
	\begin{tabular}{ll}
		$\beta_2\Rightarrow \mu_1(\beta_{Q'}) \vee \cdots $&\\
		$\beta_2\Rightarrow \mu_1(\beta_{Q-head}) $ & eq. (1)\\
	\end{tabular}
\end{center}

where $\beta_2 $ is the conjunction of ACs in the closure of ACs in $R_{exp} $ and $\beta_{Q'}$ is the conjunction of ACs in $Q'$, $\beta_{Q-head}$  is the conjunction of ACs that use only head variables, and $mu_i$'s are all the mappings from $Q$ to $R_{exp} $.

Observe that in equation (1), we can replace $\beta_2$ with only those ACs in the closure of $\beta_2$ that involve head variables. Because $R$ is AC-rectified, all these ACs appear in $R$; let us denote them by $\beta_{head}$
Thus $\beta_{head}$ logically implies $ \beta_{Q-head}$.
Now, $R'_{MCR}$ and $R_{MCR}$ have the same expansions, except that the latter has the ACs in $ \beta_{Q-head}$ as well. 

Hence we have concluded that a) $R$ is contained to $R'_{MCR}$ and b) the ACs in $R$ imply the added ACs in each expansion of $R'_{MCR}$ to make an expansion of $R_{MCR}$.

Now we only need to prove the following claim:
Suppose a CQAC $Q_2$ is contained in CQAC $Q_1'$. Let $Q_1$ be $Q_1'$ with some more ACs on the head variables such that these ACs are implied by the ACs in $Q_2$. Then $Q_2$ is contained in $Q_1$.

Proof of the claim: When a tuple of $Q_2$ is computed, then the same tuple is computed for $Q'_1$. However, the constants in the tuple are such that the ACs in $Q_2$ are satisfied. Since the added ACs to make $Q_1$ are implied by the ACs in $Q_2$, those ACs are satisfied too, and hence, the tuple is computed for $Q_1$ too.
\end{proof}

\begin{theorem}
\label{thm-mcr-in-query1}
Given a query $Q$ which is CQAC-RSI1 and views $\V$ which are
	CQAC$^{-}$s, the following is true:
	The found by the algorithm in Subsection~\ref{subsec-buildingMCRs} Datalog$^{AC}$ program, $R_{MCR}$,   is a contained rewriting.
\end{theorem}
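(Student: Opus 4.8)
The plan is to reduce the statement to two facts that are already available: the RSI1 case, Theorem~\ref{thm-mcr-in-query}, and the ``adding head ACs preserves containment'' claim established inside the proof of Theorem~\ref{thm-mcr-cont1}. Recall how Algorithm MCR-RSI1+ builds $R_{MCR}$: it first removes from the RSI1+ query $Q$ all ACs that mention only head variables, producing $Q'$; it runs the RSI1 construction of Subsection~\ref{subsec-buildingMCRs} on $Q'$ to obtain $R'_{MCR}$; and it finally adds the single rule $Q(\overline{W})\symif Q'(\overline{W}), ac_1, ac_2, \ldots$, where $ac_1, ac_2,\ldots$ are exactly the head-only ACs deleted in the first step. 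Since an RSI1+ query has no ACs joining a head variable with a nondistinguished variable, and its nondistinguished ACs are SI with a single RSI, the stripped query $Q'$ has no ACs on head variables at all and is therefore a genuine RSI1 query, so Theorem~\ref{thm-mcr-in-query} applies to it.

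First I would invoke Theorem~\ref{thm-mcr-in-query} to conclude that $R'_{MCR}$ is a contained rewriting of $Q'$; equivalently, the view-expansion of every Datalog-expansion of $R'_{MCR}$ is contained in $Q'$. Next I would analyze the Datalog-expansions of $R_{MCR}$ that derive the query predicate. Because the only rule whose head is $Q(\overline{W})$ is the newly added one, every such expansion $E$ has the form $E = E' \wedge ac_1 \wedge ac_2 \wedge \cdots$, where $E'$ is a Datalog-expansion of $R'_{MCR}$ deriving $Q'(\overline{W})$. The head ACs involve no view-nondistinguished variable and the head variables are never renamed during expansion, so taking view-expansions commutes with conjoining them: $E_{exp} = E'_{exp} \wedge ac_1 \wedge ac_2 \wedge \cdots$, where $E'_{exp}$ is the view-expansion of $E'$.

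It then remains to show $E_{exp} \sqsubseteq Q$ for each such $E$. By the previous paragraph $E'_{exp} \sqsubseteq Q'$, and since $E_{exp}$ arises from $E'_{exp}$ by adding further ACs, we have $E_{exp}\sqsubseteq E'_{exp} \sqsubseteq Q'$. Now $Q$ is obtained from $Q'$ by adding precisely the head-variable ACs $ac_1, ac_2, \ldots$, and these literally occur in $E_{exp}$, hence are trivially implied by the ACs of $E_{exp}$. Applying the claim from the proof of Theorem~\ref{thm-mcr-cont1}, instantiated with $Q_2 := E_{exp}$, $Q_1' := Q'$ and $Q_1 := Q$, yields $E_{exp}\sqsubseteq Q$. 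As every view-expansion of every query-predicate Datalog-expansion of $R_{MCR}$ is then contained in $Q$, the program $R_{MCR}$ is a contained rewriting of $Q$.

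The routine ingredients are the commutation of view-expansion with the head ACs (immediate, since those ACs survive expansion unchanged) and the monotonicity of containment under adding ACs. The only mildly delicate point is verifying that $Q'$ really qualifies as RSI1 so that Theorem~\ref{thm-mcr-in-query} is applicable, and that the deleted head ACs are exactly the ones reintroduced by the top rule; both follow directly from the RSI1+ definition and the description of Algorithm MCR-RSI1+. I therefore do not expect a substantive obstacle here: the hard work was already carried out in the RSI1 case and in the head-AC claim, and this theorem is essentially the clean gluing of the two.
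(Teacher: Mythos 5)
Your proposal is correct and follows essentially the same route as the paper's proof: take a Datalog-expansion $E$ of $R_{MCR}$, strip the head ACs to obtain an expansion $E'$ of $R'_{MCR}$, apply Theorem~\ref{thm-mcr-in-query} to get $E'_{exp}\sqsubseteq Q'$, and then reattach the head ACs $ac_1, ac_2,\ldots$, which trivially imply $\beta_{Q\text{-head}}$ since they occur verbatim in $E_{exp}$. The only cosmetic difference is the final justification — the paper combines the two entailments via the distributive law on the containment entailment, while you invoke the head-AC claim proved inside Theorem~\ref{thm-mcr-cont1} — but these amount to the same gluing step.
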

\begin{proof}


Let $E$ be a CQAC query which is a Datalog-expansion of $R_{MCR}$. Let $E'$ be the CQAC that results from
$E$ by removing the head ACs. By  Theorem \ref{thm-mcr-in-query}, $E'$ is a contained rewriting in $Q'$.
Hence if we consider the view-expansion, $E'_{exp}$, of $E'$, the containment entailment is true for $E'_{exp}$ and $Q'$.

Moreover, trivially we have $\beta_{E}\Rightarrow \beta_{Q-head}$
and using the distributive law, we derive the containment entailment that shows containment of the view-expansion
$E_{exp}$ of $E$ to $Q$.
%
\end{proof}


A straightforward consequence of the above two theorems is the following theorem which is the main result of this section
\begin{theorem}
	\label{thm-mainsec6}
	Given a query $Q$ which is CQAC-SI1+ and views $\V$ which are
	CQAC$^{-}$s, the algorithm in Subsection~\ref{subsec-buildingMCRs} finds an MCR of $Q$ using $\V$ in the language of (possibly infinite) union of CQACs which is expressed by a Datalog$^{AC}$ query. 
\end{theorem}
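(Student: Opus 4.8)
The plan is to read off the result directly from the definition of a maximally contained rewriting, since the two preceding theorems already contain all the substantive work; the final statement is essentially an assembly step. First I would fix the target language $\CL$ to be the (possibly infinite) union of CQACs and record that, by the equivalence noted in the preliminaries on Datalog queries (a $Datalog^{AC}$ program is equivalent to an infinite union of CQACs), the program $R_{MCR}$ output by Algorithm MCR-RSI1+ itself lies in $\CL$. This is needed so that $R_{MCR}$ qualifies as a rewriting in that language. I would then dispatch the first clause of the MCR definition --- that $R_{MCR}$ is a contained rewriting of $Q$ using $\V$ --- by invoking Theorem~\ref{thm-mcr-in-query1} verbatim.

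The only place where a small bridging argument is required is the maximality clause: the definition demands that \emph{every} contained rewriting expressible in $\CL$ be contained in $R_{MCR}$, whereas Theorem~\ref{thm-mcr-cont1} is phrased for a single CQAC contained rewriting. To close this, I would take an arbitrary contained rewriting $R=\bigcup_i R^{(i)}$ in $\CL$, where each $R^{(i)}$ is a CQAC, and argue disjunct-wise. For any view instance $\I$ admitting a database $D$ with $\I\subseteq\V(D)$ satisfying the constraints, $R^{(i)}(\I)\subseteq R(\I)\subseteq Q(D)$, so each disjunct $R^{(i)}$ is itself a CQAC contained rewriting of $Q$ using $\V$; Theorem~\ref{thm-mcr-cont1} then places each $R^{(i)}$ inside $R_{MCR}$. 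Since a union of queries is contained in a fixed query exactly when each of its disjuncts is, I would conclude $R\subseteq R_{MCR}$, establishing the maximality clause. Both clauses of the MCR definition then hold, and $R_{MCR}$ is the claimed MCR, expressed as a $Datalog^{AC}$ program.

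I do not expect a genuine obstacle here: the heavy lifting (the correctness of the construction in both directions, and the reduction to the CQ-versus-Datalog containment of Section~\ref{ssec:mot}) has already been carried out in Theorems~\ref{thm-mcr-cont1} and~\ref{thm-mcr-in-query1}. The one point to be careful about is that all occurrences of ``contained in'' must be read in the AC-rectified sense fixed earlier in this section, so that the disjunct-wise reasoning is applied to rewritings already in their AC-rectified form; with that reading the passage from individual CQACs to (possibly infinite) unions is routine and the theorem follows.
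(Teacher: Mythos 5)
Your proposal is correct and takes essentially the same route as the paper: the paper also obtains Theorem~\ref{thm-mainsec6} as a direct consequence of Theorems~\ref{thm-mcr-cont1} and~\ref{thm-mcr-in-query1}, with the first supplying maximality and the second supplying containment. The only addition you make is spelling out the routine disjunct-wise argument that lifts Theorem~\ref{thm-mcr-cont1} from single CQAC rewritings to (possibly infinite) unions, a step the paper leaves implicit under ``straightforward consequence.''
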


A straightforward consequence of the above theorem and the main result in Section \ref{sec-6} is the following theorem:

\begin{theorem}
	Given a query $Q$ which is CQAC-SI1+ and views $\V$ which are
	CQAC$^{\;-}$s,  we can find all certain answers of $Q$ using $\V$ on a given view instance $\I$ in time polynomial on the size of $\I$.
\end{theorem}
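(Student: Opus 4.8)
The plan is to obtain the result purely by composing the two main theorems of the previous two sections and then invoking the polynomial data complexity of Datalog$^{AC}$. The correctness part fixes \emph{which} finite object computes the certain answers, and the complexity part shows that evaluating this object on $\I$ is cheap. First I would invoke the immediately preceding Theorem~\ref{thm-mainsec6}: since $Q$ is of the assumed class (a CQAC-SI1+, i.e.\ RSI1+, query) and $\V$ is a set of CQAC$^{-}$ views, Algorithm MCR-RSI1+ produces a Datalog$^{AC}$ program $R_{MCR}$ that is an MCR of $Q$ using $\V$ in the language of (possibly infinite) union of CQACs. Crucially, $R_{MCR}$ depends only on $Q$ and $\V$ and is computed once, independently of any particular view instance.

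Next I would apply Theorem~\ref{owa_mcrs}. Because $R_{MCR}$ is an MCR with respect to U-CQAC, that theorem gives, for every view instance $\I$ on which $\text{certain}_{\C}(Q,\I)$ is defined, the identity $R_{MCR}(\I)=\text{certain}_{\C}(Q,\I)$ under the OWA (with $\C$ empty, or any terminating set of tgds/egds). This reduces the task of \emph{finding} the certain answers to the task of \emph{evaluating} the fixed program $R_{MCR}$ on $\I$.

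It then remains to bound the cost of this evaluation. The key point is that, although $R_{MCR}$ stands for a possibly infinite union of CQACs, it is a single finite Datalog$^{AC}$ program, and by the Preliminaries its fixpoint evaluation computes exactly the answers of that infinite union; hence we never materialize the union. Evaluation proceeds by the usual fixpoint semantics over the active domain, which consists of the constants occurring in $\I$ together with the finitely many constants appearing in $Q$ and $\V$. Since the predicates of $R_{MCR}$ and all their arities are fixed (they depend only on $Q$ and $\V$), the number of derivable ground IDB facts is bounded by a polynomial in $|\I|$; moreover each arithmetic-comparison subgoal is tested against concrete constants of the active domain, costing only constant time per firing. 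Consequently the fixpoint is reached after polynomially many rounds, each doing polynomial work, so $R_{MCR}(\I)$ is computed in time polynomial in $|\I|$. Combining this with the identity above yields $\text{certain}_{\C}(Q,\I)$ in polynomial data complexity.

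The main obstacle is exactly the gap between the \emph{infinite} description of the MCR and a \emph{finite, efficiently evaluable} object: a priori an MCR presented as an infinite union of CQACs carries no polynomial-time guarantee. The resolution, which must be made explicit, is that Theorem~\ref{thm-mainsec6} delivers the MCR already in the form of a finite Datalog$^{AC}$ program, so that the polynomial data complexity of Datalog, preserved under built-in comparisons evaluated over concrete data, applies directly. A minor point to address is the definedness of $\text{certain}_{\C}(Q,\I)$: this is precisely the hypothesis under which Theorem~\ref{owa_mcrs} applies, and Proposition~\ref{repeat_vars_lemma} already exhibits broad classes of views for which it holds on every $\I$; on instances where it is undefined there is nothing to compute.
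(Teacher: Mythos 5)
Your proposal is correct and takes essentially the same route as the paper, which derives this theorem in one line as ``a straightforward consequence'' of Theorem~\ref{thm-mainsec6} (the MCR exists and is expressed as a finite Datalog$^{AC}$ program) and the main result of Section~\ref{sec-6} (Theorem~\ref{owa_mcrs}, the MCR computes exactly the certain answers), with the polynomial data complexity of evaluating the fixed Datalog$^{AC}$ program left implicit. Your write-up simply makes explicit what the paper leaves to the reader: the fixpoint-evaluation cost bound over the active domain and the definedness caveat handled via Theorem~\ref{owa_mcrs}'s hypothesis.
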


\subsection{Another example}

\begin{example}
	\label{ex-prime-new}
	
	This is similar to Example \ref	{ex-prime-old1}
with slight alterations to make the point that we need  auxiliary views. The alterations are as follows: We have added a new relational subgoal $a(U)$ and a new AC on the variable of this relational subgoal in the query and we have added a relational subgoal on the same predicate on the first view. Again auxiliary views that are not used in building the MCR and are not written here.
	
	Thus, we consider the query $Q_1$ and the views:
	\begin{center}
		\begin{tabular}{ll}
			$Q_1()$    & $\mathrm{:-}~e(X,Z),e(Z,Y),X\geq 5,Y\leq 8,a(U),U\geq 46.$\\
			$V_1(X,Y)$ & $\mathrm{:-}~e(X,Z),e(Z,Y),Z\geq 5,a(Y).$\\
			$V_2(X,Y)$ & $\mathrm{:-}~e(X,Z),e(Z,Y),Z\leq 8.$\\
			$V_3(X,Y)$ & $\mathrm{:-}~e(X,Z_1),e(Z_1,Z_2),e(Z_2,Z_3),e(Z_3,Y).$\\
		\end{tabular}
	\end{center}
	
	The Datalog program $Q_1^{Datalog}$ is exactly the same as the one  in Example~\ref{ex-prime-old1} with the only alternation that the body of each mapping rule includes $U_{\geq 46}(U)$.
	
	The views that will be used to apply the inverse-rule algorithm are (now we have added one auxiliary view which we guessed will be needed):

	\begin{center}
		\begin{tabular}{ll}
			$V'_1(X,Y)$ & $\mathrm{:-}~e(X,Z),e(Z,Y),U_{\geq 5}(Z),a(Y),U_{\geq 46}(Y).$\\
			$V'_2(X,Y)$ & $\mathrm{:-}~e(X,Z),e(Z,Y),U_{\leq 8}(Z).$\\
			$V'_3(X,Y)$ & $\mathrm{:-}~e(X,Z_1),e(Z_1,Z_2),e(Z_2,Z_3),e(Z_3,Y).$\\
			$V'_4(Y)$ & $\mathrm{:-}~U_{\geq 46}(Y).$\\
			
		\end{tabular}
	\end{center}
	The new view is view $V_4'$.
	
	The Datalog program now for the query $Q$ is\footnote{Some link rules are ommitted since they are not used by the inverse rule algorithm to produce an MCR. For the same reason some coupling rules are ommitted.}:

	\begin{center}
		\begin{tabular}{l l l}
			$Q_1^{Datalog}()$       & $\symif~e(X,Y),e(Y,Z),a(U),U_{\geq 46}(U).$&\\
			&$I_{\geq 5}(X),I_{\leq 8}(Z).$ & (query rule)\\
			$J_{\leq 8}(Z)$ & $\symif~e(X,Y),e(Y,Z),a(\overline{W},Y),$&\\&$I_{\geq 5}(X).$           & (mapping rule)\\
			$J_{\geq 5}(X)$ & $\symif~e(X,Y),e(Y,Z),a(\overline{W},Y),$&\\&$I_{\leq 8}(Z).$           & (mapping rule)\\
			$I_{\leq 8}(X)$ & $\symif~J_{\geq 5}(X).$                         & (coupling rule)\\
			$I_{\geq 5}(X)$ & $\symif~J_{\leq 8}(X).$                         & (coupling rule)\\
			$I_{\geq 5}(X)$ & $\symif~U_{\geq 5}(X).$                         & (link rule)\\
			$I_{\leq 8}(X)$ & $\symif~U_{\leq 8}(X).$                         & (link rule)\\
		\end{tabular}
	\end{center}
	This is the output of the inverse rule algorithm:
	\begin{center}
		\begin{tabular}{ll}
			$R'() $ & $\mathrm{:-}~v'_1(X,W),T(W,Z),v'_2(Z,Y),v'_4(W).$\\
			$T(W,W)$ & $\mathrm{:-}~.$\\
			$T(W,Z)$ & $\mathrm{:-}~T(W,U),v'_3(U,Z).$\\
		\end{tabular}
	\end{center}

	The following is an MCR of the input query $Q_1$ using the views  $V_1(X,Y)$, $V_2(X,Y)$ and  $V_3(X,Y)$. Notice that we replace view $v'_4(W)$ by $W\geq 46$.
	\begin{center}
		\begin{tabular}{ll}
			$R() $ & $\mathrm{:-}~V_1(X,W),T(W,Z),V_2(Z,Y),W\geq 46.$\\
			$T(W,W)$ & $\mathrm{:-}~.$\\
			$T(W,Z)$ & $\mathrm{:-}~T(W,U),V_3(U,Z).$\\
		\end{tabular}
	\end{center}
\end{example}

\subsection{Extending the result}
The following example shows that the result of Theorem~\ref{thm-mcr-cont} (hence the result of Theorem \ref{thm-mainsec6})  cannot be extended to include views that are in the language of CQACs.

\begin{example}
	Suppose we have the following query and views:
		\begin{center}
		\begin{tabular}{l}
			$Q$:   $q(Y):- a(Y,X),b(X,Z),X\geq 5, Z\leq 6$\\
			$V_1$:  $v_1(Y):- a(Y,X),b(X,X'),X\geq Y, X'\leq 6$\\
			$V_2$:  $v_2(Y):- a(Y,Z'),b(Z',Z),Y\geq Z, Z'\geq 5$\\
		\end{tabular}
	\end{center}
%
%
%
	
Here, the view definitions violate the constraint that no AC should be included between a head variable and a nondistinguished variable.

	The following is a contained rewriting, for which we will argue that the technique in this section does not work:
			
	\begin{center}
		\begin{tabular}{l}
			$R$:  $q(Y):-v_1(Y),v_2(Y)$\\
		\end{tabular}
	\end{center}
	
	The view-expansion of $R$ is:
	
		\begin{center}
		\begin{tabular}{l}
			$R_{exp}$: $q(Y):- a(Y,X),b(X,X'), X'\leq 6, a(Y,Z'),b(Z',Z), Z'\geq 5$\\
		\end{tabular}
	\end{center}
	
	and after we transform it to a CQ  it becomes:
	
		\begin{center}
		\begin{tabular}{l}
			$R^{CQ}_{exp}$: $q(Y):- a(Y,X),b(X,X'), U_{\leq 6}(X'), a(Y,Z'),b(Z',Z), U_{\geq 5}(Z')$\\
		\end{tabular}
	\end{center}
	
	
	Suppose we transform $Q$ to $Q^{Datalog}$, then the following is the only  Datalog-expansion of the program that could be used to prove that $Q^{Datalog}$ contains $R^{CQ}_{exp}$.

			\begin{center}
		\begin{tabular}{l}
			$E$: $a(Y,X),b(X,X'), U_{\leq 6}(X'), a(Y,Z'),b(Z',Z), U_{\geq 5}(Z'),U(Z,X)$\\
		\end{tabular}
	\end{center}
	
	We see that the body of $R^{CQ}_{exp}$ and $E$ will be isomorphic if we append $U(Z,X)$ to $R^{CQ}_{exp}$, which is equivalent to appending $Z\leq X$ to $R_{exp}$, which, further means appending $Z\leq X$ to $R$. This is not possible however, since in $R$ there is only one variable. This remark highlights the reason for  failing to extend the result beyond  Theorem \ref{thm-mainsec6} and probably, in future work, complexity results will prove that it is rather  impossible to be extended.
	
	%
	%
	%
	%
\end{example}


\section{Conclusions}
In this paper we have investigated into the computational complexity of query containment for CQACs and of computing certain answers in the framework of answering CQAC queries using CQAC views. Our results point to several directions for future research. 

Candidates for which  the problem of CQAC query containment may be  $\Pi^p_2$-complete are the following:
 a) The containing query  contains only LSI  (both open and closed). b) The containing query contains two closed LSI and two closed RSI. Probably it is still $\Pi^p_2$-complete if the contained query is restricted, e.g., with only SI arithmetic comparisons. 

We know that there are classes of queries where the CQ query containment problem is polynomial e.g., when the containing query is acyclic.
We conjecture that, for many such classes of queries, if we consider the classes of queries as in Section 5 but with the containing query having relational subgoals as the CQ queries that have a polynomial time algorithm, then
the containment problem is polynomial. E.g., if the relational subgoals of the containing query form an acyclic hypergraph and there are only several closed LSIs and one closed RSI on the nondistinguished variables, then 
testing containment may be done in polynomial time.


In the framework of answering queries using views, besides using any new results for query containment to extend the results in Section 7 further, we have started a discussion in  \ref{sec-app-unclean}  for CQs  to see whether MCRs can be useful for handling certain types of unclean data.
This discussion may be extended to include   CQACs.

\bibliographystyle{abbrv}
\bibliography{references}
\appendix
\section{Proof of Theorem \ref{thm:cont-CQAC}}
\label{prf-thm:cont-CQAC}
\begin{proof}
	One of the directions is straightforward: If the containment entailment is true, then in any database that satisfies $\beta'_2$, one of the $\mu_i(\beta' _1)$ will be satisfied (because we deal with constants), and hence containment is proven.

	For the ``only-if'' direction,
	suppose $Q_2$ is contained in $Q_1$, but the containment entailment is false.
	We assign constants to the variables that make this implication false. Then for all the containment mappings $\mu _i$ 
 the query containment is false, because
	we have found a counterexample database $D$.
	Database $D$ is constructed by assigning the corresponding constants to the ordinary subgoals of $Q_2$.
	On this counterexample database $D$, $Q_2$ produces a tuple,
	but there is no $\mu_i$ that will make $Q_1$ produce the same tuple (because all $\mu_i(\beta'_1)$
	fail). We need to remember that, using the $\mu_i$'s, we can produce {\sl all} homomorphisms from
	$Q_1$ to any database where the relational atoms of $Q_2$ map via a homomorphism. This is because
	the $\mu_i$'s were produced using the normalized version of the queries -- and, hence, $\mu_i$'s were not constrained by duplication of variables or by constants (recall that, in a homomorphism, a variable is allowed to map to a single target and a constant is allowed to map on the same constant).
\end{proof}


\section{Proof of Theorem \ref{thm:main123}}
\label{prf:thm-main123p}
\begin{proof}
	We consider the canonical database, $D$, of $Q_2^{CQ}.$ For convenience, the constants in the canonical database use the lower case letters of the variables they represent. Thus, constant $x$ is used in the canonical database to represent the variable $X$. We will use the containment test that says that a
	Datalog query contains a conjunctive query $Q$  if and only if the Datalog query computes the head of $Q$  when applied on the canonical database of the conjunctive query $Q$.
	
	%
	%
	

	
	{\bf ``If'' direction:}
{\it Remark:}
By construction of the transformed queries $Q_1$ and $Q_2$  into a Datalog query and a CQ query respectively, the following happens: The computation (i.e., the firing of the mapping rules during this computation) of the head of $Q_2^{CQ}$  when the Datalog query is applied on the canonical database of  $Q_2^{CQ}$  gives a set of 
mappings from $Q_{10}$ to $Q_{20}$ that are exactly the mappings that make the containment entailment true. Hence this direction holds even if the containing query has any number of LSIs and RSIs in it. Therefore, it provides an incomplete test for the case  the containing query has any number of LSIs and RSIs: If $Q_2^{CQ}$ is contained in the Datalog query, then $Q_2$ is contained in $Q_1$, otherwise we do not know.

	The proof of this direction is done by induction on the number of the times some mapping rule is fired during the computation of a $J$ fact in a computation that uses the shortest derivation tree.

	{\bf Inductive Hypothesis:} If, in the computation of a $J$ fact associated with AC $e$, we have used the
	mappings $\mu_1, \mu_2, \ldots, \mu_k$ (via applications of mapping rules), where $k<n$ then the following holds:
	$$\beta_2\Rightarrow \mu_1(\beta_1) \vee \mu_2(\beta_1) \vee \cdots \vee \mu_k(\beta_1)\vee \neg e$$
	
	{\sl Proof of Inductive Hypothesis. }
	
	The base case is straightforward, since it is the case when a $J$ fact is computed after the application of one mapping rule, say by mapping $\mu_i$. This is enabled because each of the ACs in the $\mu_i(\beta_1)$ except one (the one associated with the computed $J$ fact) are directly implied by $\beta_2$.
	
	Suppose we compute a fact via $n$ mappings. Then all its $I$ facts used in the computation are computed via at most $n-1$ mappings, hence the inductive hypothesis holds for the corresponding $J$ facts that were used to compute the $I$ fact via a coupling rule.

	According to the inductive hypothesis,
	each such $I$ fact that is computed via a $J$ fact, which in turn was computed via
	some mappings $\mu_{ij}, j=1,2,\ldots$ (i.e., these are the mappings for all mapping rules that were applied during the whole computation of $J$ fact), implies that  the following is true:
	$$\beta_2\Rightarrow \mu_{i1}(\beta_1) \vee \mu_{i2}(\beta_1) \vee \cdots \vee \mu_{il_i}(\beta_1)\vee \neg e_i$$
	or equivalently:
	$$\beta_2 \wedge \neg \mu_{i1}(\beta_1) \wedge \neg \mu_{i2}(\beta_1) \wedge \cdots \wedge \neg \mu_{il_i}(\beta_1)  \Rightarrow \neg e_i$$
	Thus,   we can combine the above implications for all $I$ facts used for the current application of a mapping rule and have that the following is true:
	
	$$\beta_2\wedge  \bigwedge_{for~~all~~ i}[\neg \mu_{i1}(\beta_1) \wedge \neg \mu_{i2}(\beta_1) \wedge \cdots \wedge \neg \mu_{il_i}(\beta_1)]\Rightarrow  \neg e_1 \wedge \cdots~$$
	
	We write the above in the  form:
	$$\beta_2 \wedge \neg \mu_1(\beta_1) \wedge \neg \mu_2(\beta_1) \wedge \cdots \Rightarrow \neg e_1 \wedge \cdots~~~~~~~~~~(2)$$
	where for simplicity we have expressed the  $\mu_{ij}, i=1,2,\ldots,  j=1,2,\ldots$ as $\mu_1, \mu_2, \ldots$.
	Now $e_1, \dots$ are the ACs each associated with the $J$ facts used for this mapping rule.

	Suppose we apply mapping rule via mapping $\mu_{current}$ that uses $I$ facts computed in previous rounds using at most $n-1$ mappings.

	
	When a  coupling rule of the first kind is fired then the two variables in the rules are such that their associated ACs $e_I$ and $e_J$  are such that
	$e_I \vee e_J$ is true.
	
	When a  coupling rule of the second kind is fired then the two  variables $X,Y$ of the rule (which appear also in the binary EDB $U(X,Y)$ in the body of the rule)
	are  instantiated to constants in the canonical database of $Q_2$ whose
	corresponding variables $X',Y'$ in $Q_2$  are such that $X'\leq Y'$. Hence we have for the  associated ACs $e_I$ and $e_J$ of the IDB predicates of the rule that $\beta_2 \Rightarrow e_I \vee e_J$.
	
	Thus, in any case, when a coupling rule is fired, the following is true: $\beta_2 \Rightarrow e_I \vee e_J$. We use this remark in the second implication (second arrow)
	of implications \ref{eq-34} below.
	%


	From (2), we have for each $e_i$:
	
	$$\beta_2 \wedge \neg \mu_1(\beta_1) \wedge \neg \mu_2(\beta_1) \wedge \cdots \Rightarrow \neg e_i$$ which yields:
	\begin{equation}\label{eq-34}
	\beta_2 \wedge \neg \mu_1(\beta_1) \wedge \neg \mu_2(\beta_1) \wedge \cdots \Rightarrow \neg e_i\wedge \beta_2\Rightarrow ac_i
	\end{equation}

	Hence we have, combining all ACs in $\mu_{current}(\beta_1)$,
	
	$$\beta_2 \wedge \neg \mu_1(\beta_1) \wedge \neg \mu_2(\beta_1) \wedge \cdots \Rightarrow ac_1\wedge ac_2\cdots \wedge \neg e_t$$
	where $e_t$ is one of the $ac_i$'s and  is the associated AC to the J facts computed by the mapping $\mu_{current}$ which is used to fire a mapping rule.
	
	%
	Thus,
	we get
	$$\beta_2 \wedge \neg \mu_1(\beta_1) \wedge \neg \mu_2(\beta_1) \wedge \cdots \Rightarrow \mu_{current}(\beta_1)\vee \neg e_{current}$$
	from which we get the implication in the inductive hypothesis.
	
	To finish the proof of this direction, we need to argue about the application of the query rule via mapping $\mu$, whose only difference with a mapping rule is that all the ACs in the $\mu(\beta_1)$ are coupled, hence we derive finally the containment entailment.

	{\bf  ``Only-if'' direction}:
	Assume the containment entailments holds, i.e., the following holds:
	$$\beta_2  \Rightarrow \mu_1(\beta_1) \vee \cdots \vee \mu_l(\beta_1)$$
	We will prove this direction by induction on the number of $\mu_i$s in a  containment entailment that is true and uses the minimal number of mappings.
	
	{\bf Inductive hypothesis }
	For $k\leq n$, there is a set of mappings among the ones in the containment entailment, i.e., let them be $\mu_{m+1}, \mu _{m+2}, \ldots, \mu_l$
	(where $k=l-m$) such that the following two happen:

	(i) The following is true:
	$$\beta_2  \Rightarrow \mu_1(\beta_1) \vee \cdots \vee \mu_m(\beta_1) \vee $$
	$$\mu _{m+1}(e^{\beta_1}_{m+1}) \vee \mu _{m+2}(e^{\beta_1}_{m+2}) \vee  \cdots   $$
	where $m<l$ and $e^{\beta_1}_{m+1}, e^{\beta_1}_{m+2} , \dots$ are ACs from $\beta_1$ such that  each
$\mu _{i}(e^{\beta_1}_{i})$ is not directly implied by $\beta_2$.
	
	(ii)  The  mappings $\mu_{m+1},
	\mu_{m+2}, \ldots$  are used
	to compute the facts $I(x_{m+1}), I(x_{m+2})\ldots$
	where $x_{m+1}$ (similarly for $x_{m+2}$, etc ) represents the variable
	in $\mu _{m+1}(e^{\beta_1}_{m+1})$.  More specifically when we compute the
	fact $I_{\theta c}(x_{m+1})$ then $\mu _{m+1}(e^{\beta_1}_{m+1})$ is the AC $X_{m+1}\theta c$.
	The same for $I(x_{m+2})$, etc.
	
	Proof of the inductive hypothesis. The base case is when $k=l-m=0$ and is a consequence of 
Proposition
\ref{trick-pro}.

To prove for $k=n+1$, we argue as follows. We begin with the implication in the inductive hypothesis, and use
	Proposition \ref{trick-pro1} . According to this proposition there is a mapping, let it be
	$\mu_m$ such that the following is true:
	$$\beta_2  \Rightarrow \mu_1(\beta_1) \vee \cdots \vee \mu_{m-1}(\beta_1) \vee $$
	$$\mu_m(e^{\beta_1}_{m}) \vee\mu _{m+1}(e^{\beta_1}_{m+1}) \vee \mu _{m+2}(e^{\beta_1}_{m+2}) \vee  \cdots   $$

	This proves part (i) of the inductive hypothesis. Now we need to prove the (ii) of the inductive hypothesis, i.e., that
	we can compute the fact asscociated with $\mu_m(e^{\beta_1}_{m})$.
	
	

	According to (b) in Proposition \ref{trick-pro1} and taking into account Lemma 	\ref{lemma-LSI-appendix-NP},
	we have that for each $ac_i$ in
	$\mu_m(\beta_1)$ the following is true: $$\beta_2\Rightarrow ac_i \vee e_j$$
	These (i.e., the corresponding variables of the constants in $ac_i$ and $e_j$) provide the instantiation for the firing of couplings rules that compute all the I facts necessary to fire a mapping rule by the instantiation provided by $\mu_m$.
	\end{proof}
	%
	%
	
	%
	%
	%

\section{Semi-monadic Datalog - Containment}
\label{sec:semi-monadic-datalog-cont}
Here, we prove a stronger result than the one we need to prove Theorem \ref{thm:datalog-np-complete}.
We define semi-monadic Datalog and prove that the problem of containment of a conjunctive query to a semi-monadic Datalog is in NP.

\subsection{Definition of Semi-monadic Datalog}

A {\em binding pattern} is a vector consisting of b (for bound) and f (for free) and its length is defined to be equal to the number of components in the vector, e.g., $bbfb$ is a binding pattern of length 4. An annotated predicate is a predicate atom together with a list of its variables and a binding pattern of length equal to the length of the list of the variables. E.g., $P^{bbf}_{X,Y,Z}(X, Y,X, Z)$ is an
annotated predicate; the list of the variables has been put as subscript in the name of the predicate. Notice that $P^{bbf}_{X,Y,Z}(X, Z,X, Y)$ is a different annotated predicate. However
$P^{bbf}_{X,W,Z}(X, W,X, Z)$ is the same as  $P^{bbf}_{X,Y,Z}(X, Y,X, Z)$ because names of variables do not matter.

Now, in order to prove that a Datalog query is semi-monadic, we do as follows: We start with the query predicate and annotate it with the binding pattern with all b's.
For each already annotated IDB predicate and for each rule with head this predicate, we unify the arguments in the annotated predicate with the head of the rule. Those variables that are bound in the pattern are also bound in the IDB predicates in the body of the rule, all other variables are free. Thus, for each IDB predicate in the body of the rule, we create a new annotated IDB predicate by providing the binding pattern for its variables.

The above procedure will stop because there is only a finite number of distinct annotated predicates. If each annotated predicate constructed has a binding pattern with only one f, then
we say that the Datalog query is {\em semi-monadic}.

Remark:
A computation of a fact  $F$ for a semi-monadic Datalog query will use in the derivation tree only IDB facts that have the property: All the variables in the fact have values that are one of the constants of the fact $F$, except one, which may have, in general, any value. To show this remark, consider, towards contradiction, that there is a fact in the derivation tree for which this is not true. Then, by considering the path, in the derivation tree, from its root to this fact, it is easy to show that this succession of rules would have created an annotated IDB predicate with
more than one f's.

In the case of Theorem \ref{thm:datalog-np-complete}, all IDB predicates have binding patterns $b\ldots bf$ of the same length.


\subsection{Containment test. }
\label{prf:np-complete-non-monadic}

\begin{theorem}
	\label{thm-semi-monadic-np}
	Consider a pair ($Q_1$, $Q_2$) where $Q_1$ is a semi-monadic Datalog query and $Q_2$ is a CQ.
	Then testing containment of $Q_2$ to $Q_1$ is in NP.
\end{theorem}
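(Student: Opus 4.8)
The plan is to reduce the problem to deciding whether the Datalog program $Q_1$ derives a single target fact on a fixed finite database, and then to extract a polynomial-size nondeterministic certificate for that derivation using the semi-monadic structure. First I would invoke the canonical-database characterization of CQ-in-Datalog containment: $Q_2 \sqsubseteq Q_1$ holds if and only if $Q_1$, evaluated under the fixpoint semantics on the canonical database $D$ of $Q_2$, computes the frozen head tuple $F$ of $Q_2$. Here $D$ has as its domain the (polynomially many) frozen variables of $Q_2$, and $F$ is the frozen head; the constants of $F$ are exactly the images of the head variables, a fixed set $C$ whose size is the arity of the query predicate.

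The key structural input is the remark established above for semi-monadic programs: in every derivation tree of the target fact $F$ over $D$, each IDB fact has all of its argument values in $C$ except at most one, which may be an arbitrary element of the domain of $D$. Consequently the set $\mathcal{G}$ of IDB facts that can appear in any derivation of $F$ over $D$ is polynomially bounded: for each IDB predicate there are at most $|C|^{\,r-1}$ ways to fill the bound positions from $C$ and at most $|D|$ choices for the single free position, so (for a fixed schema, hence bounded arity) $|\mathcal{G}|$ is polynomial in $|Q_1|+|Q_2|$. This is precisely where semi-monadicity is used; it is the analogue, for the one-free-coordinate case, of the unary-IDB bound that makes the monadic case tractable in \cite{Chaudhuri94-1}.

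Given the polynomial bound on $\mathcal{G}$, I would finish with the standard \emph{guess a well-founded closed set} argument. The NP machine guesses a subset $S\subseteq\mathcal{G}$ with $F\in S$, together with a linear order $f_1,\dots,f_m$ on $S$; and for each $f_j$ it guesses a rule $\rho$ of $Q_1$ together with a complete assignment of the variables of $\rho$ to the domain of $D$. Verification is polynomial: one checks that under this assignment every EDB atom of $\rho$ maps into $D$, every IDB atom of $\rho$ maps to some $f_i$ with $i<j$, and the head maps to $f_j$. Because the justifications respect the order $f_1,\dots,f_m$, a straightforward induction shows every fact of $S$ lies in the least fixpoint of $Q_1$ on $D$; hence $F$ is derivable and $Q_2\sqsubseteq Q_1$. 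Conversely, if $Q_2\sqsubseteq Q_1$ then $F$ is in the least fixpoint, and taking $S$ to be the IDB facts occurring in an actual derivation (a subset of $\mathcal{G}$), ordered by the round in which they are first produced, yields an accepting certificate; each certificate has polynomial size and is checkable in polynomial time, establishing membership in NP.

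The main obstacle I expect is the second paragraph: turning the qualitative remark into a genuine polynomial bound on $\mathcal{G}$, and in particular being careful that the ``one free coordinate'' really holds uniformly with respect to the target fact's constant set $C$ rather than accumulating along the derivation. The use of the semi-monadic annotation (every annotated IDB predicate carries a binding pattern with a single $f$) is what rules out the growth of free coordinates; the well-foundedness condition in the certificate (the linear order on $S$) is the other point requiring care, since it is what prevents vacuous circular ``derivations'' from being accepted.
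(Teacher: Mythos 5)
Your proposal is correct and follows essentially the same route as the paper's proof: reduce containment to evaluating $Q_1$ on the canonical database of $Q_2$, use the semi-monadic remark to bound the set of relevant IDB facts polynomially (the paper does this by arguing it may assume w.l.o.g. the program is monadic), and then guess and verify a polynomial-size, well-founded derivation certificate, where your linear order on $S$ plays exactly the role of the paper's topological order on its derivation DAG. The one point where the paper is more careful than you are is in showing that such a well-founded order actually exists for the facts of a derivation --- it proves that in a shortest derivation tree all occurrences of an identical fact lie at the same level, so collapsing them yields a DAG --- whereas your ``ordered by the round in which they are first produced'' is ambiguous (a first-production-round justification may use facts outside $S$); ordering the distinct facts of a derivation tree of $F$ by their minimum subtree height, and justifying each by the rule instance at a minimum-height occurrence, closes this small gap.
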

\begin{proof}
We claim that, during a computation of the Datalog query on an input, only annotated IDB predicates are populated with  facts (that are computed during the computation) that have in their pattern at most one f. 
Towards contradiction, suppose there is a computation where an annotated IDB predicate fact appears with more than one f on its pattern.  Take a path from the root of the derivation tree to this IDB fact. This path tells you that there is a sequence of rules, that, if taken, during the process of creating annotating IDBs, you will arrive in this annotated IDB. We need to prove that the length of this path is bounded.  This  is easy because the annotations are finite and if  the path is  long then  annotations will appear more than once.

After the above observation, we assume wlog that the Datalog query is monadic.

Consider a shortest derivation tree, $T_F$, of the fact $F$. A shortest derivation tree is one with the shortest height, where the height of a tree is defined to be the length of the longest path from the root to a leaf. We define the {\em level} of node $u$ in $T_F$ to be the height of the subtree that is rooted in $u$.

\begin{proposition}
	\label{prop:derivation-dag}
	Consider a Datalog query $P$ and a derivation tree $T_F$ for a fact $F$. Then, all identical facts in $T_F$ are at the same level.
\end{proposition}

\begin{proof} If not then consider the fact  residing on the node with the smallest level and replace all subtrees with the subtree rooted in this node. This does not increase the level of a node.
\end{proof}

After these observations, we continue with the proof of the present theorem.

	We will prove that the following decision problem is in NP:
	We consider the canonical database, D,  of the CQ $Q_2$ and we compute the Datalog query $Q_1$ on $D$ and derive the output $Q_1(D)$.
	Given a fact $F$, we ask the question whether $F$ is  in $Q_1(D)$.

	%
	
	Let us now consider that the following certificate is given:
	\begin{itemize}
		\item The unary IDB facts computed (which are polynomially many).
		\item The derivation DAG $G$ that computes the IDB facts (polynomial in size).
		
	\end{itemize}
	
	Following the Proposition~\ref{prop:derivation-dag}, we prove that the following construction results in a Directed Acyclic Graph (DAG). We consider a derivation tree $T_F$ for fact $F$. We collapse all subsets of identical facts into a single node (the edges of the tree are retained). We call this a {\em derivation DAG} of the fact $F$ and denote $G_F$.
	
	We have not proved yet that indeed $G_F$ is a DAG. Notice that each path on $G_F$ uses the same edges as the edges in $T_F$ and, thus, it corresponds to a path in $T_F$.
However, an edge in  $T_F$ joins two nodes in different levels. Hence a cycle in $G_F$, which corresponds to a path  in $T_F$  cannot exist because a path in $T_F$ joins nodes in descending levels. 
	
	The derivation DAG $G$ that is considered above is depicted in Figure~\ref{fig:derivationtreetheoremappendix}.
	Each node in the $G$ is either a EDB fact in the canonical database of $Q_2$
	or a tuple consisting of the following:
	\begin{itemize}
		\item the rule in $Q_1$ query that is fired to compute the IDB fact from the previously computed IDB facts and/or EDB facts,
		\item a topological order of the graph $G$, and
		\item the mapping from the variables of the rule to the constants in $D$ producing the IDB fact.
	\end{itemize}
	Each directed edge $(n_1,n_2)$ of $G$ describes that the fact $n_1$ is used to compute the fact $n_2$.
	
	\begin{figure}
		\centering
		\includegraphics[width=0.7\linewidth]{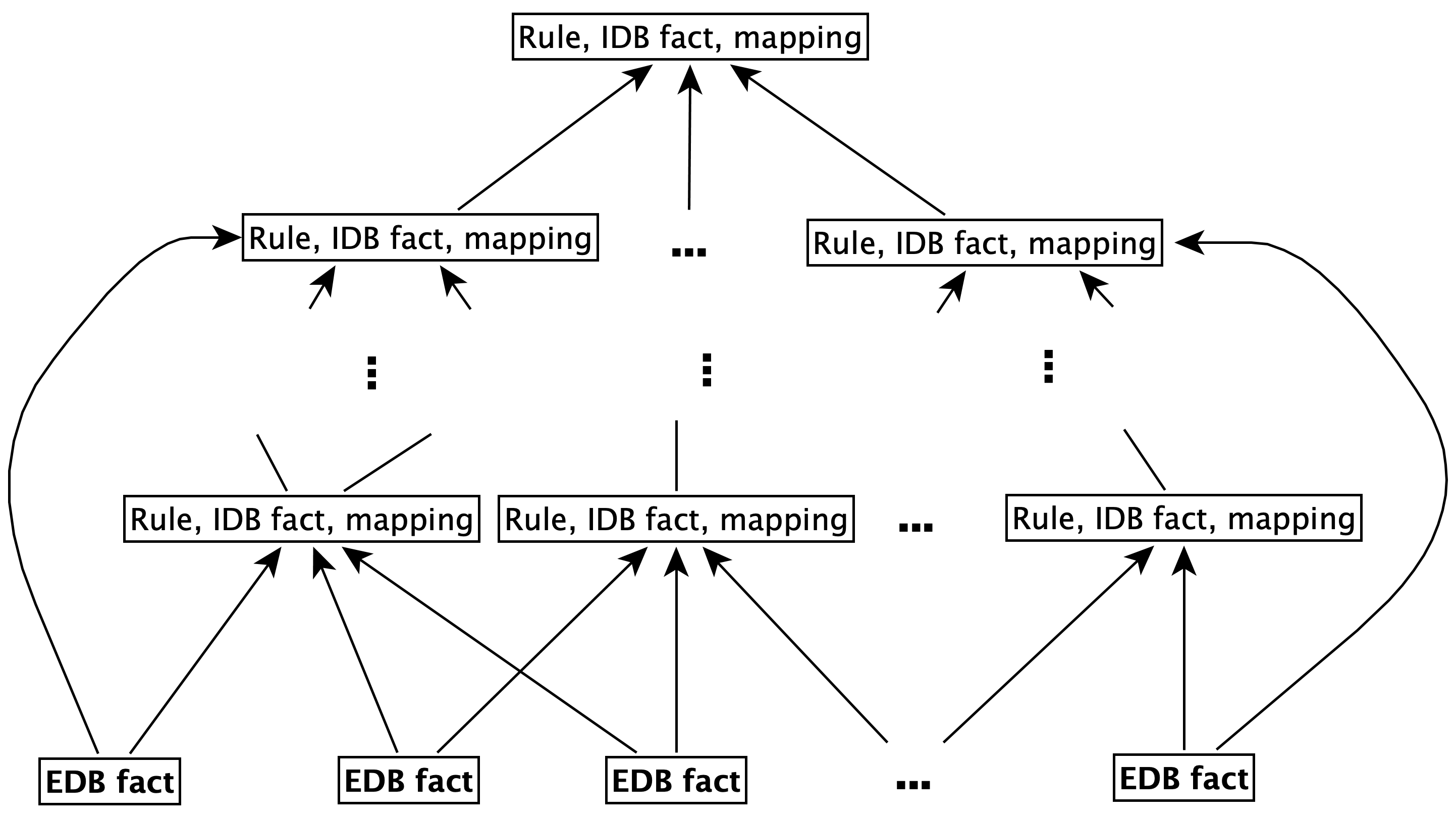}
		\caption{Derivation DAG}
		\label{fig:derivationtreetheoremappendix}
	\end{figure}
	
	Considering now such a certificate. To test it, we perform the following:
	\begin{enumerate}
		\item we check whether the given graph is a DAG following the topological order,
		\item for each non-EDB node, we apply the mapping on the rule, check whether the head of the rule equals the fact in the node and that all the facts used in the application of this rule are computed, which means they are on lower (in the topological ordering) nodes.
	\end{enumerate}
	
	It is easy to verify that the aforementioned tests can be validated in polynomial time, which proves that the problem is in $NP$.
	%
	%
	%
	%
	%
\end{proof}

\section{Constraints (tgds and egds) and the chase algorithm}
\label{pre:dependencies-chase} 

We define tuple-generating dependencies (tgds, for short) and  equality-generating dependencies (\textit{egds}, for short). Then, we describe the Chase algorithm, a significant tool for reasoning about dependencies.


\begin{definition}
	Let $\textbf{S}$ be a database schema. A {\em tuple-generating dependency} is  defined by a formula of the following form:
	$$d_t:\;\phi(\overline{X})\rightarrow\psi(\overline{X},\overline{Y}),$$
	where $\phi$ and $\psi$ are conjunctions of atoms with predicates in $\textbf{S}$, and $\overline{X}$, $\overline{Y}$ are vectors of variables. A {\em equality-generating dependencies} is  defined by a formula of the following form:
	$$d_e:\;\phi(\overline{X})\rightarrow(X_1=X_2),$$
	where $\phi$ is a conjunctions of atoms with predicates in $\textbf{S}$, and $\overline{X}$ is vector of variables and $X_1$, $X_2$ are included in $\overline{X}$.
	Considering a database instance $D$ of $\textbf{S}$, we say that $D$ {\em satisfies} $d_t$ if whenever there is a homomorphism $h$ from $\phi(\overline{X})$ to $D$, there exists an extension $h'$ of $h$ such that $h'$ is a homomorphism from $\phi(\overline{X}) \wedge \psi(\overline{X},\overline{Y})$ to $D$. In addition, we say that $D$ {\em satisfies} $d_e$ if for each homomorphism $h$ from $\phi(\overline{X})$ to $D$, we have that $h(X_1)=h(X_2)$.
\end{definition}

%
%
%

We define the \textit{chase step}, the building block of the chase algorithm:
\begin{definition}
	Let $\CS$ be a database schema  and $D$ be a database instance of $\CS$. Consider also the following dependencies:
	\begin{center}
		\begin{tabular}{ll}
			$d_t:$ & $\phi(\overline{X})\rightarrow\psi(\overline{X},\overline{Y})$,\\
			$d_e:$ & $\phi(\overline{X})\rightarrow(X_1=X_2)$, \\
		\end{tabular}
	\end{center}
	where $\phi$, $\psi$ are conjunction of atoms with predicate in $\CS$.
	Then, the {\em chase step} for the dependencies $d_t$ and $d_e$ is defined as follows.
	\begin{description}
		\item[(tgd $d_t$)] Let $h$ be a homomorphism from  $\phi(\overline{X})$ to $D$ such that there is no extension $h'$ of $h$ that maps $\phi(\overline{X})\wedge\psi(\overline{X},\overline{Y})$ to $D$. In such a case, we say that $d_t$ can be applied to $D$ and we define the database instance $D'=D\cup\CF_{\psi}$, where $\CF_{\phi}$ is the set of atoms of $\psi$ obtained by substituting each variable $x$ in $\overline{X}$ with $h(x)$ and each variable in $\overline{Y}$ (not mapped through $h$) with a fresh variable; i.e.,  $\CF_{\phi}=\{\psi(h(\overline{X}),\overline{Y})\}$. The fresh variables used to replace variables in $\overline{Y}$ are called {\em labeled nulls}. We say that the result of applying $d_t$ to $D$ with $h$ is $D'$ and write $D\xrightarrow{d_t,h}D'$ to denote the chase step on $D$ with the tgd $d_t$. 
		\item[(egd $d_e$)] Let $h$ be a homomorphism from  $\phi(\overline{X})$ to $D$ such that $h(X_1)\neq h(X_2)$. In such a case, we say that $d_e$ can be applied to $D$ and we define the database instance $D'$ as follows:  
		\begin{itemize}
			\item if there is a fact $e$ in $\{\phi(h(\overline{X}))\}\cap D$ such that $h(X_2)$, $h(X_1)$ are constants and $h(X_2)\neq h(X_1)$ then $D'=\bot$; otherwise 
			\item for each fact $e$ in $\{\phi(h(\overline{X}))\}\cap D$, we replace $h(X_2)$ with $h(X_1)$ and add it into $D'$.
		\end{itemize} 
		We say that the result of applying $d_e$ to $D$ with $h$ is $D'$ and write $D\xrightarrow{d_e,h}D'$ to denote the chase step on $D$ with the egd $d_e$. If $D'=\bot$, we say that the step {\em fails}.
	\end{description}
\end{definition}

Then, the chase algorithm is defined as follows.

\begin{definition}
	Let $\C$ be a set of tgds and egds and $D$ be a database instance. Then, we define the following.
	\begin{itemize}
		\item A {\em chase sequence} of $D$ with $\C$ is a sequence of chase steps $D_i\xrightarrow{d_i,h_i}D_{i+1}$, where $i=0,1,\dots$, $D_0=D$ and $d_i\in\C$.
		\item A {\em finite chase} of $D$ with $\C$ is a finite chase sequence $D_i\xrightarrow{d_i,h_i}D_{i+1}$, with $i=0,1,\dots, n$, $D_0=D$ and $d\in\C$, such that either $n$-th step fails, or there is no dependency $d\in\C$ and there is no homomorphism $h$ such that $d$ can be applied to $D_n$.
	\end{itemize}
\end{definition}

The following theorem states the property of chase that makes it useful:
\begin{theorem}
	\label{thm-chase-main22}
	Let  $\C$  be a set of tgds, and $D$ a database instance that satisfies the dependencies in $\C$. Suppose $K$ is a database instance, such that there exists a homomorphism $h$ from $K$ to $D$.
	Let $K_{\C}$ be the result of a successful finite chase on $K$ with the set of dependencies $\C$. Then the homomorphism $h$ can be extended to a homomoprhism $h'$ from $K_{\C}$ to $D$.
\end{theorem}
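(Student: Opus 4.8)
The plan is to induct on the length of the chase sequence. Write the successful finite chase as $K = K_0 \xrightarrow{d_0,g_0} K_1 \xrightarrow{d_1,g_1} \cdots \xrightarrow{d_{n-1},g_{n-1}} K_n = K_{\C}$, where $g_i$ is the homomorphism triggering the $i$-th step. I will build homomorphisms $h_i : K_i \to D$ with $h_0 = h$, maintaining the invariant that $h_i$ agrees with $h$ on every element of the original instance $K$ that still occurs in $K_i$; the desired map is then $h' = h_n$. All the work lies in passing from $h_i$ to $h_{i+1}$ across one chase step, and this splits according to whether $d_i$ is a tgd or (in the general version used in Section~\ref{sec-6}) an egd.

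For a tgd step with $d_i : \phi(\overline{X}) \to \psi(\overline{X},\overline{Y})$ fired through $g_i$, the composition $h_i \circ g_i$ is a homomorphism from $\phi(\overline{X})$ into $D$. Since $D$ satisfies $d_i$, this composition has an extension $f$ that is a homomorphism from $\phi(\overline{X}) \wedge \psi(\overline{X},\overline{Y})$ into $D$; in particular $f$ assigns a value in $D$ to every existential variable of $\overline{Y}$. I would then set $h_{i+1}$ equal to $h_i$ on all of $K_i$ and equal to $f$ on the fresh labeled nulls that the chase step introduced for the variables of $\overline{Y}$. Because $f$ restricts to $h_i \circ g_i$ on $\overline{X}$, the newly added atoms $\psi(g_i(\overline{X}),\dots)$ are sent by $h_{i+1}$ exactly to the atoms of $f(\psi(\overline{X},\overline{Y})) \subseteq D$, so $h_{i+1}$ is a homomorphism into $D$ extending $h_i$, and the invariant is preserved since the nulls are fresh.

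When egds are present, a step with $d_i : \phi(\overline{X}) \to (X_1 = X_2)$ fired through $g_i$ is handled dually: $h_i \circ g_i$ maps $\phi(\overline{X})$ into $D$, and since $D$ satisfies $d_i$ we obtain $h_i(g_i(X_1)) = h_i(g_i(X_2))$. As $K_{i+1}$ results from $K_i$ by replacing $g_i(X_2)$ with $g_i(X_1)$ everywhere, I define $h_{i+1}$ to agree with $h_i$ on the surviving elements; the equality just noted makes this well defined, so every atom of $K_{i+1}$ is still mapped into $D$ and the invariant again holds.

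The main delicate point is precisely this egd case, for two reasons. First, an egd step shrinks the domain by merging two elements, so $K_{\C}$ need not literally contain $K$; the statement ``$h'$ extends $h$'' must therefore be read as ``$h'$ coincides with $h$ on the elements of $K$ that survive the chase,'' which is exactly what the invariant records. Second, one must use that the chase is \emph{successful}: the failing sub-case of an egd step (equating two distinct constants) is excluded by hypothesis, and $D \models d_i$ guarantees that $h_i$ already collapses $g_i(X_1)$ and $g_i(X_2)$, so no constant is wrongly identified and no information carried by $h_i$ is lost. With the step-wise construction in hand, composing the $n$ steps yields the required homomorphism $h' : K_{\C} \to D$.
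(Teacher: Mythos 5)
Your proof is correct. Note, however, that the paper itself offers no proof of Theorem~\ref{thm-chase-main22}: it is stated in Appendix~D (included ``for reasons of completeness'') as the classical property of the chase known from the literature (cf.\ \cite{BeeriV84}), so there is nothing in the paper to compare against. Your induction on the chase sequence, lifting $h_i$ across each step by using the fact that $D$ satisfies the fired dependency, is exactly the canonical argument for this result, and the tgd case is handled cleanly: the extension $f$ supplied by $D\models d_i$ gives values for the fresh labeled nulls, and freshness guarantees $h_{i+1}$ is well defined and extends $h_i$. Your treatment of egds goes beyond the literal statement (which takes $\C$ to be a set of tgds only), but this is a sensible strengthening rather than a digression: the paper actually invokes the theorem in Section~\ref{sec-6} for constraint sets containing both tgds and egds, and your two observations there --- that ``extends'' must be read as ``agrees on the elements of $K$ surviving the merges,'' and that successfulness of the chase together with $D\models d_i$ ensures no identification of distinct constants and no loss of well-definedness --- are precisely the points a fully rigorous proof of the version the paper needs would have to make.
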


\subsection{Checking CQAC query containment under constraints}
\label{pre:dependencies-chase1}

We now define the query containment in the presence of constraints. Considering a database schema $\CS$, a set of constraints $\C$ over $\CS$ and two CQs $Q_1$, $Q_2$ over $\CS$, we say that $Q_2$ is contained in $Q_1$ under the constraints $\C$, denoted $Q_2\sqsubseteq_{\C} Q_1$, if for all databases $D$ that satisfy $\C$ we have that
$Q_2(D)\subseteq Q_1(D)$.

Considering a CQ $Q$ over a database schema $\CS$ and a set $\C$ of tgds and egds dependencies over $\CS$, we construct the canonical database $D$ of $Q$  and apply the chase algorithm. Let $D'$ be the database resulted by chase. We can now construct a CQ $Q_{\C}$ from $Q$ and $D'$, such that the head of $Q$ equals the head of $Q_{\C}$ and $Q_{\C}$'s body is constructed by de-freezing the constants back to their corresponding variables. If the chase of $Q$ with $\C$ terminates then $Q_{\C}$ is the called the \textit{chased query} of $Q$ with $\C$. In such a case, for all databases $D$ that satisfy the constraints $\C$, we have that $Q(D)\subseteq Q_{\C}(D)$ (i.e., $Q\sqsubseteq_{\C} Q_{\C}$) \cite{BeeriV84}.

Given a set $\C$ of constraints that are tgds and egds, when a database t-instance $D'$ results after applying the chase algorithm with $\C$ on database t-instance $D$, we say that $D'$ is $D$ {\em after chased with $\C$}. If $D$ is a canonical database of a CQAC query $Q$, then we say that  $D'$ is a $\C$-canonical database of 
$Q$. 

\begin{theorem}
	A CQAC query $Q_2$ is contained into a CQAC query $Q_1$ under constraints $\C$ (denoted $Q_2\sqsubseteq _\C Q_1$) if and only if, for each database $D$ belonging to the set of $\C$-canonical databases of $Q_2$ with respect to $Q_1$, the  query $Q_1$ computes all the tuples that $Q_2$ computes if applied on $D$.
\end{theorem}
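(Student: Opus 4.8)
The plan is to prove the two directions separately, paralleling the unconstrained canonical-database test for CQAC containment stated earlier and inserting chase steps where the constraints $\C$ intervene, while leaning on the chase-extension property of Theorem~\ref{thm-chase-main22}. The key observation driving both directions is that every $\C$-canonical database of $Q_2$ is, by construction, the result of a \emph{successful} finite chase of an ordinary canonical database with $\C$; since the chase is assumed to terminate and, upon success, leaves no applicable dependency, each such database itself satisfies $\C$.

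For the \emph{only-if} direction I would argue directly. Suppose $Q_2 \sqsubseteq_\C Q_1$, and let $D$ be any database in the set of $\C$-canonical databases of $Q_2$ with respect to $Q_1$. Realizing the total order of this t-instance by concrete values from the dense domain (and treating labeled nulls as distinct domain elements), $D$ is a concrete database satisfying $\C$. By the definition of containment under constraints, $Q_2(D) \subseteq Q_1(D)$, which is exactly the assertion that $Q_1$ computes every tuple that $Q_2$ computes on $D$.

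For the \emph{if} direction, assume $Q_1$ computes everything $Q_2$ does on each $\C$-canonical database, and take an arbitrary database $D^\ast$ satisfying $\C$ together with a tuple $t \in Q_2(D^\ast)$. The witnessing assignment is an order-homomorphism $g_0$ from $Q_2$ into $D^\ast$ that satisfies $\beta_2$ and produces $t$; this assignment induces a partition of the variables and constants of $Q_2$ (equal images placed in one block) and a total order consistent with its arithmetic comparisons. By the very way the set of canonical databases of $Q_2$ with respect to $Q_1$ is built (one database per admissible partition and total order), there is a canonical database $C$ through which $g_0$ factors, so that $g_0$ restricts to an order-homomorphism $C \to D^\ast$ sending the frozen head tuple $t_C$ to $t$. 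Chasing $C$ with $\C$ yields a $\C$-canonical database $C_\C$, and since $D^\ast \models \C$, Theorem~\ref{thm-chase-main22} extends the homomorphism $C \to D^\ast$ to a homomorphism $g : C_\C \to D^\ast$. By hypothesis $t_C \in Q_1(C_\C)$, witnessed by a containment mapping $\rho$ from $Q_1$ into $C_\C$ that satisfies $\beta_1$; composing, $g\circ\rho$ maps $Q_1$ into $D^\ast$, producing $g(t_C)=t$, so $t \in Q_1(D^\ast)$. As $D^\ast$ and $t$ were arbitrary, $Q_2 \sqsubseteq_\C Q_1$.

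The hard part will be verifying that the composed mapping $g\circ\rho$ still satisfies the arithmetic comparisons $\beta_1$ of $Q_1$ --- that is, upgrading the purely relational homomorphism delivered by Theorem~\ref{thm-chase-main22} to an \emph{order}-homomorphism in the sense of Section~\ref{sec-6}. I would handle this by noting that the ordering constraints present in $C_\C$ come from the total order already fixed on $C$, which $g_0$ respects; the fresh labeled nulls introduced by the tgd chase steps carry no ordering obligations of their own, so they may be placed into the dense order of $D^\ast$ consistently with their images under $g$. Since $\rho$ witnesses $\beta_1$ using only comparisons that hold in $C_\C$, and these are preserved along $g$, the comparisons transport to $D^\ast$. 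This is the step where the t-instance structure of $\C$-canonical databases and the order-homomorphism formulation are essential, and where I would spend the most care; the remaining bookkeeping (egd steps merely identify elements and so cannot break a satisfied comparison unless they equate two distinct constants, which a successful chase precludes) is routine.
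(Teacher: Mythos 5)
The paper itself states this theorem at the end of Appendix~D without any proof (the appendix is included ``for reasons of completeness''), so there is no official argument to compare yours against; I am judging the proposal on its own terms. Your \emph{only-if} direction is fine: a successfully chased instance satisfies $\C$ because no dependency is applicable at termination, its total order can be realized by fresh values from the dense domain, and containment under $\C$ then yields $Q_2(D)\subseteq Q_1(D)$ verbatim. Two smaller omissions in the \emph{if} direction are patchable but should be stated: you never argue that the chase of $C$ \emph{succeeds} (without this, $C_\C$ is not in the set of $\C$-canonical databases and the hypothesis says nothing about it), and you never establish $t_C\in Q_2(C_\C)$ before invoking the hypothesis; both need the standard induction that maintains an order-homomorphism into $D^\ast$ along every chase step. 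Note also that your parenthetical justification for the egd steps is off: by the paper's definition, a ``successful'' chase only forbids equating two distinct \emph{constants}, so success by itself does not preclude an egd from merging two strictly ordered labeled nulls --- it is the maintained order-homomorphism into $D^\ast$ that precludes it.

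The genuine gap is exactly where you flagged ``the hard part'': transporting $\beta_1$ along the extension homomorphism $g$. Theorem~\ref{thm-chase-main22} delivers only a relational homomorphism, and $g$ need not be injective on the labeled nulls created by tgd steps; when $g$ identifies two of them, \emph{no} placement of these nulls ``into the dense order of $D^\ast$ consistently with their images'' exists, yet a witness $\rho$ for $t_C\in Q_1(C_\C)$ is entitled to use a strict comparison or a disequation between them, since in the t-instance $C_\C$ they are distinct, totally ordered elements. Concretely, let $\C=\{\,p(X)\rightarrow \exists Y\,\exists Z\; r(Y,Z)\,\}$, $Q_2()\,\symif\, p(X)$, and $Q_1()\,\symif\, r(U,V),\, U\neq V$. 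The only canonical database of $Q_2$ is $\{p(x)\}$, the chase yields $\{p(x),\,r(n_1,n_2)\}$, and on every t-instance version of it $Q_1$ computes the empty tuple via $U\mapsto n_1$, $V\mapsto n_2$ (distinct elements), so your hypothesis holds; but $D^\ast=\{p(1),\,r(5,5)\}$ satisfies $\C$, $Q_2(D^\ast)=\{()\}$ and $Q_1(D^\ast)=\emptyset$, and indeed $g(n_1)=g(n_2)=5$ kills the composite $g\circ\rho$. (Replacing $U\neq V$ by $U<V$ and the tgd head by $r(Y,Z)\wedge r(Z,Y)$ gives the same failure for strict ACs.) So under the reading of ``$\C$-canonical database'' that your proof uses in both directions --- a totally ordered instance with pairwise distinct elements --- the if direction is not merely unproven, it is false, and no local repair of the transport step can save it. The argument does go through if $\beta_1$ has only closed ACs (order the fresh nulls by their $g$-images, breaking ties arbitrarily; closed comparisons survive collapse), or if one restricts the witness $\rho$ to comparisons \emph{implied by the order inherited from $C$} --- but the latter weakening then breaks your only-if direction, where a witness guaranteed on every total completion of the nulls need not exist over the inherited partial order. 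Any complete proof has to confront this mismatch between the two directions, which your proposal does not.
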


\section{Certain answers, MCRs and unclean data}
\label{sec-app-unclean}


Under the OWA, if $\I\not\subseteq\V(D)$ for all databases $D$, and $R$ is
an MCR of a query $Q$ using views with respect to a query language $\L$, then
there are cases where $\text{certain}(Q,\I)=\emptyset$ and
$R(\I)\neq\emptyset$. We illustrate on an example.

\begin{example}\label{example_technicality_cqs}

	Consider the case where the query is $Q(x,y)~\hbox{\rm :-}~\ a(x,y)$, we have
	only one view $v(x,x,y)~\hbox{\rm :-}~\ a(x,y)$, and the view instance is
	$\I=\{v(1,2,3), v(4,4,5)\}$. Since $v(1,2,3)\in\I$ and
	$v(1,2,3)\not\in\V(D)$ for any database $D$, we have that
	$\I\not\subseteq\V(D)$ for all databases $D$.
	
	There is only one rewriting $R(x,y)~\hbox{\rm :-}~v(x,x,y)$.  $R(\I)=\{(4,5)\}$, hence:
	\[\text{certain}(Q,\I)=\bigcap_{D\ s.t.\ \I\subseteq\V(D)}Q(D)=\emptyset \text{ \ because } \not\exists D \text{ such that \ } \I\subseteq\V(D).\]
\end{example}

We define, however, new semantics that can be useful in data cleaning, in the following way: We do not need to change (clean) the data and still get correct answers. Observe is our example, that the answer we got by applying the MCR to our data, $\I$, is correct in the following sense: The instance $\I$ in the example contained one tuple that could not have been produced by applying the view on any instance. Hence we can assume that this tuple is incorrect and define a new $\I '$ as follows:

\begin{itemize}
	\item  $\I '$  is maximal with respect to the property below.
	\item   $\I '$ is a subset of $\I$ and there is a $D$ such that $\I'\subseteq\V(D)$.
	
\end{itemize}

It is observed that, for CQ queries and views,  $\I'$ is unique and   we can produce $\I'$ by removing in any order facts from $\I$ until the property $\I'\subseteq\V(D)$ is satisfied.
Formally, we have:

\begin{definition}
	We define the {\em maximal consistent} view instance of $\I$ to be instance $\I$ with the property: it is the maximal subset of $I$ such that there is a database $D$ such that $I'\subseteq \V(D)$.
\end{definition}

\begin{definition}
	
	We  define the {\em correct certain answers } of $\I$ to be the certain answers of $\I'$ which is the maximal consistent view instance of $\I$.
\end{definition}
Our  conjecture is the following:

	{\bf Conjecture:}
For CQ queries and views, an MCR of the query using the views produces all correct certain answers of any view instance $\I$.

%

\end{document}